\documentclass[11pt]{article}
\usepackage[english]{babel}
\usepackage{verbatim}
\usepackage{amsmath, amsthm, amssymb, graphics, graphicx}
\usepackage{enumitem}

\newcommand{\mathsym}[1]{{}}

\newtheorem{theorem}{Theorem}[section]
\newtheorem{lemma}[theorem]{Lemma}


\theoremstyle{remark}
\newtheorem{claim}{Claim}

\title{\bf\Large{{Connected greedy colouring in claw-free graphs}}} 

\author{Ngoc Khang Le and Nicolas Trotignon\thanks{CNRS, LIP, ENS de Lyon. Partially supported by the LABEX MILYON
    (ANR-10-LABX-0070) of Universit\'e de Lyon, within the program
    ‘‘Investissements d'Avenir’’ (ANR-11-IDEX-0007) operated by the
    French National Research Agency (ANR).  Also Universit\'e Lyon~1,
    Universit\'e  de Lyon. E-mail:
  nicolas.trotignon@ens-lyon.fr}}

\newcommand{\epc}{This proves Claim~\theclaim.}

\begin{document}
\maketitle

\begin{abstract}
  An ordering of the vertices of a graph is \emph{connected} if every vertex
  (but the first) has a neighbor among its predecessors.  The greedy
  colouring algorithm of a graph with a connected order consists in
  taking the vertices in order, and assigning to each vertex the
  smallest available colour. A graph is \emph{good} if the greedy algorithm on every connected order gives every connected induced subgraph of it an optimal colouring. We give the characterization of good claw-free graphs in terms of minimal forbidden induced subgraphs.
\end{abstract}

\section{Introduction}

A \emph{$k$-colouring} for a graph $G$ is any function $\pi$ from
$V(G)$ to $\{1, \dots, k\}$ such that for any edge $uv \in E(G)$,
$\pi(u) \neq \pi(v)$.  The smallest integer $k$ such that $G$ admits a
$k$-colouring is called the \emph{chromatic number} of $G$ and is
denoted by $\chi(G)$.  A $\chi(G)$-colouring of $G$ is called an
\emph{optimal colouring of~$G$}.  Computing the chromatic number is
known to be difficult.

 Let $G$ be a graph and ${\cal O}= [v_1, \dots, v_n]$ be a linear
 ordering of its vertices.  The \emph{greedy colouring algorithm}
 (greedy algorithm for short) applied to $(G, {\cal O})$ consists in taking
 the vertices in the order $\cal O$, and giving to each vertex a colour
 equal to the smallest positive integer not used by its neighbours
 already coloured.  This obviously produces a colouring.

 For every graph, there exists an order $\cal O$ for the vertices such
 that the greedy algorithm produces an optimal colouring.  To see
 this, consider an optimal colouring $\pi$, and consider the following
 ordering: first take vertices with colour 1, then vertices with colour
 2, and so on.  But this method has no practical interest to compute
 optimal colourings, since to find the ordering, an optimal colouring has
 to be known.  

 It is also well known that for some graphs, there exist orderings that
 produce colourings very far from the optimal, for instance consider
 two disjoint sets on $n$ vertices, say $A=\{a_1, \dots, a_n\}$ and
 $B= \{b_1, \dots, b_n\}$.  Add all possible edges between $A$ and
 $B$, except edges $a_ib_i$, $i\in \{1, \dots, n\}$.  This produces a
 bipartite graph $G$.  However, the greedy algorithm applied to the
 order $[a_1, b_1, a_2, b_2, \dots, a_n, b_n]$ produces a colouring
 with $n$ colours.

One might wonder for which graphs the greedy algorithm always gives an optimal solution no matter what order is given. The operation
 \emph{Disjoint-Union} consists in building a new graph by taking the
 union of two vertex-disjoint graphs.  The operation
 \emph{Complete-Join} consists in building a new graph by taking the
 union of two vertex-disjoint graphs $G_1$ and $G_2$, and by adding
 all possible edges between $V(G_1)$ and $V(G_2)$.
Let $P_k$ denote the path on $k$ vertices.  When $H$ and $G$ are
 graphs, we say that $G$ is \emph{$H$-free} if $G$ does not contain an
 induced subgraph isomorphic to $H$.  A \emph{cograph} is a $P_4$-free
 graph.  Seinsche~\cite{S74} proved that cographs are exactly
 the graphs that can be produced by starting with graphs on one vertex
 and by repeatedly apply the operations Disjoint-Union and
 Complete-Join to previously constructed graphs.
The graphs such that the greedy algorithm on every order gives every induced subgraph of them an optimal colouring are fully characterized. 

\begin{theorem}[see \cite{W90,CS79}]
  \label{th:Ccograph}
  For every graph $G$, the following properties are equivalent.
  \begin{itemize}
  \item $G$ is a cograph.
  \item For every induced subgraph $H$ of $G$ and every linear order
    $\cal O$ of $V(H)$, the greedy colouring algorithms applied to
    $(H, {\cal O})$ produces an optimal colouring of $H$.
  \end{itemize}
\end{theorem}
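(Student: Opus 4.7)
The plan is to prove the two implications separately, using Seinsche's structural characterization recalled above: every cograph is either a single vertex or the Disjoint-Union or Complete-Join of two smaller cographs, and every non-cograph contains an induced $P_4$.

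For the direction ``greedy is always optimal implies $G$ is a cograph'' I would argue contrapositively. If $G$ contains an induced $P_4$ with vertices $a,b,c,d$ and edges $ab,bc,cd$, then on this induced $P_4$ the linear order $[a,d,b,c]$ forces greedy to assign colour $1$ to $a$, colour $1$ to $d$ (no processed neighbour), colour $2$ to $b$ (blocked by $a$), and colour $3$ to $c$ (blocked by $b$ and $d$), producing a $3$-colouring of a bipartite induced subgraph.

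For the converse, I would use induction on $|V(G)|$, noting that induced subgraphs of cographs are cographs, so it suffices to show that for any cograph $G$ and any order $\mathcal{O}$ of $V(G)$, greedy on $(G,\mathcal{O})$ uses $\chi(G)$ colours. If $|V(G)|\geq 2$, Seinsche's theorem decomposes $G$ as a Disjoint-Union or a Complete-Join of smaller cographs. In the Disjoint-Union case $G=G_1\cup G_2$ the greedy run acts independently on each side, so by induction uses $\chi(G_i)$ colours per side, for a total of $\max(\chi(G_1),\chi(G_2))=\chi(G)$.

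The main difficulty is the Complete-Join case $G=G_1+G_2$, where $\chi(G)=\chi(G_1)+\chi(G_2)$. Let $c$ be the greedy colouring of $(G,\mathcal{O})$ and $S_i=c(V(G_i))$; complete adjacency between $V(G_1)$ and $V(G_2)$ forces $S_1\cap S_2=\emptyset$, so the number of colours used equals $|S_1|+|S_2|$, and trivially $|S_i|\geq\chi(G_i)$ since $c|_{V(G_i)}$ is a proper colouring of $G_i$. The crux is to prove the matching upper bound $|S_i|\leq\chi(G_i)$. I would strengthen the inductive claim to assert that $c|_{V(G_i)}$ equals the standalone greedy colouring of $(G_i,\mathcal{O}_i)$ composed with an order-preserving injection into $S_i$---that is, the colours on each side are a relabelling of that side's own greedy colouring, shifted upward to leave room for the colours borrowed by the other side. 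Maintaining this invariant across the inductive step (by peeling off the last vertex $v^\ast$ of $\mathcal{O}$, say $v^\ast\in V(G_1)$, and separately analysing whether the standalone greedy on $(G_1,\mathcal{O}_1)$ introduces a new colour at $v^\ast$ or re-uses an existing one available in the order-preserving image of $S_1$) is the main technical obstacle, after which the equality $|S_i|=\chi(G_i)$ and the conclusion follow.
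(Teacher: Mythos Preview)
The paper does not prove this theorem; it is quoted with a citation to \cite{W90,CS79} and used only as background for the introduction. There is consequently nothing in the paper to compare your argument against.

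On its own merits your proof is correct. The $P_4$ direction is standard and fine as written. In the Complete-Join case the strengthened invariant you propose --- that $c|_{V(G_i)}$ is the standalone greedy run on $(G_i,\mathcal{O}_i)$ composed with the unique order-preserving bijection onto $S_i$ --- is true and can indeed be maintained by peeling off the last vertex. The point is that at every stage the colours already used form an initial segment $\{1,\dots,m_t\}=S_1^t\sqcup S_2^t$, so a new vertex $v\in V(G_1)$ sees all of $S_2^t$ blocked and therefore receives the $j$-th smallest element of $\mathbb{N}\setminus S_2^t$ exactly when standalone greedy on $(G_1,\mathcal{O}_1)$ would give it colour~$j$; the two subcases you name (reuse versus new colour) then verify that the bijection updates correctly. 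Note that this lemma about complete joins holds for arbitrary $G_1,G_2$, so it sits cleanly outside the main induction on cographs.

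A small simplification, should you want one: you do not need to tie $c|_{V(G_i)}$ to the particular order $\mathcal{O}_i$. It is enough to check that the rank map $c_i(v)=|\{k\in S_i:k\le c(v)\}|$ is a \emph{Grundy} colouring of $G_i$, i.e.\ that every vertex has a $G_i$-neighbour of each smaller colour (if $j<c_i(v)$ and $k\in S_i$ has rank $j$, then $k<c(v)$, so $v$ has a $G$-neighbour of colour $k$, which must lie in $G_i$ since $S_1\cap S_2=\emptyset$). Hence $|S_i|\le\Gamma(G_i)$, and by the outer induction $\Gamma(G_i)=\chi(G_i)$. This avoids the vertex-by-vertex bookkeeping altogether.
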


There are many ways to order the vertices of a graph with the hope to obtain a better colouring. In this paper, we focus on \emph{connected} orders. An order ${\cal O} = [v_1, \dots, v_n]$ for a graph $G$ is
\emph{connected} if for every $2 \leq i \leq n$, there exists $j<i$
such that $v_jv_i \in E(G)$.  A connected order exists if and only if
$G$ is connected, and is efficiently produced by search algorithms
such as BFS, DFS (or more simply by the algorithm \emph{generic
  search}).  
We say that a graph $G$ is \emph{good} if for every connected induced subgraph $H$ of $G$ and every connected order $\cal O$ of $H$, the greedy algorithm produces an optimal colouring of $H$. Also, a connected order $\cal O$ of a graph $G$ is \emph{good} if it produces an optimal colouring of $G$. A graph or a connected order is \emph{bad} if it is not good. A graph is \emph{minimally bad} if it is bad and all other connected induced subgraphs of it are good. Connected orders are better than general orders for
colouring bipartite graphs. 

\begin{theorem}[see \cite{BCD14}]
  \label{th:bip}
  Every bipartite graph is good. 
\end{theorem}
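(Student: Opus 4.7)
The plan is to prove the contrapositive by analyzing the first step at which greedy could go wrong. Fix a bipartite graph $G$, a connected induced subgraph $H$ of $G$, and a connected order $\mathcal{O} = [v_1, \dots, v_n]$ of $H$. Since $H$ is bipartite, $\chi(H) \le 2$, so it suffices to show that the greedy algorithm uses at most $2$ colours on $(H, \mathcal{O})$.

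Suppose for contradiction that some vertex receives colour $\ge 3$, and let $v_i$ be the first such vertex. By the choice of $i$, all of $v_1, \dots, v_{i-1}$ have colour in $\{1, 2\}$, and since $v_i$ is assigned the smallest available colour, $v_i$ must have a predecessor $v_a$ of colour $1$ and a predecessor $v_b$ of colour $2$. The first key observation I would record is that, because $\mathcal{O}$ is a connected order, every initial segment $\{v_1, \dots, v_j\}$ induces a connected subgraph of $H$ (an immediate induction, since each $v_{j+1}$ is attached to an earlier vertex). In particular, $H_{i-1} := H[\{v_1, \dots, v_{i-1}\}]$ is a connected bipartite graph containing at least one edge (e.g.\ $v_a v_b$ is forced to be absent, but the edge $v_1 v_2$ is present).

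Now the greedy partial colouring restricted to $H_{i-1}$ is a proper $2$-colouring of $H_{i-1}$, so it realises the (unique) bipartition of this connected bipartite graph: the colour classes $C_1$ and $C_2$ are exactly the two sides of the bipartition of $H_{i-1}$. Since $H_{i-1}$ is an induced subgraph of the bipartite graph $H$ with bipartition $(A, B)$, we have (after relabelling) $C_1 \subseteq A$ and $C_2 \subseteq B$. Then $v_a \in A$ forces $v_i \in B$, while $v_b \in B$ forces $v_i \in A$, a contradiction. Hence no such $v_i$ exists, and greedy uses at most two colours, which is optimal.

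The argument is short and the only nontrivial step is recognising that initial segments of a connected order induce connected subgraphs, so that the unique‑bipartition argument can be applied at the moment of failure; there is no real obstacle beyond this observation, and the statement follows.
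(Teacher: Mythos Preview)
The paper does not give its own proof of this theorem; it is simply stated with a citation to~\cite{BCD14}. Your argument is correct and is essentially the standard one: the only real ingredient is that every initial segment of a connected order induces a connected subgraph, so the partial greedy $\{1,2\}$-colouring of $H_{i-1}$ must coincide (up to swapping) with the unique bipartition of $H_{i-1}$, and hence of $H$ restricted to those vertices; then $v_i$ having earlier neighbours of both colours contradicts bipartiteness. There are no gaps.
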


However, unlike general orders, it is not true that for every graph, there exists a connected
order that provides an optimal colouring,
see~\cite{BT94} for example.  A similar
claw-free example is given here: 

\begin{figure}[h]
\centering
\includegraphics[width=6cm]{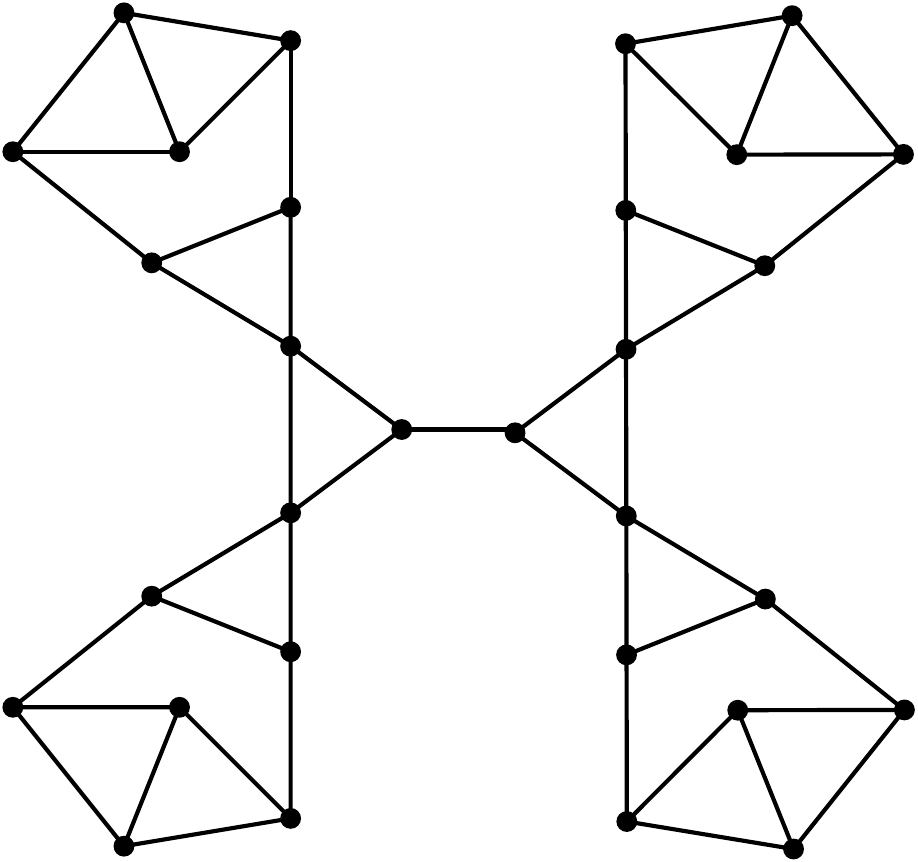}
\caption{A claw-free graph where every connected order is bad.}
\label{F:3}
\end{figure}

The connected greedy colouring has recently been studied. In \cite{BCD14}, they define $\Gamma_c(G)$ as the maximum number $k$ such that there exists a connected order producing a $k$-colouring of $G$. They also proved that checking if $\Gamma_c(G)\geq k$ is NP-hard if $k$ is a part of the input. In \cite{BFKS15}, they show that this problem remains NP-hard even when $k=7$. A graph $G$ is good in our definition if for every connected induced subgraph $H$ of $G$, $\Gamma_c(H)=\chi(H)$. Note that their results imply also that checking if there exists a bad connected order for a graph is NP-hard, but do not imply NP-hardness on recognizing good graphs (since a class of good graphs is hereditary by our definition). The complexity of recognizing good graphs remains open. In \cite{BT94}, they gave several examples of small graphs that are not friendly with connected orders. They also proved that gem (see Figure \ref{F:2}) is the unique smallest bad graph. In \cite{HW89}, they defined a more restricted good graph with respect to connected orders and gave the complete characterization of this class. Therefore, their class is also good by our definition. 

However, the list of excluded induced subgraphs for the class of good graphs is still unknown.  Equivently, no description of minimally bad graphs is known. Our goal is to prove an analogue of Theorem~\ref{th:Ccograph} for connected orders. 
If we restrict our attention to claw-free graphs, we are able to give this description (where the \emph{claw} is the graph on $\{a, b, c, d\}$ with edges $ab$, $ac$ and $ad$). This is our main result that we now state precisely.  The rest of the paper is devoted
to its proof.

\subsection*{The main result}

Let $G(V,E)$ be a graph. For $v\in V(G)$, let $N(v)$ denote the set of vertices in $G$ that are adjacent to $v$. For $S\subseteq V(G)$, we denote by $G[S]$ the subgraph of $G$ induced by $S$. A subset $K\subseteq V(G)$ is a  \emph{clique} in $G$ if all vertices in $K$ are pairwise adjacent. Let $A,B\subseteq V(G)$, we say that $A$ is \emph{complete} to $B$ if for every $x\in A$ and $y\in B$, $xy\in E(G)$. If $A=\{x\}$, we also say that $x$ is complete to $B$ instead of saying $\{x\}$ is complete to $B$.

A \emph{cycle} in $G$ is a sequence of distinct
vertices $v_1\dots v_k$ such that $v_iv_{i+1}\in E(G)$ for
$i\in \{1,\ldots,k\}$ (the index is taken modulo $k$).  The edges
$v_iv_{i+1}$ are the edges of the cycles, the other edges between the
vertices of the cycle are called its \emph{chords}.  The length of a
cycle is the number of its edges (here $k$).  A \emph{hole} is a cycle
of length at least~4 that has no chord. A \emph{path} in $G$ is a
sequence $P = v_1\ldots v_k$ of distinct vertices of $G$ such that
$v_iv_j\in E(G)$ if and only if $|i-j|=1$ (paths are often refered to as induced
path or chordless paths).  Vertices $v_1$ and $v_k$ are the
\textit{ends} of $P$ and the rest of the vertices are
\emph{internal}. The length of a path is the number of its edges.  A
hole (cycle, path) is \textit{even} or \textit{odd} according to the
parity of its length. When $P = v_1 \dots v_k$ is a path, and
$1\leq i \leq j \leq k$, the path $v_i \dots v_j$ is called the
subpath of $P$ from $v_i$ to $v_j$ and denoted by $v_iPv_j$. A path in
a graph $G$ is \textit{flat} if all its internal vertices are of
degree $2$ (in $G$). A \emph{triangle} is a graph on three vertices and they are all adjacent.

A graph $H$ is a \textit{prism} if:
\begin{itemize}
	\item $V(H)=V(P_1)\cup V(P_2)\cup V(P_3)$.
	\item For $i\in \{1,2,3\}$, $P_i$ is a path of length $\geq 1$ with two ends $a_i$ and $b_i$.
	\item $P_1$, $P_2$, $P_3$ are vertex-disjoint.
	\item $\{a_1,a_2,a_3\}$ and $\{b_1,b_2,b_3\}$ are two triangles.
	\item These are the only edges in $H$.
\end{itemize}

A prism is \emph{short} if one of its three paths is of length
$1$. A prism is \textit{parity} if its three paths have the same
parity and is \textit{imparity} otherwise.  Note that a prism contains
an odd hole if and only if it is imparity.  A parity prism is
\textit{even} (\textit{odd}) if the lengths of its three paths are
even (odd). 

We also need several particular graphs, defined in Figure~\ref{F:2}. 

\begin{figure}[h]
rs\centering
\includegraphics[width=11cm]{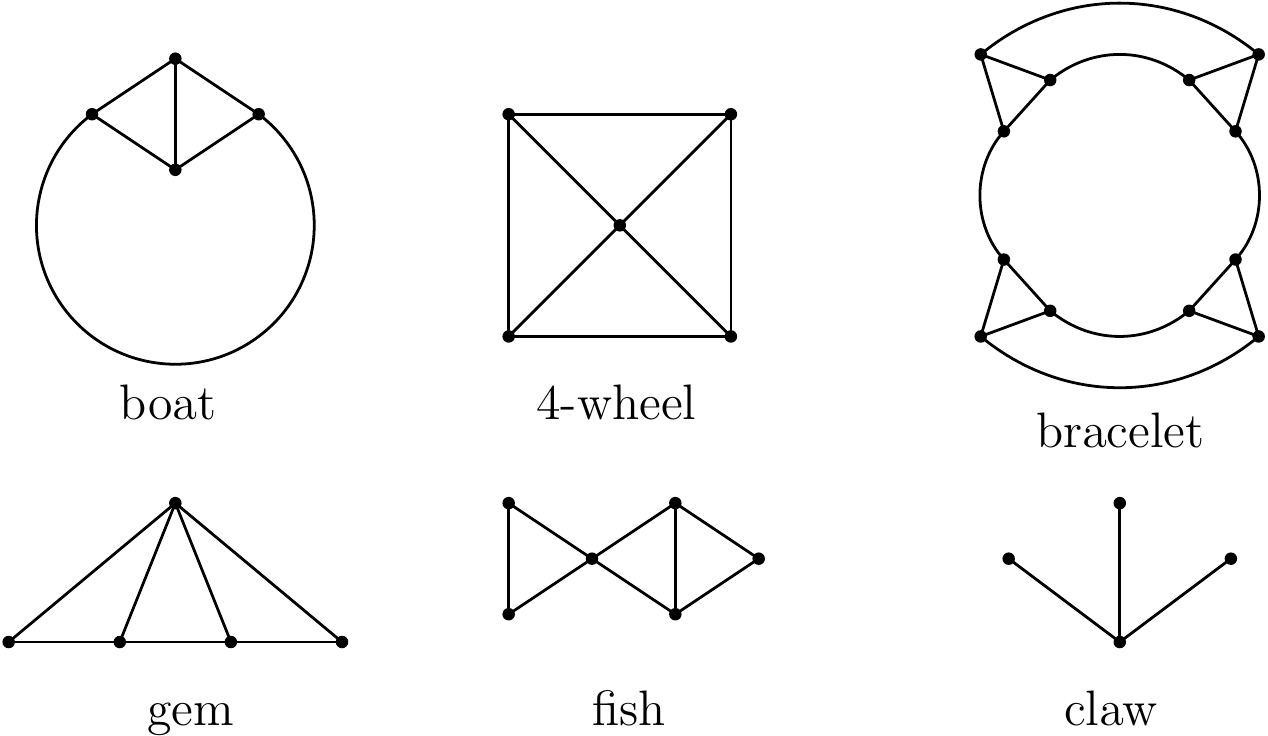}
\caption{Some graphs}
\label{F:2}
\end{figure}

We call \emph{obstructions} the graphs represented in Figure~\ref{F:1}
with the following additional specifications:

\begin{itemize}
\item The orientation represented for each graph has no special
  meaning. It is an indication of how a bad connected order can be
  found for it. The orientation does not fully specify this order.  The
  arrow should be seen from a small to a big vertex with respect to this order. The chromatic
  number of each graph is $3$ and the last vertex in every bad order
  receives colour $4$.
\item All the straight lines are edges, all the curved lines are paths
  of length $\geq 1$.
\item The hole in $F_1$ is odd.
\item The only path in $F_2$ is of length $\geq 1$. The orientation of the only unoriented edge depends on the parity of this path. $F_2$ is a gem when the
  length of this path is $1$.
\item The only path in $F_3$ is of length $\geq 1$.
\item The hole in $F_5$ is even.
\item All paths in $F_7$, $F_8$, $F_9$, $F_{10}$ are of length $\geq 2$.
\item $F_7$ is an imparity prism. The lower path is of different
  parity from the other two paths.
\item The prism in $F_8$ is an even prism. The upper path of the prism
  contains two flat paths: the first one is odd, the second is even.
\item The prisms in $F_9$ and $F_{10}$ are odd prisms.
\item The upper path of the prism in $F_9$ contains two odd flat paths.
\item The upper and lower paths of the prism in $F_{10}$ contain four even flat paths.
\item The length of the only long cycle in $F_{11}$ is odd $\geq
  3$. If its length is $3$, then $F_{11}$ is a fish.
\item The length of two flat paths in $F_{12}$ is odd $\geq 3$.
\end{itemize}

\begin{figure}[h]
\centering
\includegraphics[width=11cm]{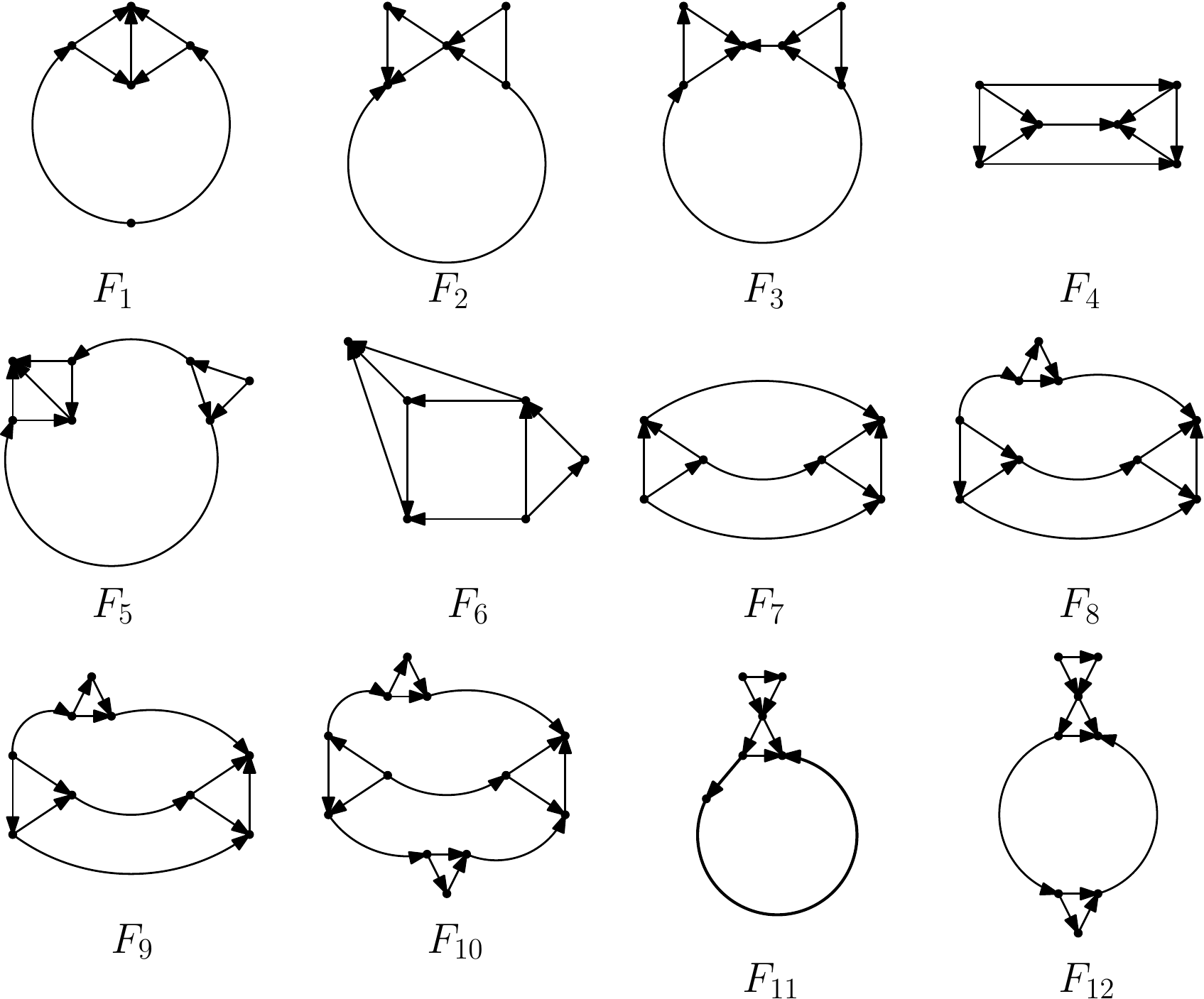}
\caption{List of obstructions}
\label{F:1}
\end{figure}

Our main result is the following. 

\begin{theorem} \label{T1} Let $G$ be a claw-free graph. Then $G$ is
  good if and only if $G$ does not contain any obstruction as an induced subgraph.
  Equivalently, a claw-free graph is minimally bad if and only if it is an
  obstruction.
\end{theorem}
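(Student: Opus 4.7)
The theorem is a forbidden-subgraph characterisation, so the proof splits into an easy direction (each obstruction is bad, hence any graph containing one is bad) and a hard direction (every minimally bad claw-free graph is itself an obstruction). For the easy direction I would handle each of $F_1,\dots,F_{12}$ in turn. The orientation drawn in Figure~\ref{F:1} serves as a template for a bad order: any linear extension of the depicted partial order that is also a connected order will do, and one verifies by direct simulation of the greedy algorithm that the terminal vertex receives colour~$4$ while $\chi(F_i)=3$. The parity conditions decorating the obstructions are exactly what is needed so that colours propagate along the flat subpaths as expected and collide on the terminal vertex. To establish the ``minimally bad'' half of the statement, I would additionally check that deleting any single vertex of $F_i$ either drops the chromatic number or destroys the key cycle/prism structure, yielding a good graph (this is where Theorem~\ref{th:bip} and Theorem~\ref{th:Ccograph} will be useful).

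For the hard direction, let $G$ be a minimally bad claw-free graph and let $\mathcal{O}=[v_1,\dots,v_n]$ be a bad connected order of $G$. By minimality, $G-v_n$ is good, so the greedy algorithm on the prefix $[v_1,\dots,v_{n-1}]$ produces an optimal colouring of $G-v_n$; badness of $\mathcal{O}$ then means that $v_n$ receives a colour strictly larger than $\chi(G)$, so every colour $1,\dots,\chi(G)$ appears on some earlier neighbour of $v_n$. Theorem~\ref{th:bip} rules out $\chi(G)=2$, so $\chi(G)\geq 3$.

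The central structural lever is the claw-free inequality $\alpha(N(v))\leq 2$. Applied at $v_n$, this bounds $\chi(G)\leq |N(v_n)|\leq 2\chi(G)$ and forces every colour class inside $N(v_n)$ to have size at most two. From here, the plan is to examine, for each colour $c\in\{1,\dots,\chi(G)\}$, \emph{why} the greedy algorithm was forced to place $c$ on a particular neighbour $u_c$ of $v_n$. Tracing the history of the greedy algorithm backwards through $\mathcal{O}$ yields a small ``evidence'' subgraph ending at $u_c$: typically a flat path of a forced parity along which colours alternate deterministically, a hole of forced parity, a prism, or a gem-like configuration. Assembling the various pieces of evidence for the different colours, while respecting claw-freeness to control how they attach to each other and to $v_n$, should match exactly one of the listed obstructions.

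The main obstacle is the detailed case analysis for $\chi(G)=3$, where the colour pattern on $N(v_n)$ can range from three singletons to three pairs, and each configuration must be shown to force one specific $F_i$ rather than some new graph. The subtlest subcase is the co-occurrence of several flat paths with a prism near $v_n$: claw-freeness tightly restricts how these pieces can be attached, and one must argue that certain attachment vertices coincide so as to produce exactly $F_8$, $F_9$, or $F_{10}$ with the prescribed prism parity and flat-path parities. The higher-chromatic situation $\chi(G)\geq 4$ should either be excluded directly (since no listed obstruction has $\chi>3$, a minimally bad claw-free graph with $\chi\geq 4$ must be shown not to exist, presumably by exhibiting a proper connected induced subgraph that is still bad) or reduced to the $\chi=3$ case by removing a colour class of $G-v_n$ that avoids $v_n$ and invoking minimality.
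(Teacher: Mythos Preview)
Your proposal takes a fundamentally different route from the paper and has real gaps. The paper does \emph{not} trace evidence backwards from $v_n$ to identify an obstruction directly. Instead it argues by contradiction: assuming $G$ is minimally bad, claw-free, and \emph{not} an obstruction, Section~3 proves that $G$ cannot contain any of several intermediate structures (``cap'', ``even birdcage'', ``odd birdcage'', ``flower'', ``sun''). These are not the obstructions themselves but larger templates; their exclusion is obtained by combining the order-theoretic lemmas of Section~2 (notably Lemmas~\ref{Lm6} and~\ref{Lm7}) with the assumption that no $F_i$ occurs. Section~4 then uses Lemma~\ref{L2} on attachments to a hole to show, in sequence, that $G$ has no $4$-wheel, no boat, no short prism, no even prism, no odd prism, and finally no hole at all. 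Hence $G$ is chordal and gem-free, so a parity graph by Theorem~\ref{th:chgf}, and fish-free, so good by Theorem~\ref{th:HdW} --- the desired contradiction. The argument never fixes $\chi(G)=3$; it is uniform in $\chi$.

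Your back-tracing plan is natural but faces an obstacle you do not address: the obstructions are not local around $v_n$. In $F_{11}$ and $F_{12}$ the terminal vertex sits at one end of a long cycle or pair of long flat paths, and the ``evidence'' for a colour on a neighbour of $v_n$ may lie arbitrarily far away, with its far end attaching elsewhere in the graph. Showing that the assembled evidence closes up into exactly one of the twelve shapes --- and not some new configuration --- requires global structural control, which is precisely what the paper's intermediate structures (birdcages, flowers, suns) supply and what your outline lacks. Your treatment of $\chi(G)\geq 4$ is also a genuine gap: ``removing a colour class and invoking minimality'' does not obviously yield a connected induced subgraph with a bad connected order, and you give no argument that such a $G$ cannot exist; the paper avoids this issue entirely by never conditioning on $\chi(G)$. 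Finally, for the easy direction the paper does not check vertex-deletions of each $F_i$ individually: once the hard direction is established, minimality of every obstruction follows from the single observation that no obstruction contains another as a proper induced subgraph.
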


\section{Some properties of minimally bad graphs}

For any graph $G$, any order $\cal O$ of its vertices and any vertex
$v$, let $\pi_{G, \cal O}(v)$ be the colour that vertex $v$ receives
when applying the greedy colouring algorithm to $G$ with order
$\cal O$.  We also write $\pi(v)$ or $\pi_{\cal O}(v)$ when the
context is clear.

Let $G$ be a graph with an ordering ${\cal O}= [v_1, \ldots, v_n]$ of
its vertices.  For vertices $u, v$ of $G$, we use the notations
$u<_{\cal O}v$, $u>_{\cal O}v$, $u\leq_{\cal O}v$, $u\geq_{\cal O}v$
with the obvious meaning.  When clear from the context, we omit the
subscript~${\cal O}$.  When $v$ is a vertex of $G$, we denote by
$G_{\leq v}$ the subgraph of $G$ induced by
$\{u\in V(G) \text{ such that } u \leq v\}$.  Similarly, we use the
notations $G_{< v}$, $G_{\geq v}$ and $G_{> v}$.

When $X\subseteq V(G)$, we use the notation ${\cal O}[X]$ to denote
the order induced by $\cal O$ on $X$, and ${\cal O} \setminus X$ to
denote the order induced by $\cal O$ on $V(G) \setminus X$. We write
${\cal O}\setminus v$ instead of ${\cal O}\setminus \{v\}$.  We denote
by $\max(X)$ (resp.\ $\min(X)$) the maximum (resp.\ minimum) element
in $X$.

Let $G$ be a graph and ${\cal O}= [v_1, \dots, v_n]$ be a linear ordering
of its vertices.  The \emph{greedy colouring algorithm starting with
  colour 2} applied to $(G, {\cal O})$ consists in giving $v_1$ colour
2, and then taking the vertices from $v_2$ on in the order ${\cal O}$,
and to give to each vertex a colour equal to the smallest positive
integer not used by its neighbours already coloured.  

\begin{lemma} 
  \label{Lm:start}
  When applied to a good graph, the greedy colouring algorithm starting with
  colour 2 produces an optimal colouring.
\end{lemma}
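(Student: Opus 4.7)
The plan is to simulate the start-with-2 run on $(G,{\cal O})$ by an ordinary greedy run on a slightly modified connected order ${\cal O}'$ of $G$, so that the goodness hypothesis finishes the job. Write ${\cal O}=[v_1,v_2,\ldots,v_n]$; the degenerate case $n=1$ can be set aside. Because $\cal O$ is connected, the unique predecessor $v_1$ of $v_2$ must be a neighbour, so $v_1v_2\in E(G)$. I then define ${\cal O}'=[v_2,v_1,v_3,v_4,\ldots,v_n]$, obtained by swapping the first two positions of $\cal O$. Using $v_1v_2\in E(G)$, one checks immediately that ${\cal O}'$ is still connected: $v_1$, now in position~$2$, is adjacent to its sole predecessor $v_2$, and for every $i\geq 3$ the predecessor set of $v_i$ is the same set $\{v_1,\ldots,v_{i-1}\}$ as under $\cal O$, so $v_i$ still has a predecessor-neighbour.

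Next I would show by induction on $i$ that the ordinary greedy colour of $v_i$ under ${\cal O}'$ equals the start-with-2 colour of $v_i$ under $\cal O$. The first two values are direct: both algorithms assign $v_1\mapsto 2$ and $v_2\mapsto 1$. In the start-with-2 run this is by fiat for $v_1$ and then by a greedy step on the single predecessor-neighbour $v_1$ for $v_2$; in the ordinary greedy run on ${\cal O}'$ it is a greedy step on the empty predecessor set of $v_2$, followed by a greedy step on the single predecessor-neighbour $v_2$ of $v_1$. For $i\geq 3$, the two rules both take the minimum positive integer missing from the colours of the same set $\{v_1,\ldots,v_{i-1}\}\cap N(v_i)$ of predecessor-neighbours, and by the induction hypothesis these neighbours already carry identical colours in the two runs, so the inductive step is immediate.

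Because ${\cal O}'$ is a connected order and $G$ is good, the greedy colouring of $(G,{\cal O}')$ uses exactly $\chi(G)$ colours, and by the coincidence established above so does the start-with-2 colouring of $(G,{\cal O})$. There is no real obstacle in this plan: the only observation needed is that swapping the first two vertices of $\cal O$ faithfully simulates ``starting with colour~$2$'', after which the argument reduces to a short verification that ${\cal O}'$ is connected and a one-line induction.
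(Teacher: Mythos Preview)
Your proof is correct and follows exactly the same approach as the paper's own proof: both swap the first two vertices of the connected order and observe that the resulting (connected) order reproduces the start-with-$2$ colouring, so goodness yields optimality. The paper compresses this into two sentences, whereas you spell out the connectivity check and the inductive coincidence of colours; nothing is missing or different in substance.
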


\begin{proof}
  The colouring produced by this algorithm is the same as the colouring
  produced by the connected order obtained from ${\cal O}$ by swapping
  the first two vertices. Hence it is optimal.
\end{proof}

For the rest of this section, $G$ is a minimally bad graph with a bad
order ${\cal O}= [v_1,\ldots, v_n]$. Note that for any set
$S\subsetneq V(G)$, if ${\cal O}[S]$ is a connected order then it
produces an optimal colouring for $G[S]$.

\begin{lemma} 
  \label{Lm2} 
  For every $x\in V(G)\setminus \{v_n\}$, $\pi(x)\leq \chi(G)$ and
  $\pi(v_n)=\chi(G)+1$.
\end{lemma}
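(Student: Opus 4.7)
The plan is to exploit the minimality of $G$ by deleting the last vertex $v_n$ and invoking goodness of the strictly smaller graph $G_{<v_n}$. The point is that $v_n$ is exactly the vertex where the failure of optimality shows up, while all earlier vertices are coloured as if $v_n$ were not there.

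First I would verify that $\mathcal{O}\setminus v_n = [v_1,\ldots,v_{n-1}]$ is a connected order on $G_{<v_n}$. Since $\mathcal{O}$ is connected, every $v_i$ with $2\leq i\leq n-1$ has a neighbour among $\{v_1,\ldots,v_{i-1}\}$, so $\mathcal{O}\setminus v_n$ is still connected; in particular $G_{<v_n}$ is a proper connected induced subgraph of $G$. Because $G$ is minimally bad, $G_{<v_n}$ is good, and therefore the greedy algorithm applied to $(G_{<v_n},\mathcal{O}\setminus v_n)$ produces an optimal colouring, using at most $\chi(G_{<v_n})\leq \chi(G)$ colours.

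Next I observe that for every $x<v_n$, the colour $\pi_{G,\mathcal{O}}(x)$ is determined solely by the colours previously assigned to $x$'s neighbours in $G_{<v_n}$ by the algorithm, so it equals $\pi_{G_{<v_n},\mathcal{O}\setminus v_n}(x)$. Combining with the previous paragraph, $\pi(x)\leq \chi(G)$ for every $x\in V(G)\setminus\{v_n\}$. Since $\mathcal{O}$ is bad, the greedy colouring of $(G,\mathcal{O})$ is not optimal, hence uses some colour strictly greater than $\chi(G)$; by the bound just established this colour must occur at $v_n$, giving $\pi(v_n)\geq \chi(G)+1$.

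Finally, for the upper bound on $\pi(v_n)$: all neighbours of $v_n$ lie in $G_{<v_n}$ and therefore each carries a colour in $\{1,\ldots,\chi(G)\}$. Thus the set of colours used by the neighbours of $v_n$ is a subset of $\{1,\ldots,\chi(G)\}$, so the smallest positive integer missing from it is at most $\chi(G)+1$, yielding $\pi(v_n)\leq \chi(G)+1$. Hence $\pi(v_n)=\chi(G)+1$. There is no real obstacle here; the only thing worth being careful about is that $\mathcal{O}\setminus v_n$ stays connected, which is immediate.
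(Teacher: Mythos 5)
Your proof is correct and is exactly the expansion of the paper's one-line justification (the paper simply says the lemma follows from $\mathcal{O}$ being bad and $G$ being minimally bad). You correctly identify the two key facts: that the greedy colouring of $G_{<v_n}$ under $\mathcal{O}\setminus v_n$ agrees with $\pi$ on $V(G)\setminus\{v_n\}$ and is optimal by minimality, and that $v_n$'s neighbours all carry colours in $\{1,\ldots,\chi(G)\}$ so $\pi(v_n)\leq\chi(G)+1$.
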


\begin{proof}
  Follows directly from the fact that $\cal O$ is a bad order and that
  $G$ is a minimally bad graph.
\end{proof}

\begin{lemma}
  \label{l:remove}
  If $x\in V(G) \setminus \{v_n\}$ and ${\cal O}\setminus x$ is
  connected, then for some vertex $y\neq x$ in $G$,
  $\pi_{G\setminus x, {\cal O}\setminus x}(y) \neq \pi_{G, {\cal
      O}}(y)$.
\end{lemma}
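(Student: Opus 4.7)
The plan is to proceed by contradiction: assume $\pi_{G\setminus x,\,{\cal O}\setminus x}(y) = \pi_{G,{\cal O}}(y)$ for every $y \neq x$ and derive an impossibility. The two ingredients we have are Lemma~\ref{Lm2}, which pins down the colour of $v_n$ in the bad order, and the minimality of $G$, which promises that every proper connected induced subgraph is good.

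Under the contradictory hypothesis, since $v_n \neq x$, we would have $\pi_{G\setminus x,\,{\cal O}\setminus x}(v_n) = \pi_{G,{\cal O}}(v_n) = \chi(G)+1$ by Lemma~\ref{Lm2}. Thus the greedy colouring produced on $G\setminus x$ by ${\cal O}\setminus x$ would use at least $\chi(G)+1$ colours.

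Now I would observe that $G\setminus x$ is a proper connected induced subgraph of $G$: connectedness is built into the hypothesis that ${\cal O}\setminus x$ is a connected order. By the minimality of $G$, $G\setminus x$ is good, so the connected order ${\cal O}\setminus x$ must produce an optimal colouring of it. Combined with the previous paragraph, this forces $\chi(G\setminus x) \geq \chi(G)+1$, which contradicts the obvious inequality $\chi(G\setminus x) \leq \chi(G)$.

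There is no real obstacle here; the only point to be careful about is to invoke $v_n \neq x$ (guaranteed by the hypothesis $x \in V(G)\setminus\{v_n\}$) in order to transfer the colour $\chi(G)+1$ from the $G$-colouring to the $(G\setminus x)$-colouring. Everything else is a direct application of Lemma~\ref{Lm2} and the definition of a minimally bad graph.
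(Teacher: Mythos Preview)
Your proof is correct and essentially identical to the paper's: both exhibit $y=v_n$ as the witness, combining Lemma~\ref{Lm2} (which gives $\pi_{G,{\cal O}}(v_n)=\chi(G)+1$) with the minimality of $G$ (which forces the greedy colouring on $G\setminus x$ to be optimal, hence $\pi_{G\setminus x,{\cal O}\setminus x}(v_n)\le\chi(G\setminus x)\le\chi(G)$). The only cosmetic difference is that you phrase it as a contradiction while the paper argues directly.
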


\begin{proof}
  The conclusion is true for $y=v_n$. Because by the minimality of
  $G$, we have $\pi_{G\setminus x, {\cal O}\setminus x}(v_n) \leq
  \chi(G\setminus x) \leq \chi(G)$ and by Lemma~\ref{Lm2}, $\pi(v_n) =
  \chi(G) +1$. 
\end{proof}

\begin{lemma} 
  \label{C2} 
  $\pi(v_n)\geq 4$.
\end{lemma}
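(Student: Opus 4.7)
The plan is to combine Lemma~\ref{Lm2}, which identifies $\pi(v_n)$ with $\chi(G)+1$, with the known fact that bipartite graphs are good (Theorem~\ref{th:bip}). The desired inequality $\pi(v_n)\geq 4$ is then equivalent to $\chi(G)\geq 3$, so the whole task reduces to ruling out the cases $\chi(G)=1$ and $\chi(G)=2$.

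First I would handle $\chi(G)=1$: then $G$ has no edges, and since $\cal O$ is a connected order, $G$ must consist of a single vertex. But any one-vertex graph is trivially good (the greedy algorithm uses only colour~$1$), contradicting the assumption that $G$ is bad. Next, for $\chi(G)=2$, the graph $G$ is bipartite, so Theorem~\ref{th:bip} applies and declares $G$ good, again contradicting badness. Hence $\chi(G)\geq 3$, and invoking Lemma~\ref{Lm2} on the last vertex $v_n$ gives $\pi(v_n)=\chi(G)+1\geq 4$.

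There is no real obstacle here; the statement is essentially a corollary of the earlier lemma together with the result on bipartite graphs. The only subtlety is to notice that the hypothesis of minimality is what lets us equate $\pi(v_n)$ with $\chi(G)+1$ (via Lemma~\ref{Lm2}), while the hypothesis of badness is what forces $G$ not to be bipartite (via Theorem~\ref{th:bip}).
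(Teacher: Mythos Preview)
Your proof is correct and follows essentially the same route as the paper: both combine Lemma~\ref{Lm2} with Theorem~\ref{th:bip} to rule out $\chi(G)\leq 2$. The only difference is cosmetic---you treat $\chi(G)=1$ separately, whereas the paper folds this into the bipartite case (edgeless graphs are bipartite, so Theorem~\ref{th:bip} already covers them).
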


\begin{proof}
  Otherwise, $\pi(v_n)\leq 3$, so by Lemma~\ref{Lm2}, $\chi(G)\leq 2$,
  so $G$ is bipartite, a contradiction to Theorem~\ref{th:bip}.
\end{proof}

\begin{lemma} 
  \label{Lm1} 
  For every vertex $v\in V(G)$, $G_{\leq v}$, $G_{\geq v}$, $G_{<v}$
  and $G_{>v}$ are connected. In particular, $G$ is connected. 
\end{lemma}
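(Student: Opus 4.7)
The plan is to treat $G_{\leq v}$ and $G_{<v}$ as essentially immediate from the definition of a connected order, and to concentrate the real work on $G_{\geq v}$; the cases $G_{>v}$ and overall connectedness of $G$ will then follow at once. For $G_{\leq v}$ (and hence $G_{<v}$), a straightforward induction on the position $j$ shows that $v_j$ is connected to $v_1$ inside $G_{\leq v_j}$, using only that $\mathcal{O}$ is a connected order. As a by-product, every vertex of $G$ is connected to $v_1$ in $G$ itself, a fact that will be reused in the hard case.

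For $G_{\geq v}$ I will argue by contradiction, exploiting the minimality of $G$. Assume $G_{\geq v}$ is disconnected, let $C$ be the component containing $v_n$, and pick another non-empty component $D$. The key structural observation is that since $D$ is a full component of $G_{\geq v}$, any neighbour of $D$ in $G$ not itself in $D$ must lie in $G_{<v}$, and therefore precedes every vertex of $D$ in $\mathcal{O}$; consequently, no $\mathcal{O}$-predecessor of any $w \in V(G) \setminus D$ lies in $D$. Two consequences follow: (i) $\mathcal{O} \setminus D$ is still a connected order of $G \setminus D$, since every non-initial vertex retains an earlier neighbour; and (ii) by induction on position in $\mathcal{O}$, the greedy colouring produced by $(G, \mathcal{O})$ agrees on $V(G) \setminus D$ with the greedy colouring produced by $(G \setminus D, \mathcal{O} \setminus D)$, because both processes see exactly the same earlier neighbourhoods. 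Minimality of $G$ then forces the latter colouring to be optimal on $G \setminus D$, using at most $\chi(G \setminus D) \leq \chi(G)$ colours. But $v_n \in V(G) \setminus D$, and $\pi(v_n) = \chi(G) + 1$ by Lemma~\ref{Lm2}, a contradiction.

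The remaining parts of the statement will fall out with no extra work: if $v = v_n$ then $G_{>v}$ is empty, and otherwise $G_{>v} = G_{\geq v'}$ where $v'$ is the successor of $v$ in $\mathcal{O}$; the overall connectedness of $G$ is just the case $v = v_1$. The main subtle point I anticipate is checking that $v_1 \notin D$, so that $\mathcal{O} \setminus D$ is well-defined as a connected order starting at $v_1$: if $v_1 \in D$ then one would have $v = v_1$ and $G_{\geq v} = G$, so $D$ would be a component of $G$; combined with the connectedness of $G$ established in the first paragraph this forces $D = V(G)$, contradicting $v_n \in C$. Beyond this edge case, the argument is a routine verification of (i) and (ii) above.
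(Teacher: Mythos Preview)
Your proof is correct and follows essentially the same approach as the paper's. Both arguments exploit the minimality of $G$: once $G_{\geq v}$ is assumed disconnected, the absence of edges between its components means the greedy colouring on a carefully chosen proper connected induced subgraph coincides with the restriction of $\pi$, yielding a contradiction. The only cosmetic difference is that the paper restricts to $G[\{u:u<v\}\cup C_i]$ for each component $C_i$ and concludes $\pi(u)\leq\chi(G)$ for every $u$, whereas you remove a single component $D$ not containing $v_n$ and invoke Lemma~\ref{Lm2} directly for $v_n$; the underlying idea is identical.
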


\begin{proof}
  For $G_{\leq v}$, it comes from the definition of connected orders.

  Suppose $C_1,\ldots,C_k$ ($k\geq 2$) are the connected components of
  $G_{\geq v}$.  For $i=1, \dots k$, set
  $G_i=G[\{u \in V(G) \text{ such that } u<v\} \cup C_i]$ and let
  ${\cal O}_i$ be the order $\cal O$ restricted to $V(G_i)$. For every
  $i\in\{1,\ldots,k\}$ and for every vertex $u\in C_i$, we have
  $\pi_{G, \cal O}(u)=\pi_{G_i, {\cal O}_i}(u)$ because there are no
  edges in $G$ between $C_i$ and $C_j$ for $i\neq j$.  But since $G$
  is minimally bad, $V(G_i)\subsetneq V(G)$ and ${\cal O}_i$ is a
  connected order, $\pi_{{\cal O}_i}$ is an optimal colouring for
  $G_i$. So, for every vertex $u$ in $G$,
  $\pi(u) \leq \chi(G_i) \leq \chi(G)$, so $\pi$ is an optimal
  colouring, a contradiction.

  The proof is the same for $G_{<v}$ and $G_{>v}$ (note that we view
  the empty graph as a connected graph). 
\end{proof}

A \emph{cutset} in a graph $G$ is a set $S\subseteq V(G)$ such that
$G\setminus S$ is disconnected. A cutset $S$ is a \emph{clique cutset} if $S$ is a clique.

\begin{lemma}
  \label{C1} 
  If $S$ is a cutset of $G$, then for every component $C$ of
  $G\setminus S$ except at most one, $\max(C)<\max(S)$.  Furthermore,
  if $C$ is the unique component such that $\max(C)>\max(S)$, then
  $v_n\in C$.
\end{lemma}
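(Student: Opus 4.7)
The plan is to apply Lemma~\ref{Lm1} to the specific vertex $v := \max(S)$. Since $v$ is the largest element of $S$, no vertex of $S$ lies strictly above $v$, hence $V(G_{>v}) \subseteq V(G)\setminus S$. Lemma~\ref{Lm1} guarantees that $G_{>v}$ is connected (where, as stated in Lemma~\ref{Lm1}, the empty graph counts as connected).

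The key step I would use is the standard fact that $S$ separates the components of $G\setminus S$: there are no edges of $G$ between distinct components of $G\setminus S$. Consequently, the connected subgraph $G_{>v}$, being contained in $V(G)\setminus S$, must lie entirely inside a single component $C^{*}$ of $G\setminus S$ (or be empty). From this I conclude that every other component $C'$ of $G\setminus S$ is contained in $\{u : u \le v\}$, and since $v\in S$ excludes $v$ from $C'$, we get $\max(C') < v = \max(S)$. This proves the first statement, with $C^{*}$ being the exceptional component if one exists.

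For the second statement, I would assume that $C^{*}$ is the unique component satisfying $\max(C^{*}) > \max(S)$. Then $V(G_{>v})$ is nonempty, and by the previous paragraph it is exactly the set of vertices of $C^{*}$ that exceed $v$. Since $v_n$ is the global maximum of $\cal O$, we have $v_n \ge \max(C^{*}) > v = \max(S)$, so $v_n \notin S$ and $v_n \in V(G_{>v}) \subseteq V(C^{*})$, as required.

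I do not anticipate any real obstacle: the whole argument is a short direct deduction from Lemma~\ref{Lm1} together with the elementary observation that distinct components of $G\setminus S$ are pairwise non-adjacent. The only point to be careful about is the degenerate case in which $G_{>v}$ is empty, which is handled uniformly by treating the empty graph as connected and by noting that then every component $C$ of $G\setminus S$ satisfies $\max(C) < \max(S)$ vacuously.
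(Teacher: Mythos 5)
Your proof is correct and uses essentially the same idea as the paper: both arguments hinge on Lemma~\ref{Lm1} applied to $G_{>\max(S)}$. The paper phrases it as a contradiction (two components exceeding $\max(S)$ would disconnect $G_{>\max(S)}$), whereas you argue directly that the connected subgraph $G_{>\max(S)}$ must sit inside a single component; these are the same argument.
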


\begin{proof}
  For the first claim, if $\max(C)>\max(S)$ for more than one component
  $C$, then $G_{>\max(S)}$ is disconnected, a contradiction to
  Lemma~\ref{Lm1}.  The second claim follows trivially.
\end{proof}

\begin{lemma}
  \label{ColClique}
  Suppose $S$ is a clique cutset of $G$ and $C$ is a component of
  $G\setminus S$ such that $\max(S) < \min(C) = v$.  If 
   $v$ is complete to $S$, then there exists $u\in S\cup \{v\}$ such that
  $\pi(u) > |S| + 1$.
\end{lemma}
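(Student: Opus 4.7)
The plan is to derive a contradiction from the assumption that every $u\in K:=S\cup\{v\}$ satisfies $\pi(u)\leq |S|+1$. Since $K$ is a clique of size $|S|+1$, under this assumption the colors on $K$ are exactly $\{1,\ldots,|S|+1\}$, so I can relabel $K=\{u_1,\ldots,u_{|S|+1}\}$ with $\pi(u_i)=i$. The goal is to build a proper connected induced subgraph of $G$ together with a connected order whose greedy coloring agrees with $\pi$ on $v_n$; the minimality of $G$ will then produce the contradiction.

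Before the construction I would record two structural facts. Let $C':=V(G)\setminus(S\cup C)$, the union of the other components of $G\setminus S$; $C'$ is nonempty because $S$ is a cutset, and there are no edges between $C$ and $C'$. By Lemma~\ref{Lm1} the subgraph $G_{>\max(S)}$ is connected, and combined with the separation between $C$ and $C'$ this forces $C'\subseteq G_{<\max(S)}$, so every vertex of $C'$ precedes $v$ in $\cal O$. In particular, since $\max(\cal O)\geq\max(C)>\max(S)$ and $v_n\notin C'$, we get $v_n\in C$.

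The centerpiece is the subgraph $H':=G[S\cup C]$ (proper because $C'\neq\emptyset$) equipped with the order $\cal O'$ that first lists $u_1,u_2,\ldots,u_{|S|+1}$ and then appends $C\setminus\{v\}$ in the order inherited from $\cal O$. I would verify (i) that $\cal O'$ is connected on $H'$: the initial block is a clique, and each $c\in C\setminus\{v\}$ keeps an $\cal O'$-predecessor-neighbor because its $\cal O$-predecessor-neighbor lies in $N_G(c)\subseteq S\cup C$, using $N_G(c)\cap C'=\emptyset$. Then I would check (ii) that $\pi_{H',\cal O'}$ agrees with $\pi$ on all of $V(H')$, by induction along $\cal O'$: on $K$ it holds by construction, since greedy on the clique $K$ ordered by $\pi$-color gives $u_i$ color $i$; and for $c\in C\setminus\{v\}$, the set of predecessor-neighbors of $c$ in $\cal O'$ coincides with the set of predecessor-neighbors of $c$ in $\cal O$, since $N(c)\cap C'=\emptyset$ and all of $K$ precedes $c$ in both orders, so the induction step is immediate.

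Once (ii) is established, $\pi_{H',\cal O'}(v_n)=\pi(v_n)=\chi(G)+1$ by Lemma~\ref{Lm2}. But minimality of $G$, applied to the proper connected induced subgraph $H'$ with the connected order $\cal O'$, forces $\pi_{H',\cal O'}$ to be an optimal coloring of $H'$, hence to use at most $\chi(H')\leq\chi(G)$ colors, a contradiction. The one substantive step is (ii); its key input is the ordering of $K$ by $\pi$-color, which is what allows greedy to faithfully reproduce the color assignment on $K$ even after the surrounding coloring context in $C'$ has been stripped away.
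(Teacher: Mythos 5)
Your proof is correct and follows essentially the same route as the paper's: assume all colours on the clique $K=S\cup\{v\}$ are at most $|S|+1$, hence exactly $\{1,\ldots,|S|+1\}$; reorder $G[S\cup C]$ with $K$ first by colour then $C\setminus\{v\}$ by $\cal O$; observe the order is still connected and reproduces $\pi$ on $v_n\in C$, contradicting minimality via Lemma~\ref{Lm2}. The only difference is that you spell out the verifications (connectivity of $\cal O'$, colour-preservation, and $v_n\in C$, the last of which the paper simply cites from Lemma~\ref{C1}) that the paper leaves terse.
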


\begin{proof}
  Otherwise, since $S\cup \{v\}$ is a clique, the colours
  $1, \dots, |S|+1$ are exactly the colours used in $S\cup \{v\}$. Now
  build an order ${\cal O}'$ of $G[S\cup C]$ by first reordering the
  vertices from $S\cup \{v\}$ by increasing order of their colours, and
  then  taking the rest of $S\cup C$ as it is ordered by $\cal O$.
  This new order is connected (as $\cal O$) and therefore provides an optimal
  colouring of $G[S\cup C]$.  It also gives the same colouring as
  $\cal O$ for $G[S\cup C]$.  Since by Lemma~\ref{C1} $v_n\in C$, it
  follows that $\pi(v_n) \leq \chi(G[S\cup C]) \leq \chi(G)$, a
  contradiction to Lemma~\ref{Lm2}.
\end{proof}

\begin{lemma} 
  \label{Lm5}
  For $v\in V(G)$, let $S$ be a cutset of $G_{\leq v}$.  If there
  exists a connected component $C$ of $G_{\leq v}\setminus S$ such
  that $\min(C) < \min(S)$ then $v_1\in C$.
\end{lemma}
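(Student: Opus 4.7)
The plan is to let $m = \min(C)$ and to show that $m = v_1$. Suppose for contradiction that $m \neq v_1$, so $m$ is not the first vertex of $\mathcal{O}$. By the connectedness of $\mathcal{O}$, there exists $u <_{\mathcal{O}} m$ with $um \in E(G)$. Since $u <_{\mathcal{O}} m \leq_{\mathcal{O}} v$, we have $u \in V(G_{\leq v})$.

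The key structural observation is that every neighbor of $m$ inside $G_{\leq v}$ lies in $V(C) \cup S$: indeed, $m \in V(C)$, and $C$ is a connected component of $G_{\leq v} \setminus S$, so no edge of $G_{\leq v}$ can join $C$ to a vertex outside $V(C) \cup S$. Thus $u \in V(C) \cup S$.

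I would then finish by a case split. If $u \in V(C)$, then $u <_{\mathcal{O}} m = \min(C)$, contradicting the minimality of $m$ in $C$. If $u \in S$, then by the hypothesis $\min(C) < \min(S)$, $u \geq_{\mathcal{O}} \min(S) > \min(C) = m$, contradicting $u <_{\mathcal{O}} m$. Both cases fail, so $m = v_1$, and therefore $v_1 \in V(C)$. I do not foresee any obstacle here: the statement is a direct elementary consequence of the connectedness of $\mathcal{O}$ combined with the definition of a cutset, and in particular it uses neither the minimality of $G$ nor any of the earlier lemmas such as Lemma~\ref{Lm1} or Lemma~\ref{C1}.
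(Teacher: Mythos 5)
Your proof is correct. It is essentially the same argument as the paper's, but made self-contained: the paper invokes Lemma~\ref{Lm1} to say that $G_{<\min(S)}$ is connected and derives a contradiction from $v_1$ and $\min(C)$ lying in separate pieces, whereas you argue directly that the in-neighbor of $\min(C)$ forced by the connected order cannot land anywhere (it would have to be in $C$, contradicting minimality of $\min(C)$, or in $S$, contradicting $\min(C) < \min(S)$). Your closing remark that the proof needs neither minimality of $G$ nor Lemma~\ref{Lm1} is accurate, and it is worth noting that the paper's proof in fact shares this feature: the only part of Lemma~\ref{Lm1} it uses is connectivity of $G_{<w}$, which itself follows from the definition of a connected order alone (unlike the $G_{>w}$ and $G_{\geq w}$ clauses, which do rely on minimality). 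So the two routes bottom out at the same elementary fact; yours just avoids the detour through the intermediate lemma.
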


\begin{proof}
  If $v_1\notin C$, then $G_{<\min(S)}$ is not connected: $v_1$ and
  $\min(C)$ are in different components, a contradiction to
  Lemma~\ref{Lm1}.
\end{proof}

It is sometimes convenient to view $G$ and $\cal O$ as an oriented
graph $D_G$, obtained from $G$ by orienting from $u$ to $v$ every edge
$uv$ such that $u<v$. We therefore use the notion of
\emph{in-neighbor}, \emph{outneighbor}, \textit{source} and
\textit{sink} in $G$ (a source in $G$ is a vertex with no in-neighbor
in $D_G$ and a sink in $G$ is a vertex with no outneighbor in $D_G$).

\begin{lemma}
  \label{CUnique} 
  $G$ has a unique source that is $v_1$ and a unique sink that is $v_n$. 
\end{lemma}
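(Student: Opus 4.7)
The plan is to handle the source and the sink claims by two separate short arguments. The source claim is essentially immediate from the definition: since ${\cal O}$ is a connected order, every $v_i$ with $i \geq 2$ has some $v_j$ with $j<i$ adjacent to it, hence an in-neighbour in $D_G$. So $v_1$ is the only possible source, and it is trivially one.

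For the sink claim I would argue by contradiction. The vertex $v_n$ is clearly a sink, and I would suppose that some $v \neq v_n$ is also a sink. The case $v = v_1$ is dismissed at once, because then $v_1$ would have no neighbour in $G$ at all and would thus be an isolated vertex, contradicting the connectedness of $G$ provided by Lemma~\ref{Lm1}. So one may assume $v_1 < v < v_n$.

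The main step, and where the only mild delicacy lies, is to check that ${\cal O}\setminus v$ is still a connected order on $G\setminus v$. This is exactly what allows Lemma~\ref{l:remove} to be invoked. It holds because any vertex later than $v$ in ${\cal O}$ has a connecting predecessor different from $v$ (as $v$ is a sink, it has no later neighbour), while vertices earlier than $v$ are unaffected. Hence Lemma~\ref{l:remove} supplies some $y \neq v$ for which $\pi_{G\setminus v,\,{\cal O}\setminus v}(y) \neq \pi_{G,{\cal O}}(y)$.

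To derive the contradiction, I would then observe that removing a sink cannot actually change any greedy colour: for $y<v$ the run on $y$ is identical in both orderings, and for $y>v$ the set of already-coloured neighbours of $y$ is the same in both runs since $v \notin N(y)$ by the sink property. Thus every $y \neq v$ satisfies $\pi_{G\setminus v,\,{\cal O}\setminus v}(y) = \pi_{G,{\cal O}}(y)$, contradicting what Lemma~\ref{l:remove} just produced and completing the argument.
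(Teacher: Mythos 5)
The proposal is correct. Your source argument is essentially the same observation the paper makes (both reduce to the definition of a connected order / Lemma~\ref{Lm1}). For the sink, however, you take a genuinely different and heavier route. The paper's argument is a one-liner: if $u < v_n$ were a second sink, then $u$ has no neighbour among vertices $\geq u$, so $G_{\geq u}$ is disconnected, contradicting Lemma~\ref{Lm1}. You instead invoke Lemma~\ref{l:remove} and then argue that deleting a sink cannot perturb any greedy colour (since $v\notin N(y)$ for all $y>v$), producing the contradiction from the opposite direction. Both are valid. The paper's route buys economy: it uses only the connectivity statement of Lemma~\ref{Lm1}, whereas yours additionally pulls in Lemma~\ref{l:remove}, which in turn depends on minimality and on Lemma~\ref{Lm2}. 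In the end both of course depend on minimality (Lemma~\ref{Lm1} itself is proved using it), so the difference is one of economy rather than of logical strength. Your separate handling of $v=v_1$ is the right care to take before invoking Lemma~\ref{l:remove}, since ${\cal O}\setminus v_1$ need not be a connected order in general; and your dismissal of that case (a sink $v_1$ would be isolated) is correct.
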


\begin{proof}
  Obviously, $v_1$ is a source and $v_n$ is a sink.  If $G$ has two
  sources $u<v$, then $G_{\leq v}$ is disconnected ($u$ and $v$ are in
  two distinct components), a contradiction to Lemma~\ref{Lm1}.  If
  $G$ has two sinks $u<v$, then $G_{\geq u}$ is disconnected ($u$ and
  $v$ are in two distinct components), a contradiction to
  Lemma~\ref{Lm1}.
\end{proof}

\begin{lemma} 
  \label{Lm3} 
  Let $v$ be a vertex of degree $2$ in $G$ and let $b<a$ be its
  neighbors. One and exactly one of the following outcome occurs:
  \begin{itemize}
  \item $v=v_1$ is the source of $G$ and $v_2=b$;
  \item $b < v < a$.
  \end{itemize}
  Moreover, $\pi(v)\in\{1,2\}$.
\end{lemma}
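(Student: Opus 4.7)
The plan is to do a case analysis on the position of $v$ relative to its two neighbours $a$ and $b$. Since $v$ has exactly two neighbours and $b<a$, exactly one of three patterns holds: $v<b<a$, $b<v<a$, or $b<a<v$. The first two correspond to the two listed outcomes, so the main structural task is to rule out $b<a<v$ and, in the first case, to pin down $v_2$.

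First, I would dispose of the case $b<a<v$. In this situation $v$ has no out-neighbour in $D_G$ (since its only neighbours $a,b$ both precede it), so $v$ is a sink. By Lemma~\ref{CUnique}, $v=v_n$. But then $\pi(v_n)=\pi(v)$ is at most $3$, because with only two in-neighbours the greedy rule assigns $v$ a colour from $\{1,2,3\}$. This contradicts Lemma~\ref{C2}. Next, in the case $v<b<a$, the vertex $v$ has no in-neighbour and is therefore a source; Lemma~\ref{CUnique} forces $v=v_1$. To identify $v_2$, I use that $\mathcal{O}$ is connected, so $v_2$ must have a neighbour in $\{v_1\}=\{v\}$, hence $v_2\in N(v)=\{a,b\}$. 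Since $b<a$, this gives $v_2=b$. The remaining case $b<v<a$ is exactly the second outcome, and the three cases are plainly mutually exclusive.

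Finally, for the ``moreover'' statement, I compute $\pi(v)$ directly in each remaining case. If $v=v_1$, then $\pi(v)=1\in\{1,2\}$. If $b<v<a$, then at the moment $v$ is coloured by the greedy algorithm, only its neighbour $b$ has been coloured, so $\pi(v)$ is the smallest positive integer not equal to $\pi(b)$, which is either $1$ or $2$.

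The argument is entirely mechanical once Lemmas~\ref{CUnique} and~\ref{C2} are in hand; the only subtle point is the case $b<a<v$, where one must notice that degree~$2$ caps the greedy colour at~$3$ and then invoke $\pi(v_n)\geq 4$. No further obstacle is expected.
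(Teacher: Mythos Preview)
Your proposal is correct and follows essentially the same approach as the paper's proof: the same three-way case split on the relative position of $v$, the same use of Lemma~\ref{CUnique} to force $v=v_n$ or $v=v_1$ in the extreme cases, and the same appeal to Lemma~\ref{C2} to rule out $b<a<v$. Your justification for $v_2=b$ is slightly more explicit than the paper's, but the argument is the same.
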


\begin{proof}
  If $b < a <v$, then $v$ is a sink of $G$ and $v=v_n$ by Lemma
  \ref{CUnique}. Since $v$ has degree~2, $\pi(v)\leq 3$, a
  contradiction to Lemma~\ref{C2}.
 
  If $v<b< a$ then $v$ is a source of $G$ and $v=v_1$ by Lemma
  \ref{CUnique}. Hence, $\pi(v)=1$. Also, $v_2 = b$ because $\cal O$
  is connected.
  
  Otherwise, $b < v < a$.  So, $v$ has degree~1 in $G_{\leq v}$ and
  $\pi(v) \in \{1, 2\}$. 
\end{proof}

\begin{lemma} 
  \label{C3} 
  In colouring $\pi$, the colours of the internal vertices of any flat
  path in $G$ alternates between $1$ and $2$.
\end{lemma}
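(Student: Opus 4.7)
The plan is to reduce the statement almost entirely to Lemma~\ref{Lm3}. By the definition of a flat path, every internal vertex of such a path has degree exactly $2$ in $G$. Applying Lemma~\ref{Lm3} to each internal vertex, we conclude that its $\pi$-value lies in $\{1,2\}$. So the palette available to the internal vertices is already forced to be $\{1,2\}$ without any further work.

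It then remains to upgrade this constraint to genuine alternation. Write the flat path as $u_0 u_1 \dots u_\ell$, where $u_1, \dots, u_{\ell-1}$ are the internal vertices (the statement is vacuous when $\ell \leq 2$). For any index $1 \leq i \leq \ell-2$, the vertices $u_i$ and $u_{i+1}$ are adjacent in $G$, so since $\pi$ is a proper colouring we must have $\pi(u_i) \neq \pi(u_{i+1})$. Combined with $\pi(u_i), \pi(u_{i+1}) \in \{1,2\}$, this forces $\{\pi(u_i), \pi(u_{i+1})\} = \{1,2\}$, so consecutive internal vertices receive opposite colours. Iterating this along the path yields the claimed alternation.

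There is no real obstacle here: the substantive content is already contained in Lemma~\ref{Lm3}, and the remainder is just the trivial observation that adjacent vertices in a proper colouring receive distinct colours. The only point worth checking is that we never need to consider an internal vertex whose neighbours are outside the path — and this is precisely ruled out by flatness, which guarantees that both neighbours of $u_i$ are $u_{i-1}$ and $u_{i+1}$.
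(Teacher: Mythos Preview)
Your proof is correct and follows exactly the paper's approach: the paper simply writes ``Clear by Lemma~\ref{Lm3}'', and you have spelled out the obvious details (each internal vertex gets colour in $\{1,2\}$ by Lemma~\ref{Lm3}, and adjacent vertices receive distinct colours).
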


\begin{proof}
  Clear by Lemma~\ref{Lm3}.
\end{proof}

\begin{lemma}
  \label{l:endFP}
  If $P$ is a flat path of $G$, then $\max(V(P))$ is an end of $P$. 
\end{lemma}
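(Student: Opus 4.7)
The plan is to argue by contradiction, using Lemma~\ref{Lm3} (which tightly constrains the position of a degree-$2$ vertex in $\cal O$) to rule out that the maximum of $V(P)$ is internal to $P$.

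Suppose $v := \max(V(P))$ is an internal vertex of $P$. Since $P$ is flat, $v$ has degree $2$ in $G$, and its two neighbors, call them $a$ and $b$, are precisely its two neighbors along $P$. In particular both $a$ and $b$ lie in $V(P)$, so by the choice of $v$ we have $a < v$ and $b < v$ in $\cal O$.

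Now I apply Lemma~\ref{Lm3} to $v$. The lemma leaves exactly two possibilities. The second, ``$b < v < a$'' with $b < a$, is excluded immediately because both neighbors of $v$ are smaller than $v$ in $\cal O$. The first possibility forces $v = v_1$, the source of~$G$. But $v_1$ is the minimum of $\cal O$, whereas $v$ is the maximum of $V(P)$ and $|V(P)| \geq 3$ (since $v$ is internal), so some vertex of $P$ is strictly smaller than $v$ in $\cal O$, contradicting $v = v_1$. Either way we reach a contradiction, so $\max(V(P))$ must be an end of $P$.

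I do not anticipate a real obstacle here; the lemma is essentially a one-line consequence of Lemma~\ref{Lm3}, and the only thing to be careful about is to notice that the degree-$2$ hypothesis of Lemma~\ref{Lm3} is supplied precisely by the assumption that the maximum vertex is internal to a flat path.
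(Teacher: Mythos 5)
Your proof is correct, and it takes a genuinely different (and arguably cleaner) route than the paper's. The paper's proof argues that if $\max(V(P))$ were internal, then the two ends of $P$ would form a cutset (after dismissing the case $G=P$ via Theorem~\ref{th:bip}), so Lemma~\ref{C1} would force $v_n$ to be an internal vertex of $P$; then Lemma~\ref{C3} gives $\pi(v_n)\in\{1,2\}$, contradicting Lemma~\ref{C2}. You instead observe directly that a flat path's internal vertex $v$ has degree~$2$ in $G$ with both neighbours on $P$, so if $v=\max(V(P))$ both neighbours precede $v$; Lemma~\ref{Lm3} then forces $v=v_1$, which is impossible since $v$ has predecessors. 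Your route bypasses the cutset machinery and the ``$G=P$'' caveat entirely, at the cost of leaning directly on Lemma~\ref{Lm3}, which in turn encodes the same degree-$2$ constraint via Lemmas~\ref{CUnique} and~\ref{C2} that the paper invokes more explicitly here. Both approaches are sound; yours is shorter and more local.
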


\begin{proof}
  If $P$ has length at most~1, the conclusion is trivial. Otherwise,
  the ends of $P$ form a cutset of $G$ (note that $G=P$ is impossible
  since a path is a good graph by Theorem~\ref{th:bip}).  If
  $\max(V(P))$ is not an end of $P$, then by Lemma~\ref{C1}, $v_n$ is
  an internal vertex of $P$.  So, by Lemma~\ref{C3},
  $\pi(v_n)\in \{1, 2\}$, a contradiction to Lemma~\ref{C2}.
\end{proof}

A path $P= p_1 \dots p_k$ in $G$ is \emph{well ordered} if
$p_1 < p_2 < \dots < p_k$ or $p_k < \dots < p_2 < p_1$. A flat path in $G$ is \emph{maximal} if its two end are not of degree $2$ in $G$.

\begin{lemma}
  \label{l:Pwo}
  If $P = a \dots b$ is a flat path in $G$ then either it is well
  ordered, or the source $v_1$ is an internal vertex of $P$ and $aPv_1$, $v_1Pb$
  are both well ordered.  In particular, there exists at most one
  maximal flat path in $G$ that is not well ordered.
\end{lemma}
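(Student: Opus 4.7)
The plan is to apply Lemma~\ref{Lm3} to every internal vertex of $P$ and then propagate the local ordering constraints along the path. Write $P = p_0 p_1 \dots p_k$ with $a = p_0$ and $b = p_k$, and note that every internal vertex $p_i$ ($1 \leq i \leq k-1$) has degree~$2$ in $G$ (by definition of a flat path) with neighbors exactly $p_{i-1}$ and $p_{i+1}$. Hence Lemma~\ref{Lm3} tells us that for each internal $p_i$, either $p_i = v_1$ (the source of $G$) and its two path-neighbors are both greater than $p_i$ in $\cal O$, or $p_i$ lies strictly between $p_{i-1}$ and $p_{i+1}$ in $\cal O$.

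First I would handle the case where no internal vertex of $P$ equals $v_1$. Then the second alternative of Lemma~\ref{Lm3} holds at every internal vertex, so the direction is forced to propagate: if, say, $p_0 < p_1$, then since $p_1$ must be strictly between $p_0$ and $p_2$, we deduce $p_1 < p_2$; a straightforward induction on $i$ gives $p_0 < p_1 < \dots < p_k$. The reverse case gives $p_0 > p_1 > \dots > p_k$. Thus $P$ is well ordered.

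Next I would deal with the case where $v_1 = p_i$ for some $1 \leq i \leq k-1$. Then $p_{i-1} > v_1$ and $p_{i+1} > v_1$ since $v_1$ is the smallest vertex of $\cal O$. Applying the previous propagation argument to the subpath $p_0 \dots p_i$: at every internal vertex $p_j$ ($1 \leq j \leq i-1$) the second alternative of Lemma~\ref{Lm3} applies (as $p_j \neq v_1$), and we know $p_{i-1} > p_i$, so by downward induction we get $p_0 > p_1 > \dots > p_i$. Symmetrically, $p_i < p_{i+1} < \dots < p_k$. Hence both $aPv_1$ and $v_1Pb$ are well ordered. The two alternatives are clearly exclusive (in the second, $P$ itself is not well ordered because the order has a strict minimum at an internal vertex).

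For the last sentence, I would observe that if a maximal flat path $P$ is not well ordered, then by what we just proved, $v_1$ must be an internal vertex of $P$; in particular $v_1$ has degree $2$ in $G$. But a vertex of degree $2$ is an internal vertex of at most one maximal flat path, since a maximal flat path containing a degree-$2$ vertex $v$ as internal vertex is uniquely determined by extending from $v$ in both directions along degree-$2$ vertices until a vertex of degree different from $2$ is reached. Hence at most one maximal flat path in $G$ fails to be well ordered. None of the steps looks difficult; the only point requiring a little care is the base of the propagation (making sure the direction inherited from $v_1$ in the split case really does force monotonicity along each half).
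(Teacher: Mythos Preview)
Your proof is correct and follows the same approach as the paper, which simply states that the result ``follows from Lemma~\ref{Lm3} and the definition of connected orders.'' You have spelled out in detail the propagation argument that the paper leaves implicit: applying Lemma~\ref{Lm3} at each internal vertex and inducting along the path in both cases, then observing that the source can be internal to at most one maximal flat path.
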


\begin{proof}
  This follows from Lemmas~\ref{Lm3} and the definition of
  connected orders.
\end{proof}

\begin{lemma} 
  \label{Lm6} 
  Let $k\geq 2$ and $S= \{s_1, \dots, s_k\}$ be a set of vertices in
  $G$ such that $s_1< \dots < s_k$ and
  $s_k$ is complete to $\{s_1, \dots, s_{k-1}\}$. Let
  $a_1, \ldots,a_k\in G\setminus S$ be $k$ distinct vertices of degree
  $2$ in $G$ and such that $N_S(a_i)=s_i$. Suppose that for
  $i = 1 \dots k-1$, $N(s_i)\setminus \{a_i,s_k\}\subseteq N(s_k)$.  If
  $a_k < s_k$ then:
  \begin{enumerate}[label=(\arabic*)]
  \item \label{Lm6:c1} For every $v\in N(s_k)\setminus \{a_k\}$ such
    that $v < s_k$, $\pi(v)\neq \pi(a_k)$.
  \item \label{Lm6:c2} $\pi(a_1)= \ldots=\pi(a_k)=1$ or
    $\pi(a_1)= \ldots = \pi(a_k)=2$. In particular, $\{a_1,
    \dots, a_k\}$ is a stable set of $G$. 
\end{enumerate}
\end{lemma}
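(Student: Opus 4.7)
The plan is to prove part~(1) by applying Lemma~\ref{l:remove} to the vertex $a_k$, and then to use part~(1) as a black-box to deduce part~(2).

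For part~(1), I consider the order ${\cal O}\setminus a_k$. Using that $a_k$ has degree~$2$ and that, by Lemma~\ref{Lm3} combined with the hypothesis $a_k<s_k$, its unique out-neighbour in ${\cal O}$ is $s_k$ (except in the edge case $a_k=v_1$, treated below), the order ${\cal O}\setminus a_k$ remains a connected order of $G\setminus a_k$: indeed $s_k$ keeps a predecessor-neighbour, for instance $s_{k-1}$. By minimality of $G$, the greedy algorithm on $(G\setminus a_k,{\cal O}\setminus a_k)$ is optimal, and by Lemma~\ref{l:remove} the colour of $v_n$ must drop strictly below $\chi(G)+1$. Assume, for contradiction, that some $v\in N(s_k)\setminus\{a_k\}$ with $v<s_k$ satisfies $\pi(v)=\pi(a_k)$. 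Then the two greedy colourings actually coincide on every vertex different from $a_k$: they agree strictly before $a_k$ (identical predecessor sets and colours); they agree on every vertex strictly between $a_k$ and $s_k$ (none is adjacent to $a_k$); they agree on $s_k$ itself because the set of colours appearing among its predecessor-neighbours is unchanged when $a_k$ is deleted, with $v$ serving as a colour-$\pi(a_k)$ replacement; and by induction they agree on every vertex after $s_k$. In particular $\pi(v_n)$ is unchanged, contradicting the drop guaranteed by Lemma~\ref{l:remove}. The edge case $a_k=v_1$, in which deletion of $a_k$ shifts the first vertex to $b_k$ and forces $\pi(b_k)$ to slip from $2$ to $1$, is dealt with by instead swapping $a_k$ and $b_k$ in ${\cal O}$---this yields a new connected order to which the same vertex-by-vertex comparison applies.

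For part~(2), set $c:=\pi(a_k)$, which lies in $\{1,2\}$ by Lemma~\ref{Lm3}. For each $i<k$, both neighbours of $a_i$ belong to $N(s_k)$: $s_i$ by the completeness of $s_k$ to $\{s_1,\ldots,s_{k-1}\}$, and $b_i$ because $b_i\in N(s_i)\setminus\{a_i,s_k\}\subseteq N(s_k)$ by hypothesis. In the generic case both are smaller than $s_k$ and distinct from $a_k$, so part~(1) gives $\pi(s_i)\neq c$ and $\pi(b_i)\neq c$; combined with $\pi(a_i)\in\{1,2\}$ (Lemma~\ref{Lm3}), this forces $\pi(a_i)=c$, except possibly in the sub-case $c=2$ where no predecessor-neighbour of $a_i$ has colour~$1$. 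I rule out that sub-case by exploiting that $c=2$ forces $\pi(b_k)=1$, and tracing colours along ${\cal O}$ using Lemmas~\ref{Lm1} and~\ref{Lm5}. The remaining ``twin'' configurations in which $b_i=a_k$ (and hence $a_i=b_k$) are eliminated directly: the two degree-$2$ vertices receive colours $1$ and $2$ in some order, and applying part~(1) to the smaller one produces a contradiction. Once $\pi(a_1)=\cdots=\pi(a_k)$, the fact that $\{a_1,\ldots,a_k\}$ is a stable set is immediate, as vertices of the same colour are never adjacent.

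The main obstacle I expect is not the overall plan but the combinatorial case analysis, particularly the edge case $a_k=v_1$ in part~(1) and the sub-case $c=2$ in part~(2). The previously established structural lemmas on minimally bad graphs should suffice for both, though delicate bookkeeping is required.
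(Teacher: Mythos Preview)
Your plan for part~(1) in the generic case ($a_k\neq v_1$) is correct and matches the paper. The edge case $a_k=v_1$, however, is not handled by your swap. After swapping $a_k$ and $b_k$ in~$\cal O$, the new first vertex $b_k$ receives colour~$1$ instead of~$2$, and this change propagates through the later neighbours of $b_k$; the swapped colouring need not agree with~$\pi$ from $v_3$ onward. In particular you cannot carry the hypothesis $\pi(v)=\pi(a_k)=1$ over to the new order (where $a_k$ now has colour~$2$), and since the swapped order is still an order on~$G$ itself, minimality tells you nothing about it. The paper's fix is different: it runs the greedy algorithm \emph{starting with colour~$2$} (Lemma~\ref{Lm:start}) on $(G\setminus a_k,\,{\cal O}\setminus a_k)$. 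This forces the first vertex~$b$ to receive colour $2=\pi(b)$, so all subsequent colours coincide with~$\pi$, while Lemma~\ref{Lm:start} guarantees optimality on the good graph $G\setminus a_k$, yielding the contradiction.

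Your argument for part~(2) has a more serious gap. You assert that the second neighbour $b_i$ of $a_i$ lies in $N(s_i)\setminus\{a_i,s_k\}$, but nothing in the hypotheses makes $b_i$ adjacent to~$s_i$; in fact, in every application of this lemma later in the paper, $b_i\notin N(s_i)$. Hence you cannot conclude $b_i\in N(s_k)$, and your analysis of the neighbours of~$a_i$ collapses (along with the ``twin'' and $c=2$ sub-cases built on it). The paper sidesteps this entirely by looking at $s_i$ instead of~$a_i$: every neighbour of $s_i$ other than $a_i$ and $s_k$ lies in $N(s_k)\setminus\{a_k\}$ by hypothesis (note $a_k\notin N(s_i)$ since $N_S(a_k)=\{s_k\}$), so by part~(1) no in-neighbour of $s_i$ except possibly $a_i$ has colour $c:=\pi(a_k)$. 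Assuming $\pi(a_i)=3-c$, no in-neighbour of $s_i$ has colour~$c$ at all. If $c=1$ this gives $\pi(s_i)=1$, contradicting part~(1) applied to $v=s_i$; if $c=2$ it gives $\pi(s_i)\le 2$, hence $\pi(s_i)=1$ by part~(1), impossible since $s_ia_i\in E(G)$ and $\pi(a_i)=1$. This handles both values of~$c$ in two lines and needs no discussion of~$b_i$.
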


\begin{proof}
  To prove~\ref{Lm6:c1}, suppose that there exists a vertex
  $v\in N(s_k)\setminus \{a_k\}$ such that $v < s_k$ and
  $\pi(v)= \pi(a_k)$.  Let $b\neq s_k$ be the second neighbor of
  $a_k$.  Since $a_k < s_k$, by Lemma~\ref{Lm3}, $a_k$ is the source
  of $G$, or $b < a_k < s_k$. In either case, we can see that
  ${\cal O}\setminus a_k$ is a connected order for $G\setminus a_k$,
  because $s_{k-1} < s_k$ (and $k\geq 2$).
 
  If $a_k$ is not the source of $G$, order ${\cal O}\setminus a_k$
  gives an optimal colouring $\pi'$ of $G\setminus a_k$ because $G$ is
  minimally bad.  Morevover, for every vertex $u\neq a_k$ in $G$, we
  have $\pi'(u) = \pi(u)$. For $u=b$ this is because $b<a_k$, for the
  other $u<s_k$ this is because $a_k$ brings no constraint to $u$ and for
  $u=s_k$, this is because the only constraint brought by $a_k$ is
  also brought by $v$ (because $\pi(v) = \pi(a_k)$).  So, $\pi$ is an
  optimal colouring of $G$, a contradiction.

  If $a_k$ is the source of $G$, then $b=v_2$ because $s_{k-1} < s_k$.
  Hence, $\pi(b)=2$. We consider the greedy algorithm starting with color $2$ applied to $(G\setminus a_k,{\cal O}\setminus a_k)$. This is a
  connected order, and it therefore provides an optimal colouring
  $\pi'$ of $G\setminus a_k$ by Lemma \ref{Lm:start}.  Again, for every vertex of $G$,
  $u\neq a_k$, we have $\pi'(u) = \pi(u)$, because the only constraint
  brought by $a_k$ is given to $s_k$, and $v$ gives the same
  constraint.  So, $\pi$ is an optimal colouring of $G$, a
  contradiction.
  
  Let us now prove~\ref{Lm6:c2}.  By Lemma~\ref{Lm3}, we know that
  for $i=1, \dots ,k$, $\pi(a_i)=1$ or $\pi(a_i)=2$. If $\pi(a_k)=1$,
  then suppose that for some $i<k$, $\pi(a_i)=2$. No neighbor of $s_i$
  smaller than $s_i$ has colour 1: for $a_i$ by assumption, and all
  others are in $N(s_k) \setminus \{a_k\}$, so we know this
  by~\ref{Lm6:c1}.  Hence, $\pi(s_i)=1$, contradicting~\ref{Lm6:c1}.
  If $\pi(a_k)=2$, the proof is similar.
 \end{proof}

\begin{lemma}
  \label{Lm7}
  Suppose that $G$ is claw-free.  Let $s_1, s_2$ be two vertices in
  $G$ such that $s_1< s_2$ and $s_1s_2\in E(G)$. Let $a_1, a_2$ be
  distinct vertices of degree $2$ in $G$, such that
  $a_1s_1, a_2s_2 \in E(G)$ and $a_2<s_2$.  Suppose that
  $N(s_1) \setminus \{a_1, s_2\} = N(s_2) \setminus \{a_2, s_1\} = K$,
  where $K$ is a non-empty clique. Suppose that $\{s_1, s_2\}$ is a cutset
  in $G$ and $C_1$, $C_2$ are two connected components of
  $G\setminus \{s_1, s_2\}$ such that $a_1, a_2 \in C_1$ and $K\subseteq C_2$.

  So, $\pi(a_1)=\pi(a_2)=2$, $s_1<a_1$ and there exist vertices
  $v\in K$, $p, q\in C_2\setminus K$ such that $vpq$ is a triangle,
  $v<s_1$ and $v<s_2$.
\end{lemma}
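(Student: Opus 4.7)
The plan is to apply Lemma~\ref{Lm6} with $k=2$ to $S=\{s_1,s_2\}$ together with $a_1,a_2$: the hypotheses are all immediate ($s_2$ is complete to $\{s_1\}$ via $s_1s_2\in E(G)$, the equality $N(s_1)\setminus\{a_1,s_2\}=K=N(s_2)\setminus\{a_2,s_1\}$ gives the required containment, and $a_2<s_2$ is given). This yields $c:=\pi(a_1)=\pi(a_2)\in\{1,2\}$, stability of $\{a_1,a_2\}$, and the colour restriction $\pi(v)\neq c$ for every $v\in N(s_2)\setminus\{a_2\}$ with $v<s_2$; in particular $\pi(s_1)\neq c$ and $\pi(k)\neq c$ for every $k\in K$ with $k<s_2$.

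To upgrade to $c=2$, I argue by contradiction, assuming $c=1$. The colour restriction above says that within $N(s_2)\cap G_{<s_2}$ only $a_2$ carries colour~$1$, so the colour-$1$ constraint $a_2$ imposes on $s_2$ is otherwise redundant. I would build an alternative connected order $\mathcal O'$ on $V(G)$, obtained from $\mathcal O$ by moving $a_2$ to the slot just after $s_2$ together with a local re-ordering of the clique $\{s_1,s_2\}\cup K$, and verify that its greedy colouring eventually reaches $v_n$ with a colour at most $\chi(G)$. The technical heart is to track how the colours within the clique $\{s_1,s_2\}\cup K$ shift once $a_2$ no longer blocks colour $1$ at $s_2$: Lemma~\ref{Lm6}(1) together with the clique structure drives this propagation so that the greedy colouring under $\mathcal O'$ is optimal, contradicting Lemma~\ref{Lm2}.

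To prove $s_1<a_1$, assume $a_1<s_1$ for contradiction, and let $b_1$ be the other neighbour of $a_1$. Lemma~\ref{Lm3} combined with the now-established $\pi(a_1)=2$ rules out $a_1=v_1$ and forces $b_1<a_1<s_1$ with $\pi(b_1)=1$. If some $k\in K$ lies below $s_1$, then $\mathcal O\setminus a_1$ is a connected order on $G\setminus a_1$; minimality then gives an optimal colouring $\pi'$, and a direct comparison (the only colour $a_1$ forbids at $s_1$ is $2$, and by the structure enforced by Lemma~\ref{Lm6}(1) this change propagates no further than the clique $\{s_1,s_2\}\cup K$) shows $\pi'(v_n)=\pi(v_n)\leq\chi(G)$, contradicting Lemma~\ref{Lm2}. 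If instead no $k\in K$ lies below $s_1$, then $s_1$'s only pre-coloured neighbour is $a_1$, forcing $\pi(s_1)=1$; a local rearrangement of $a_1$ and $s_1$ in $\mathcal O$ (using $b_1$ as the link to $s_1$) yields another connected order whose greedy colouring is optimal, again contradicting Lemma~\ref{Lm2}.

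Finally, for the triangle: set $v:=\min(K)$. The predecessor-neighbour of $s_1$ in $\mathcal O$ lies in $N(s_1)\setminus\{s_2,a_1\}=K$ (since $s_2>s_1$ and $a_1>s_1$ by the preceding step), so $v<s_1<s_2$. By connectedness of $\mathcal O$ at $v$, together with the fact that $\{s_1,s_2\}$ separates $v\in C_2$ from all other components, the predecessor-neighbour of $v$ (or $v_2$ if $v=v_1$) must lie in $M:=N(v)\cap(C_2\setminus K)$, yielding some $p\in M$. Claw-freeness at $v$ applied to any triple $\{s_1,m,m'\}\subseteq N(v)$ with $m,m'\in M$ distinct forces $mm'\in E$ (since $s_1$ is non-adjacent to every vertex of $C_2\setminus K$), so $M$ is a clique. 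The main obstacle is to show $|M|\geq 2$: I expect this to follow from a further minimality argument using that if $M=\{p\}$, then $K$ is a clique cutset separating the component of $p$ in $G\setminus K$ from $C_1\cup\{s_1,s_2\}$, and a careful application of Lemma~\ref{ColClique} together with Lemma~\ref{C1} would produce an optimal colouring of $G$, contradicting Lemma~\ref{Lm2}. Once $|M|\geq 2$, any two distinct $p,q\in M$ form the required triangle $vpq$.
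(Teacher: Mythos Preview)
Your proposal has genuine gaps in each of its three main steps, and misses the key idea the paper uses.

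\textbf{Ruling out $c=1$ and showing $s_1<a_1$.} Your reordering arguments are vague, and at least one is wrong: in your sub-case (b) you want to ``use $b_1$ as the link to $s_1$'', but $b_1$ is \emph{not} adjacent to $s_1$ (the only neighbours of $a_1$ are $b_1$ and $s_1$, and $b_1\in C_1$ while $N(s_1)\cap C_1=\{a_1\}$). More fundamentally, removing or relocating $a_1$ or $a_2$ can change $\pi(s_1)$ or $\pi(s_2)$, and you give no mechanism to control how that change propagates through the rest of $\cal O$; the claim that it ``propagates no further than the clique'' is unsupported. The paper avoids reordering entirely: it first shows $v:=\min(K)<s_2$ via Lemma~\ref{ColClique}, then in the case $c=1$ uses Lemma~\ref{Lm6}\ref{Lm6:c1} to get $\pi(v)\neq1$, finds an in-neighbour $x$ of $v$ with $\pi(x)=1$ in $C_2\setminus K$, and applies Lemma~\ref{Lm1} twice (to $G_{<v}$ and $G_{<s_1}$) to force $v<s_1<a_1$. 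Then every in-neighbour of $s_1$ lies in $K$ and none has colour~$1$, so $\pi(s_1)=1$; but $s_1a_1\in E$ and $\pi(a_1)=1$, contradiction.

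\textbf{The triangle.} Your plan to obtain $|M|\ge 2$ by a clique-cutset argument does not work: Lemma~\ref{ColClique} requires $\max(S)<\min(C)$, but here your candidate component in $C_2\setminus K$ contains $p<v=\min(K)$, so the hypothesis fails; and for the other side you have no control over $\max(K)$ versus $s_1$. There is in fact no purely structural reason for $|M|\ge 2$. The paper's decisive idea is to use the \emph{colour} of $v$: once $c=2$, one first shows $\pi(s_1)=1$ (again via Lemma~\ref{Lm6}\ref{Lm6:c1} and connectivity), hence $\pi(v)\neq 1$ by adjacency and $\pi(v)\neq 2$ by Lemma~\ref{Lm6}\ref{Lm6:c1}, so $\pi(v)\ge 3$. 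The greedy rule then forces $v$ to have in-neighbours $p$ and $q$ of colours $1$ and $2$; both lie in $C_2\setminus K$ (not in $K$ since $v=\min K$; not $s_1,s_2$ by colour and order), and claw-freeness at $v$ with $s_1$ gives $pq\in E$. The inequalities $v<s_1<a_1$ then drop out of Lemma~\ref{Lm1} as before.
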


\begin{proof}
  We first prove that $v=\min(K)<s_2$.  Otherwise, $s_2 <v$. Also,
  $v=\min(C_2)$ because $\cal O$ is connected.  In $G_{\leq s_2}$, $s_1$
  and $s_2$ both have degree at most 2, so
  $\pi(s_1), \pi(s_2) \in \{1, 2, 3\}$. In $G_{\leq v}$, $v$ has
  degree 2, so $\pi(v) \in \{1, 2, 3\}$.  Hence, the clique cutset
  $S= \{s_1, s_2\}$, $C_2$ and $v$ contradict Lemma~\ref{ColClique}.
  This proves our claim.

 By Lemma~\ref{Lm6}, we consider two cases.

\smallskip
  \noindent{\bf Case 1:} $\pi(a_1)=\pi(a_2)=1$.

  By Lemma~\ref{Lm6}, $\pi(v)\neq 1$.  So, there exists
  $x$ adjacent to $v$ with $x<v$ and $\pi(x)=1$.
  Note that $x\notin K$ because $x<v$ and $x\notin\{s_1, s_2\}$
  because $\pi(x) = 1$.  If $s_1<v$, then $G_{<v}$ is
  disconnected ($x$ and $s_1$ are in different components). Therefore,
  $v<s_1$.  We then have $s_1<a_1$ for otherwise, $G_{<s_1}$ is
  disconnected ($a_1$ and $v$ are in different components).  So,
  $\pi(s_1)=1$ (since no vertex smaller than $s_1$ in $K$ has
  colour $1$ by Lemma \ref{Lm6}).  This is a contradiction because
  $\pi(a_1)=1$.  

\smallskip
  \noindent{\bf Case 2:} $\pi(a_1)=\pi(a_2)=2$.

  First, $\pi(s_1)=1$ since if $\pi(s_1)\geq 2$, there exists a vertex
  $u\in K$ such that $u<s_1$ and $\pi(u)=1$.  So, $s_1 < a_1$ for
  otherwise $G_{<s_1}$ is disconnected.  Since by Lemma~\ref{Lm6} no
  vertex smaller than $s_1$ in $K$ receives colour $2$,  $s_1$
  receives colour~2, a contradiction. 

  By Lemma~\ref{Lm6}, $\pi(v)\neq 2$ and because of $s_1$,
  $\pi(v)\neq 1$.  So, $\pi(v)\geq 3$.  Hence, some in-neighbor $q$ of
  $v$ ($q\notin K$) satisfies $\pi(q)=2$. If $s_1<v$, then $G_{<v}$ is
  disconnected ($s_1$ and $q$ are in different components).  So,
  $v<s_1$. Therefore, $v$ must have an in-neighbor $p\notin K$, with
  $\pi(p)=1$.  Now, $pq\in E(G)$ since $G$ is claw-free.  Finally,
  $s_1< a_1$, for otherwise $G_{< s_1}$ is disconnected ($v$ and
  $a_1$ are in different components).
\end{proof}

\section{Forbidden structures of minimally bad graphs}

\label{S:3}

Throughout this section, let $G$ be a minimally bad claw-free graph that is not an obstruction. 

A graph $H$ is a \textit{cap} in $G$ if:
\begin{itemize}
	\item $V(H)=K\cup V(P)$.
	\item $K$ is a clique disjoint from $P$, $K=L\cup R\cup C$
          such that $L$, $R$, $C$ are non-empty.
	\item $P$ is a flat path in $G$ of odd length $\geq 1$ with two ends $a$, $b$.
	\item $a$ is complete to $L$, $b$ is complete to $R$.
	\item These are the only edges in $H$.
	\item No vertex in $L \cup R \cup V(P)$ has a neighbor in
          $G\setminus H$.
\end{itemize} 

\begin{lemma} \label{Lm:cap}
$G$ does not contain a cap.
\end{lemma}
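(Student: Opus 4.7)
The plan is to assume for contradiction that $G$ contains a cap $H = K \cup V(P)$ as specified (with $K = L \cup R \cup C$ the clique, and $P$ a flat path of odd length $\geq 1$ with endpoints $a$ complete to $L$ and $b$ complete to $R$) and derive a contradiction. A first observation is that $\chi(H) = |K|$: the clique $K$ gives $\chi(H) \geq |K|$, and a valid $|K|$-colouring is built by giving $L$ the colours $\{1, \dots, |L|\}$, $R$ the colours $\{|L|+1, \dots, |L|+|R|\}$, $C$ the remaining colours, $a$ the colour $|L|+1$, $b$ the colour $1$, and alternating $\{1, |L|+1\}$ on the internal vertices of $P$; the odd length of $P$ guarantees this alternation is consistent at both ends.

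I would next rule out the case $G = H$ by showing that the cap $H$ is a good graph. In $H$, internal vertices of $P$ have degree $2$, the endpoints $a$ and $b$ have degrees $|L|+1$ and $|R|+1$, both bounded by $|K|-1$ (using $|R|+|C| \geq 2$ and $|L|+|C| \geq 2$), and vertices of $C$ have degree $|K|-1$; only vertices of $L \cup R$ can have degree as large as $|K|$. Consequently a putative vertex $v \in L$ receiving colour $|K|+1$ in some greedy connected order would force $a$ to carry the unique colour of $\{1, \dots, |K|\}$ missing among the $|K|-1$ colours of the clique $K \setminus \{v\}$. A counting argument on $a$'s bounded degree and its colour-supplying neighbours (all contained in $L \cup \{p_1\}$, where $p_1$ is $a$'s unique path-neighbour) rules this out, and the analysis of $v \in R$ is symmetric.

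Thus $B := V(G) \setminus V(H) \neq \emptyset$, and since no vertex of $L \cup R \cup V(P)$ has a neighbour in $B$, the set $C$ is a clique cutset of $G$ separating $A := L \cup R \cup V(P)$ from $B$. Lemma~\ref{C1} applied to this cutset says at most one of $\max(A), \max(B)$ exceeds $\max(C)$, and $v_n$ lies either in $C$ or in the unique such component. I would split into cases on the location of $v_n$. When $v_n \in A$, Lemmas~\ref{C2}, \ref{C3}, \ref{l:endFP} together with the above degree bounds immediately contradict $\pi(v_n) = \chi(G)+1$ in every subcase (internal to $P$, equal to $a$ or $b$, or in $L \cup R$). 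When $v_n \in C$, Lemma~\ref{ColClique} applied to a suitable sub-cutset yields the contradiction. When $v_n \in B$, so that $\max(A) < \max(C)$, the plan is to reshuffle the order on $V(H)$ (in the spirit of the swap used to prove Lemma~\ref{ColClique}), replacing $\mathcal{O}[V(H)]$ by a connected order whose greedy colouring of $V(H)$ uses only the $|K|$ optimal colours while leaving the colour pattern seen by $B$'s external neighbours on $C$ intact, and thus leaving the greedy colouring of $B$ unchanged.

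The main obstacle is this last case $v_n \in B$. One must exhibit a reordering of $V(H)$ that (i) remains connected when concatenated with $\mathcal{O}$ restricted to $B$, (ii) reproduces on $C$ the colour pattern required by the coloured external neighbours of $C$ in $B$ (so that $\pi$'s restriction to $B$ is preserved), and (iii) simultaneously colours the cap with only $|K|$ colours. This is precisely where the odd length of $P$ and the completeness of $a$ to $L$ and $b$ to $R$ are indispensable: the parity allows $a$ and $b$ to absorb two particular colours of $L$ and $R$ without enlarging the palette, while the completeness conditions ensure the reshuffled colouring on $H$ remains proper.
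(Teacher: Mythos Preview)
Your approach diverges substantially from the paper's, and the case $v_n \in B$ has a genuine gap that I do not see how to close.

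The paper never splits on the location of $v_n$. Instead, it works entirely inside the flat path $P$ and the clique $K$. With $b = \max(V(P))$, it first shows (via Lemma~\ref{l:remove}) that no $v\in K$ with $v<b$ can share a colour with~$b$; then it shows that if $\pi(a)\in\{1,2\}$ every vertex of $K$ comes after $b$, so that the greedy run on $G\setminus P$ coincides with a ``greedy starting with colour $1$ or $2$'' on $G\setminus P$ and is optimal by Lemma~\ref{Lm:start}, a contradiction. Hence $\pi(a)\ge 3$, which forces $a<a'$, so $P$ is well ordered and the parity of $P$ gives $\pi(b)=1$, contradicting the first claim. No reshuffling and no cutset bookkeeping is needed.

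Your reshuffle plan for $v_n\in B$ breaks at two points. First, the colours that $\pi$ assigns to $C$ are determined by \emph{all} in-neighbours of $C$, including those in $B$; a greedy colouring of $H$ alone cannot in general reproduce an arbitrary prescribed colour pattern on $C$ (for instance, a vertex of $C$ may receive colour $5$ in $\pi$ only because of $B$-neighbours of colours $1,2,3,4$, and no greedy order on $H$ will ever give it colour $5$ first). Second, even if you could match the colours on $C$, you have not produced a contradiction: your new order $\mathcal{O}'$ still colours $v_n\in B$ with $\chi(G)+1$, so it is just another bad order of $G$. To contradict minimality you would need a bad connected order on a \emph{proper} induced subgraph, and dropping $A$ does not work because the colours on $C$ then change. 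The invocation of Lemma~\ref{ColClique} for $v_n\in C$ is also only a gesture: that lemma requires a component lying entirely above the cutset with its minimum complete to the cutset, and you have not identified such a configuration.

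Your handling of $v_n\in L\cup R$ (and the preliminary claim that the cap itself is good) is plausible but only sketched; the counting you allude to needs the observation that $a$'s in-neighbours lie in $L\cup\{a'\}$ and hence $\pi(a)\le |L|+2$, together with a comparison against the missing colour in $K\setminus\{v\}$, and this still requires care. In any case, the decisive case is $v_n\in B$, and there the argument as written does not go through.
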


\begin{proof}
  \setcounter{claim}{0}
  Suppose $G$ contains a cap $H$ and $K$, $L$, $R$, $C$, $P$, $a$, $b$
  are defined as in the definition of a cap. Let $a'$ and $b'$ be the
  vertices adjacent to $a$ and $b$ in $P$, respectively.  By
  Lemma~\ref{l:endFP}, we may assume up to symmetry that
  $b=\max(V(P))$.

  \begin{claim} 
    \label{cap:c2}
    For every vertex $v\in K$, if $v<b$, then $\pi(v)\neq \pi(b)$.
  \end{claim}

    Otherwise, there exists $v<b$ such that $\pi(v) = \pi(b)$. Note in
    particular that by Lemma~\ref{Lm2}, $b\neq v_n$.  Also, the
    existence of $v$ implies that ${\cal O}\setminus b$ is a connected order for $G \setminus b$.  We then see that for every vertex
    $y\neq b$,
    $\pi_{G\setminus b, {\cal O}\setminus b}(y) = \pi_{G, {\cal
        O}}(y)$, a contradiction to Lemma~\ref{l:remove}.
    \epc

  \begin{claim} 
    \label{cap:c3} If $\pi(a)=1$ or $\pi(a)=2$, then every
    vertex $v\in K$ satisfies $v>b$.
  \end{claim}

    Since $\pi(a)=1$ or $\pi(a)=2$, we have $\pi(b') = \pi(a)$ by
    Lemma~\ref{C3} and the parity of $P$. Suppose that there exists a
    vertex $v\in K$ with $v<b$.  Then, there exists a vertex $u\in L$
    such that $u<b$, for otherwise $G_{<b}$ is disconnected ($v$ and
    $a$ are in different components). Since by Claim~\ref{cap:c2},
    $\pi(u)\neq \pi(b)$ and $u$ has no neighbor with colour $\pi(b)$,
    we have $\pi(u) < \pi(b)$. But then, when the greedy algorithms
    visits $b$, colour $\pi(u)$ is available for $b$ (because
    $\pi(u) \neq \pi(a) = \pi(b')$ and $u$ is complete to $R$), a
    contradiction.
\epc

  By Claim~\ref{cap:c3}, if $\pi(a)=1$ or $\pi(a)=2$, then
  $\pi(b)=3-\pi(a)$ by the parity of $P$.  Since $\cal O$ is
  connected, in fact for every vertex $v$ in $V(G) \setminus V(P)$, $v > b$.
  So, when the greedy algorithm visits $G\setminus P$, the first
  vertex receives colour 1 or 2, and it gives exactly the same colours as the greedy algorithm starting with color 1 or 2 applied to ($G\setminus P$, ${\cal O}\setminus P$)
  . Hence, by Lemma~\ref{Lm:start}, we see that $\cal O$
  is a good order for $G$, a contradiction. Hence, $\pi(a)\geq 3$.

  This implies that $a$ has degree at least $2$ in $G_{\leq a}$, so
  $a$ has an in-neighbor $u$ in $L$.  So, if $a'<a$, then $G_{< a}$
  is diconnected ($a'$ and $u$ are in different components). Hence,
  $a<a'$.  So there exists a vertex in $L$ with colour $1$ (to ensure that
  $a$ has colour at least~3). Since $a < a'$, we know by
  Lemma~\ref{l:Pwo} that $P$ is well ordered. We therefore have
  $\pi(b')=\pi(a)\geq 3$ (if $b'=a$) or $\pi(b')=2$ by
  the parity of $P$. This implies $\pi(b)=1$, contradicting
  Claim~\ref{cap:c2}.
\end{proof}

A graph $H$ is an \textit{even birdcage} in $G$ if:
\begin{itemize}
	\item $V(H)=\cup_{i=1}^{k} V(P_i)  \cup C_a\cup C_b$ for some $k\geq 3$.
	\item $\forall i\in\{1,\ldots,k\}$, $P_i$ is a flat path in $G$ of even length $\geq 2$ with two ends $a_i$, $b_i$ (all $a_i$'s and $b_i$'s are distinct).
	\item $S_a=\{a_1,\ldots,a_k\}$ and $S_b=\{b_1,\ldots,b_k\}$ are two cliques.
	\item $K_a=C_a\cup S_a$ and $K_b=C_b\cup S_b$ are two cliques
          ($C_a$ and $C_b$ may be empty).
	\item If $C_a\neq \emptyset$, then $S_a$ is a clique cutset of $G$.
	\item If $C_b\neq \emptyset$, then $S_b$ is a clique cutset of $G$.
	\item These are the only edges in $H$.
	\item No vertex in $\cup_{i=1}^{k} V(P_i)$ has a neighbor in $G\setminus H$.
\end{itemize}

\begin{lemma} \label{Lm:even-birdcage}
$G$ does not contain an even birdcage.
\end{lemma}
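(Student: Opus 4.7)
The plan is to assume that $G$ contains an even birdcage $H$ and derive a contradiction, in the spirit of the proof of Lemma~\ref{Lm:cap}, by applying Lemma~\ref{Lm6} on the $S_b$-side and exploiting the parity of the paths. For every $i$, let $a_i'$ (resp.\ $b_i'$) denote the internal neighbour of $a_i$ (resp.\ $b_i$) on $P_i$; each is a vertex of degree~$2$ in~$G$. By the symmetry between the two sides of the birdcage, I may assume $\max(S_a\cup S_b)\in S_b$, and after relabelling the paths, that $b_k=\max(S_b)$. Then $\max(V(P_k))\in\{a_k,b_k\}$ by Lemma~\ref{l:endFP}, and since $a_k<b_k$ we have $\max(V(P_k))=b_k$, hence $b_k'<b_k$.

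Next I apply Lemma~\ref{Lm6} to the clique $\{b_1,\dots,b_k\}$ (ordered so that $b_k$ is largest) together with the degree-two vertices $b_1',\dots,b_k'$. The hypotheses are satisfied: $S_b$ is a clique with $b_k$ complete to the rest, $N_{S_b}(b_i')=\{b_i\}$, and for $i<k$,
\[
N(b_i)\setminus\{b_i',b_k\}=(S_b\setminus\{b_i,b_k\})\cup C_b\subseteq N(b_k).
\]
The conclusion is that $\pi(b_1')=\cdots=\pi(b_k')=c$ for some $c\in\{1,2\}$, that $\{b_1',\dots,b_k'\}$ is stable, and that no in-neighbour of $b_k$ other than $b_k'$ receives colour~$c$. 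Since each $P_i$ has even length, it has an odd number of internal vertices whose colours alternate between $1$ and $2$ by Lemma~\ref{C3}, and therefore $\pi(a_i')=\pi(b_i')=c$ for every~$i$.

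To close the argument I would exhibit a connected order $\mathcal{O}'$ of $V(G)$ producing a colouring of $G$ with at most $\chi(G)$ colours, contradicting Lemma~\ref{Lm2}. The construction postpones all interior vertices of the paths $P_i$ to the end of $\mathcal{O}'$, keeping the relative order of the other vertices fixed. The non-interior part $G\setminus I$ (where $I:=\bigcup_i V(P_i)\setminus(S_a\cup S_b)$) is coloured by the induced order, which is connected whenever at least one of $C_a,C_b$ is non-empty thanks to the clique-cutset hypotheses linking $K_a$ and $K_b$ through $G\setminus H$; by the minimality of~$G$, this uses at most $\chi(G)$ colours. Each interior path vertex then receives a colour in $\{1,2,3\}\subseteq\{1,\dots,\chi(G)\}$ since it has degree~$2$ in $G$ and $\chi(G)\geq 3$ by Lemma~\ref{C2}. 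The main obstacle is the pure case $C_a=C_b=\emptyset$: the paths have no external neighbour, so $H=G$, while $G\setminus I$ becomes disconnected, preventing the re-ordering. In that case one must verify directly that such a $G$ is an obstruction of even-prism type (resembling $F_8$), contradicting our standing hypothesis that $G$ is not an obstruction. A secondary difficulty is that in the intermediate subcase where exactly one of $C_a,C_b$ is empty, the connectivity of $G\setminus I$ must be checked more carefully, possibly requiring a mirror application of Lemma~\ref{Lm6} on the $S_a$-side.
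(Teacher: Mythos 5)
Your opening correctly mirrors the paper's setup: after assuming $b_k = \max(S_a \cup S_b)$ you deduce $b_k' < b_k$ from Lemma~\ref{l:endFP}, apply Lemma~\ref{Lm6} to the clique $S_b$ with the degree-two vertices $b_i'$ to conclude $\pi(b_1')=\cdots=\pi(b_k')=c\in\{1,2\}$ and that no other in-neighbour of $b_k$ gets colour~$c$, and use Lemma~\ref{C3} together with the even parity of each $P_i$ to get $\pi(a_i')=c$ as well. All of this is also present in the paper's argument.

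The closing step, however, is fatally flawed. You propose to construct a different connected order $\mathcal{O}'$ of $V(G)$ using at most $\chi(G)$ colours and call this a contradiction to Lemma~\ref{Lm2}. But Lemma~\ref{Lm2} is a statement about the original bad order $\mathcal{O}$ only, namely $\pi_{G,\mathcal{O}}(v_n)=\chi(G)+1$; a minimally bad graph also admits many good connected orders, so producing one refutes nothing. The reordering trick used in the paper's Lemma~\ref{ColClique} only works because one additionally proves that the new order reproduces \emph{the same colouring} as $\mathcal{O}$ on the relevant piece of the graph, so that optimality transfers back to $\pi$; your $\mathcal{O}'$, which postpones every interior path vertex to the end, has no reason to agree with $\pi$ (and typically cannot, since in $\pi$ the forced high colour of $b_j$ comes precisely from those interior vertices). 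There are also secondary problems: the induced order $\mathcal{O}[V(G)\setminus I]$ need not be connected even when both $C_a$ and $C_b$ are non-empty, because some $a_i$ or $b_i$ may have its only in-neighbour among the deleted interior vertices, and nothing in the definition of an even birdcage links the $C_a$ side to the $C_b$ side through $G\setminus H$; in the application inside Lemma~\ref{L7} such a link would in fact contradict the maximality of the prism system. The paper's actual closing runs a finer case analysis on whether $c=1$ or $c=2$, tracks the direction of the flat paths via Lemma~\ref{l:Pwo}, and reaches contradictions through Lemma~\ref{ColClique} or Lemma~\ref{Lm2}; your proof would need an argument of that type rather than a global reordering.
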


\begin{proof}
  \setcounter{claim}{0}

  Suppose $G$ contains an even birdcage $H$, with the notation as in
  the definition of an even birdcage. Up to symmetry, we suppose
  $b_k = \max(S_a\cup S_b)$. 

  If $C_b\neq \emptyset$, let $C_1$ be the connected component of
  $G\setminus S_b$ that contains $S_a$ and let $C_2$ be the connected
  component of $G\setminus S_b$ that contains $C_b$.  Since $G$ is connected
  by Lemma~\ref{Lm1}, $C_1\neq C_2$. 

  If $C_b=\emptyset$, then $G\setminus S_b$ is connected (because $G$ is
  connected). We then set $C_1 = V(G) \setminus S_b$ and $C_2 = \emptyset$.

  \begin{claim}
    $\max(C_1)< b_k$.
  \end{claim}

    Set $v=\max(C_1)$.  If $v\in V(P_i)$ for some $i\in \{1, \dots,
    k\}$, then $v< b_k$ follows from Lemma~\ref{l:endFP}.  Otherwise,
    $G_{>\max(S_a)}$ is disconnected ($v$ and $b_k$ are in different
    components), a contradiction to Lemma~\ref{Lm1}.
\epc

For all $i\in\{1,\ldots,k\}$, let $a_i'$, $b_i'$ be the vertices in
$P_i$ adjacent to $a_i$, $b_i$ respectively. By Lemma
\ref{Lm6} applied to $S_b$, we consider the following two cases:

\smallskip
  \noindent{\bf Case 1:}
	For all $i\in\{1,\ldots,k\}$, $\pi(b_i')=1$.  

        Then, by the parity of $P$ and Lemma~\ref{C3},  we also have $\pi(a_i')=1$.

	\begin{claim} \label{Cl:2}
	$\forall i\in\{1,\ldots,k\}$, $b_i'<b_i$.
	\end{claim}

        Suppose for some $i\in\{1,\ldots,k\}$, $b'_i>b_i$. Then, when
        the greedy algorithm visits $b_i$, there must be an
        in-neighbor of $b_i$ with colour~1 (because $\pi(b_i) \neq
        1$). This vertex is an in-neighbor of  $b_k$ with colour~1, a
        contradiction to Lemma~\ref{Lm6}\ref{Lm6:c1}.
        \epc
        
	\begin{claim}
          If $C_b\neq\emptyset$, then $b_k  < \min(C_2)$.
	\end{claim}

        Set $v=\min(C_2)$. We know that $v_1\in C_1$ by Claim \ref{Cl:2} and Lemma \ref{Lm5} (applied to $S_b$). So, $v\in C_b$ since $\cal O$ is a
        connected order.  Also, since no vertex in $S_b$ has colour 1,
        $v$ receives colour 1.  Hence, by Lemma~\ref{Lm6}\ref{Lm6:c1},
        $b_k<v$. \epc

        \begin{claim}
          $C_a$ contains a vertex $x$ of colour $1$ (in particular,
          $C_a\neq\emptyset$).
        \end{claim}

        By Lemma~\ref{l:Pwo}, we may assume up to symmetry that $P_1$
        is well ordered.  Since $\pi(a_1)\neq 1$ and since colour 1
        does not appear in $S_a$, $a_1$ must have an in-neighbor in
        $C_a$ with colour~1. \epc

        Now if $C_b\neq \emptyset$, then $v=\min(C_b)$ receives colour
        1 and $b_1, \dots, b_k$ are coloured with colours
        $2, \dots, k+1$.  This contradicts Lemma~\ref{ColClique}.  If
        $C_b=\emptyset$ then $b_k=v_n$ and $\pi(v_n)=k+1\leq \chi(G)$
        (since $S_a\cup \{x\}$ is a clique of size $(k+1)$ in $G$),
        contradicting Lemma \ref{Lm2}.

        \smallskip
        \noindent{\bf Case 2:}
        $\pi(b_i')=2$ $\forall i\in\{1,\ldots,k\}$

        Then, by the parity of $P$ and Lemma~\ref{C3}, we also have
        $\pi(a_i')=2$. By Lemma~\ref{l:Pwo}, up to symmetry, we may
        assume that $P_1$ is well ordered. Suppose that
        $a_1>a'_1\geq b'_1 > b_1$ (the case $a_1<a'_1\leq b'_1 < b_1$ is similar).  Since $\pi(b'_1) = 2$, we have
        $\pi(b_1)=1$.  Again, by Lemma~\ref{l:Pwo} and up to symmetry,
        we may assume that $P_2$ is well ordered.  Since
        $\pi(b_2) \neq 1$, we must have $b_2>b'_2\geq a'_2 > a_2$ and
        $\pi(a_2) = 1$.  

        Now if $P_3$ is also well ordered, we must have $\pi(a_3) = 1$
        or $\pi(b_3) = 1$, a contradiction.  It follows by
        Lemma~\ref{l:Pwo} that the source of $G$ is an internal vertex
        of $P_3$.  By the parity of $P$, it follows that $\min(a_3,
        b_3)$ receives colour 1, a contradiction. 
\end{proof}

A graph $H$ is an \textit{odd birdcage} in $G$ if:
\begin{itemize}
\item
  $V(H)=C_a\cup C_b \cup (\cup_{i=1}^{k} V(P_i)) \cup
  (\cup_{i=1}^{m} K_i)$ for some $k\geq 3$, $m\geq 0$.
\item $P_1$ is a path of odd length $\geq 3$ with two ends $a_1$, $b_1$.
\item $\forall i\in\{2,\ldots,k\}$, $P_i$ is a flat path of odd length $\geq 3$ with two ends $a_i$, $b_i$. 
\item All $a_i$'s and $b_i$'s are distinct.
\item $S_a=\{a_1,\ldots,a_k\}$ and $S_b=\{b_1,\ldots,b_k\}$ are two cliques.
\item $K_a=C_a\cup S_a$ and $K_b=C_b\cup S_b$ are two cliques ($C_a$ and $C_b$ might be empty).
\item If $C_a\neq \emptyset$, then $S_a$ is a clique cutset of $G$.
\item If $C_b\neq \emptyset$, then $S_b$ is a clique cutset of $G$.
\item For $i\in\{1,\ldots,m\}$, $c_i$'s, $d_i$'s are  vertices of $P_1$ such that:
  \begin{itemize}
  \item They appear in $P_1$ in the following order: $a_1$, $c_1$, $d_1$,\ldots, $c_m$, $d_m$, $b_1$.
  \item $\forall i\in\{1,\ldots,m\}$, $c_id_i$ is an edge.
  \item $a_1P_1c_1$ and $d_mP_1b_1$ are flat paths of even length $\geq 2$.
  \item $\forall i\in\{1,\ldots,m-1\}$, $d_iP_1c_{i+1}$ is a flat path of odd length $\geq 3$.
  \item $\forall i\in\{1,\ldots,m\}$, $K_i$ is a non-empty clique complete to $\{c_i,d_i\}$.
  \item $\forall i\in\{1,\ldots,m\}$, $\{c_i,d_i\}$ is a cutset of $G$.
  \end{itemize}
\item These are the only edges in $H$.
\item No vertex in $\cup_{i=1}^k V(P_i)$ has a neighbor in $G\setminus H$.
\end{itemize}

\begin{lemma} \label{Lm:odd-birdcage}
  $G$ does not contain an odd birdcage.
\end{lemma}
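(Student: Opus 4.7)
The overall strategy is to follow the skeleton of the proof of Lemma~\ref{Lm:even-birdcage}, while using Lemma~\ref{Lm7} to exploit the additional structure along $P_1$ given by the edges $c_id_i$ and the attached cliques $K_i$. Let $H$ be an odd birdcage in $G$, and write $a_i'$ (resp.\ $b_i'$) for the neighbor of $a_i$ (resp.\ $b_i$) on $P_i$; for $i\in\{1,\dots,m\}$ let $c_i^-$ (resp.\ $d_i^+$) be the neighbor of $c_i$ (resp.\ $d_i$) on $P_1$ distinct from $d_i$ (resp.\ $c_i$). By the symmetry between $S_a$ and $S_b$, I assume $b_k=\max(S_a\cup S_b)$.

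First I apply Lemma~\ref{Lm6} to $S_b$ with the $b_i'$ as the private degree-$2$ neighbors: the hypotheses are verified exactly as in Lemma~\ref{Lm:even-birdcage} (using that $K_b=S_b\cup C_b$ is a clique), and I conclude that $\pi(b_1')=\cdots=\pi(b_k')$ with common value in $\{1,2\}$. Second, for each $i\in\{1,\dots,m\}$ I apply Lemma~\ref{Lm7} to the pair $(c_i,d_i)$: the required hypotheses are immediate from the birdcage definition --- $\{c_i,d_i\}$ is a cutset, $K_i$ is a non-empty clique equal to both $N(c_i)\setminus\{c_i^-,d_i\}$ and $N(d_i)\setminus\{d_i^+,c_i\}$, the vertices $c_i^-$ and $d_i^+$ have degree $2$ in $G$ (they are internal to flat subsegments of $P_1$), and they lie in the same component of $G\setminus\{c_i,d_i\}$ since they remain connected through the parallel paths $P_2,\dots,P_k$ and the cliques $S_a,S_b$, while $K_i$ lies in a different component. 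After picking the roles $(s_1,s_2,a_1,a_2)$ from $\{c_i,d_i,c_i^-,d_i^+\}$ that satisfy the order hypothesis $a_2<s_2$ (which, by Lemma~\ref{Lm3} and the location of the source, is always possible for at least one labelling), Lemma~\ref{Lm7} forces $\pi(c_i^-)=\pi(d_i^+)=2$.

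The case analysis on $m$ now yields the contradiction. If $m\geq 2$, the flat segment $d_iP_1c_{i+1}$ has odd length, so by Lemma~\ref{C3} its internal vertices alternate in colour between $1$ and $2$, and the two endpoints of this internal sequence must receive \emph{different} colours --- but Lemma~\ref{Lm7} forces both to be $2$. If $m=1$, propagating the alternation of Lemma~\ref{C3} along the even-length flat segments $a_1P_1c_1$ and $d_1P_1b_1$ starting from $\pi(c_1^-)=\pi(d_1^+)=2$ gives $\pi(a_1')=\pi(b_1')=2$; combined with the odd-length flat paths $P_i$ ($i\geq 2$), this forces $\pi(a_i')=1$ for every $i\geq 2$. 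Applying Lemma~\ref{l:Pwo} (at most one maximal flat path in $G$ is not well ordered), I then argue in the style of Case~2 of Lemma~\ref{Lm:even-birdcage} that these colour constraints together with the clique structure of $S_a$ cannot be simultaneously realised. If $m=0$, $P_1$ is flat of odd length~$\geq 3$ and $H$ becomes a ``generalised odd prism with clique caps''; the two-case argument of Lemma~\ref{Lm:even-birdcage} adapts with the single modification that the odd parity of the $P_i$ gives $\pi(a_i')=3-\pi(b_i')$ rather than the equality used in the even case, and the remaining sub-arguments (using Lemma~\ref{ColClique} when $C_b\neq\emptyset$ and a clique-size bound when $C_b=\emptyset$, together with Lemma~\ref{l:Pwo} and $k\geq 3$) carry over essentially unchanged.

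The main obstacle is the $m=1$ case: Lemma~\ref{Lm6} cannot be applied directly to $S_a$, because the degree-$2$ vertex attached to $\max(S_a)$ need not precede it in $\cal O$, so the contradiction must be pulled out by a more delicate analysis of the well-orderedness of the flat paths, the location of the source $v_1$, and the colour constraints inside the clique $S_a$.
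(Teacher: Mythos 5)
Your proposal reorganises the argument around the parameter $m$, and the most striking step is the $m\geq 2$ parity contradiction: if Lemma~\ref{Lm7} really gave $\pi(c_i^-)=\pi(d_i^+)=2$ for every pair $(c_i,d_i)$, then the odd-length flat segment $d_iP_1c_{i+1}$ would have its two interior endpoints coloured the same, contradicting the alternation of Lemma~\ref{C3}. That is a nice observation, but it rests on a claim that is not justified and, I believe, false in general: that for each $i$ one can always choose a labelling $(s_1,s_2,a_1,a_2)$ of $\{c_i,d_i,c_i^-,d_i^+\}$ satisfying the order hypothesis $a_2<s_2$ of Lemma~\ref{Lm7}. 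That hypothesis requires the degree-$2$ neighbour on $P_1$ of $\max(c_i,d_i)$ to \emph{precede} $\max(c_i,d_i)$ in the order. Lemma~\ref{Lm3} only tells you that a degree-$2$ vertex lies between its two neighbours (or is the source), so it is perfectly possible that $c_i<c_i^-$ and $d_i<d_i^+$ simultaneously --- both flat arms of $P_1$ at the cut pair can be directed \emph{away} from $\{c_i,d_i\}$, with $c_i$ and $d_i$ having in-neighbours only in $K_i$. Your invocation of ``the location of the source'' does not resolve this, since $c_i$, $d_i$ have degree $\geq 3$ and Lemma~\ref{Lm3} says nothing about them. This is precisely why the paper does not apply Lemma~\ref{Lm7} to an arbitrary pair: it first takes $x=\max(\cup V(P_i))$ and splits on whether $x$ is some $d_i$ or some $b_j$. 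Only when $x=d_i$ does Lemma~\ref{l:endFP} give $d_i'<d_i$, making Lemma~\ref{Lm7} applicable; and then the conclusion is used \emph{structurally} (the produced triangle forces $F_{12}$ if $m\geq 2$), not colour-by-colour at every pair. When $x=b_j$, the paper never touches Lemma~\ref{Lm7} and instead runs a chain of claims (forcing $\pi(d_m)=1$, $\pi(c_m')=1$, etc.) driven by Lemma~\ref{Lm6}, Lemma~\ref{Lm5} and Lemma~\ref{ColClique}.

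Beyond that, you explicitly leave the $m=1$ case unresolved (``the contradiction must be pulled out by a more delicate analysis''), and the $m=0$ case is asserted to ``carry over essentially unchanged'' from Lemma~\ref{Lm:even-birdcage} without the necessary check that all the intermediate claims survive the change in parity. Together with the unjustified Lemma~\ref{Lm7} step, the proposal does not constitute a proof; the correct fix is to split on the position of the global maximum as the paper does, rather than on $m$.
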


\begin{proof}
  \setcounter{claim}{0}
  Suppose $G$ contains an odd birdcage $H$ as in the definition of an
  odd birdcage. For $i\in\{1,\ldots,k\}$, let $a_i'$, $b_i'$ be the
  neighbors of $a_i$, $b_i$ in $P_i$ respectively. For
  $j\in\{1,\ldots,m\}$, let $c_j'$, $d_j'$ be the neighbors of $c_j$,
  $d_j$ in $P_1\setminus\{c_j, d_j\}$ respectively. Let
  $x = \max(\cup_{i=1}^m V(P_i))$. By Lemma~\ref{l:endFP}, we may
  assume that $x=d_i$ for some $i\in \{1, \dots, m\}$ or $x=b_j$ for
  some $j\in \{1, \dots, k\}$.

\smallskip
  \noindent{\bf Case 1:} $x=d_i$ for some $i\in \{1, \dots, m\}$.

  By Lemma~\ref{Lm7} applied to the cutset $\{c_i,d_i\}$, there exist
  some vertices $u\in K_i$, $p,q\notin K_i$ such that $u,p,q$ is a
  triangle and $\pi(c_i')=\pi(d_i')=2$, $c_i<c_i'$, $u<c_i$ and
  $u<d_i$. Then $m=i=1$ for otherwise $G$ contains $F_{12}$.  Let $C$
  be the component of $G\setminus\{c_1, d_1\}$ that contains $K_1$.  By
  Lemma~\ref{Lm5}, the source of $G$ is in $C$. Hence, by
  Lemma~\ref{l:Pwo}, all flat paths in $H$ are well ordered. In particular,
  $c_1'P_1a_1$ is a directed odd path from $c_1'$ to $a_1$. Since
  $\cal O$ is connected, $c_1'P_1a_1$ contains the first vertices of
  $\cup_{i=1}^m V(P_i)$, so that $\pi(a_1)=1$.  Also, $b_1P_1d_1'$ is a
  directed odd path from $b_1$ to $d_1'$, so $\pi(b_1)=1$ because
  $\pi(d'_1)=2$.

  Let $b_l= \min(b_1, \dots, b_k)$.  Note that if
  $C_b \neq \emptyset$, $b_l < \min(C_b)$ for otherwise $G_{<b_l}$ is
  disconnected ($c_1$ and $\min(C_b)$ are in different components).
  So, by the connectivity of $\cal O$, $b_l$ has an in-neighbor in
  $P_l$. Since $b_1P_1d_1'$ is a directed odd path from $b_1$ to
  $d_1'$, $b_l\neq b_1$.  Also, $P_l$ is well ordered, and since
  $\pi(a_l)\neq 1$ (because of $a_1$), we have $\pi(a'_l)=1$.  By the
  parity of $P_l$, it follows that $\pi(b_l) = 1$, a contradiction
  since $\pi(b_1)=1$.  

\smallskip
  \noindent{\bf Case 2:} $x=b_j$ for some $j$.

  By Lemma \ref{Lm6}, $\pi(b_i')=1$ $\forall i\in\{1,\ldots,k\}$ or
  $\pi(b_i')=2$ $\forall i\in\{1,\ldots,k\}$. Suppose first that
  $\pi(b_i')=1$ $\forall i\in\{1,\ldots,k\}$.  

  Then for every $i\in\{2,\ldots,k\}$, $\pi(a_i')=2$ by the parity of
  the flat paths. Also, for every $i\in\{2,\ldots,k\}$,
  $b_i'<b_i$ since otherwise $b_i$ has colour $\geq 2$ and
  there must exist in $K_b$ some in-neighbor of $b_i$
  having colour $1$, contradicting Lemma~\ref{Lm6}.  By
  Lemma~\ref{l:Pwo}, we may assume that $P_3$ is well ordered.  Since
  $\pi(a'_3) = 2$, we have $\pi(a_3)=1$. So, $\pi(a_2) \neq 1$, and
  since $\pi(a'_2)=2$, we have $a'_2<a_2$.  So, $P_2$ is not
  well ordered. By Lemma~\ref{l:Pwo}, the source of $G$, $v_1$, is an
  internal vertex of $P_2$ and $v_1P_2a_2$ and $v_1P_2b_2$ are both
  well ordered. If $a_2<b_2$, then $\pi(a_2) = 1$, a contradiction.  Hence,
  $b_2<a_2$ and $\pi(b_2)=2$. The vertex $v$ that comes just after
  $b_2$ in $\cal O$ cannot be $a_2$, because then, again, we would
  have $\pi(a_2)=1$. Hence, $v$ is in $K_b$ and receives
  colour~1, a contradiction to Lemma~\ref{Lm6}.

  Suppose now that $\pi(b_i')=2$ $\forall i\in\{1,\ldots,k\}$. Then
  for every $i\in\{2,\ldots,k\}$, $\pi(a_i')=1$ by the parity of the
  flat paths.

  \begin{claim} 
    \label{odd-bridcage:c2} If $C_b\neq \emptyset$, then for every
    vertex $v\in C_b$, $v>b_j$.
  \end{claim}
        
	Set $v = \min (C_b)$ and suppose $v < b_j$. 

        If $\pi(v)=1$ then $v<b_i$ for every $i\in\{1,\ldots,k\}$ for otherwise, $\min(S_b)$ would
        receive colour~1, a contradiction. Hence, by Lemma~\ref{Lm5},
        the source of $G$ is in the component of $G\setminus S_b$
        containing $v$.  Also, by Lemma~\ref{Lm6}, no in-neighbor of
        $b_j$ in $K_b$ has colour 2.  So, $\min(S_b)$ has no coloured
        neighbor in $\{b'_1, \dots, b'_k\}$ when the greedy algorithm
        visits it, so it receives colour $2$, a contradiction.

        If $\pi(v)\geq 3$, then there exists some vertex $q\notin C_b$
        adjacent to $v$ having colour $2$. If for some $i$, $b_i<v$,
        then $G_{<v}$ is not connected ($q$ and $b_i$ are in different
        components), a contradiction. Then $v<\min(S_b)$, and because
        $\pi(v)\neq 1$ there exists a vertex $p\notin C_b$
        adjacent to $v$ with colour $1$. We have $pq\in E(G)$ because
        $G$ is claw-free. Therefore $G$ contains $F_{12}$, a
        contradiction. \epc

	\begin{claim} \label{odd-bridcage:c3}
	$\pi(d_m)=1$.
	\end{claim}	

	Assume that $\pi(d_m)\neq 1$. We have $\pi(d_m')=2$ by the
        parity of flat path $b_1'P_1d_m'$. If $d_m<d_m'$ then
        $\pi(d_m)=1$, a contradiction. Therefore, $d_m'<d_m$.  If
        $\pi(b_1)>1$, then $b_1'<b_1$ and by Lemma~\ref{l:Pwo}, 
        $v_1\in b_1'P_1d_m'$.  So, if $b_1<d_m$,
        $\pi(b_1)=1$, a contradiction, and otherwise $\pi(d_m)=1$, a
        contradiction again. Hence, $\pi(b_1)=1$. Therefore for every
        $i\neq 1$, $b_i'<b_i$ for otherwise, $b'_i$ receives
        colour~1. Now, consider the cutset
        $S = \{d_m,b_2,\ldots,b_k\}$. Let $C$ be the component of
        $G\setminus S$ that contains $c_m$ and $D$ the component that
        contains $b_1$.  Note that $d_m$ must have an in-neighbor in
        $C$, for otherwise it would receive colour 1. So, we have
        $\min (C) < \min (S)$.  Also, $\min(D) < \min (S)$ because
        every vertex in $S$ has an in-neighbor in $D$ (clear for
        $d_m$, for the other ones, it follows from the fact that they
        have an in-neighbor with colour 1, namely $b_1$).  This
        contradicts Lemma~\ref{Lm5}. \epc

	\begin{claim} \label{odd-bridcage:c4}
	$\pi(c_m')=1$.
	\end{claim}	
	Suppose $\pi(c_m')=2$. Then $c_m'<c_m$ and $d_m<c_m$, so there
        exists an outneighbor $v$ of $c_m$ in $K_m$ (otherwise
        $c_m=v_n$, contradicting the maximality of $b_j$). Therefore
        $G_{>c_m}$ is disconnected ($v$ and $b_j$ are in different
        components), contradicting Lemma~\ref{Lm1}. \epc

	Similarly, we can prove $\pi(d_i)=\pi(c_i')=1$ for all
        $i\in\{1,\ldots,m\}$, and therefore $\pi(a_1')=1$. 
        
        We must have a vertex of colour $1$ in $C_a$, otherwise
        $a_i'<a_i$ for every $i\in\{1,\ldots,k\}$.  Also, for some
        $1\leq i' < i'' \leq k$, we have $b'_{i'}<b_{i'}$ and
        $b'_{i''}<b_{i''}$ because colour 1 appears at most one time in
        $S_b$.  Hence, $G$ has two sources, a contradiction. So, we
        have this vertex with colour 1 in $C_a$, and in particular,
        $C_a\neq\emptyset$, so that $\chi(G)\geq k+1$.

        If $C_b\neq \emptyset$ then let $C'$ be the component of
        $G\setminus S_b$ that contains $C_b$.  By
        Claim~\ref{odd-bridcage:c2}, we have $\max(S) < \min(C')$.
        Colours 1 and $3, \dots, (k+1)$ are used in $S_b$ and colour $2$
        is used for $v=\min (C_b)$. This contradicts
        Lemma~\ref{ColClique}.

        If $C_b=\emptyset$, then $b_j=v_n$, but
        $\pi(b_j)=k+1\leq \chi(G)$, contradicting Lemma \ref{Lm2}.
\end{proof}

A graph $H$ is a \textit{flower} in $G$ if:
\begin{itemize}
\item
  $V(H)=C_a\cup C_b\cup C_c\cup C_d\cup_{i=1}^k
  V(P_i)\cup_{i=1}^m V(Q_i)\cup V(P_{ac})\cup V(P_{bd})$ for some
  $k,m\geq 2$.
\item For $i\in \{1,\ldots,k\}$, $P_i$ is a flat path of odd length
  $\geq 3$ with two ends $a_i$ and $b_i$.
\item For $i\in \{1,\ldots,m\}$, $Q_i$ is a flat path of odd length
  $\geq 3$ with two ends $c_i$ and $d_i$.
\item $P_{ac}$ is a flat path of even length $\geq 2$ with two ends
  $a_0$ and $c_0$.
\item $P_{bd}$ is a flat path of even length $\geq 2$ with two ends
  $b_0$ and $d_0$.
\item All $a_i$'s, $b_i$'s, $c_i$'s, $d_i$'s are distinct.
\item $K_a=C_a\cup \{a_0,\ldots,a_k\}$ and
  $K_b=C_b\cup \{b_0,\ldots,b_k\}$ are cliques.
\item $K_c=C_c\cup \{c_0,\ldots,c_m\}$ and
  $K_d=C_d\cup \{d_0,\ldots,d_m\}$ are cliques.
\item If $C_a\neq \emptyset$, $\{a_0,\ldots,a_k\}$ is a clique cutset
  of $G$.
\item If $C_b\neq \emptyset$, $\{b_0,\ldots,b_k\}$ is a clique cutset
  of $G$.
\item If $C_c\neq \emptyset$, $\{c_0,\ldots,c_k\}$ is a clique cutset
  of $G$.
\item If $C_d\neq \emptyset$, $\{d_0,\ldots,d_k\}$ is a clique cutset
  of $G$.
\item These are the only edges in $H$.
\item No vertex in $G\setminus H$ has a neighbor in
  $H\setminus (C_a\cup C_b\cup C_c\cup C_d)$.
\end{itemize}

\begin{lemma} \label{Lm:flower} $G$ does not contain a flower.
\end{lemma}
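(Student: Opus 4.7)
The plan is to assume $G$ contains a flower $H$ and then mimic the argument of Lemmas~\ref{Lm:even-birdcage} and~\ref{Lm:odd-birdcage}: locate the global maximum of the flat paths in $H$, invoke Lemma~\ref{Lm6} at the clique hosting it to constrain the colours of its path-neighbours, and then exploit the parities of the paths together with Lemmas~\ref{l:Pwo},~\ref{CUnique},~\ref{Lm5} and~\ref{ColClique} to derive a contradiction. By Lemma~\ref{l:endFP}, the maximum of $\bigcup_{i=1}^k V(P_i) \cup \bigcup_{j=1}^m V(Q_j) \cup V(P_{ac}) \cup V(P_{bd})$ is an endpoint of some flat path, hence one of the vertices $a_i$ or $b_i$ (for $0 \le i \le k$) or $c_j$ or $d_j$ (for $0 \le j \le m$). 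The flower admits a $4$-fold symmetry generated by the involutions $a \leftrightarrow b$ with $c \leftrightarrow d$, and $(a,b) \leftrightarrow (c,d)$; using this I may assume that this global maximum is some $b_j$ with $j \in \{0, 1, \ldots, k\}$.

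The next step is to apply Lemma~\ref{Lm6} to the sub-clique $\{b_0, b_1, \ldots, b_k\} \subseteq K_b$ with $b_j$ as the maximum. Each $b_i$ has a unique degree-$2$ neighbour outside $K_b$ (the path-neighbour $b_i' \in P_i$ for $i \ge 1$, and the $P_{bd}$-neighbour $b_0'$ for $i = 0$), and the containment $N(b_i) \setminus \{b_i', b_j\} \subseteq N(b_j)$ holds because $K_b$ is a clique. The lemma then splits into two cases. In the first case, all $\pi(b_i') = 1$; by Lemma~\ref{C3} and the odd parity of $P_i$ I deduce $\pi(a_i') = 2$ for each $i \ge 1$, and by the even parity of $P_{bd}$ I deduce $\pi(d_0') = 1$. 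Since there are $k + m + 2 \ge 6$ flat paths in $H$, Lemma~\ref{l:Pwo} forces all but at most one of them to be well ordered, so the colour information at the $K_b$-end is transported rigidly to the other three cliques. Locating the maximum on the ``$d$-or-$c$ side'' of the flower (a side case depending on whether it is some $d_l$ or some $c_l$) and running the analogous Lemma~\ref{Lm6} analysis there should produce a forced colour-$1$ requirement on $C_a$ or $C_c$; combined with Lemma~\ref{ColClique} at the corresponding clique cutset this will violate Lemma~\ref{Lm2}. In the second case, all $\pi(b_i') = 2$; here the colour constraints flip, and I expect to use Lemma~\ref{l:Pwo} applied to two well-ordered paths of the same parity to force two distinct sources of $G$, contradicting Lemma~\ref{CUnique}, or else to invoke Lemma~\ref{Lm7} at one of the cutsets to uncover a triangle configuration already excluded (an obstruction such as $F_{12}$ or a previously ruled-out cage).

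The main obstacle will be the coherent bookkeeping around the four-clique ring structure: unlike the proofs of Lemmas~\ref{Lm:even-birdcage} and~\ref{Lm:odd-birdcage} which deal with a single ``cage'', the flower has two cages joined by two even bridges, so the colour propagation must travel $K_b \to P_{bd} \to K_d \to Q_j \to K_c \to P_{ac} \to K_a \to P_i \to K_b$. The total parity around this ring is even, so the flower is ``locally $2$-colourable'' in the weak sense, and the contradiction must exploit a finer asymmetry such as the uniqueness of the source (Lemma~\ref{CUnique}) or the strict clique-colour bound of Lemma~\ref{ColClique}. Carefully tracking which endpoint hosts the local maximum on each of the four sides, and ensuring that the corresponding applications of Lemma~\ref{Lm6} combine consistently, is where the real work lies.
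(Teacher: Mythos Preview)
Your plan is essentially the paper's proof: reduce by symmetry to $b_j = \max\{a_i,b_i,c_j,d_j\}$, apply Lemma~\ref{Lm6} at $K_b$ to split into $\pi(b_i')\equiv 1$ or $\pi(b_i')\equiv 2$, and then push colour constraints around the ring $K_b\to K_d\to K_c\to K_a$ via Lemmas~\ref{l:Pwo}, \ref{Lm5} and~\ref{ColClique}. Two corrections to your sketch: the long ring-propagation ending in Lemma~\ref{ColClique} is the content of Case~2 ($\pi(b_i')=2$), not Case~1 --- Case~1 is disposed of by the short source-location argument from the $\pi(b_i')=1$ subcase of Lemma~\ref{Lm:odd-birdcage}, needing only two of the $P_i$ --- and Lemma~\ref{Lm7} is not used anywhere in the flower proof. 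The ``real work'' you correctly anticipate is the chain of claims $\pi(d_0)=1$, $\pi(d_i')=1$ for all $i\ge 1$, $\pi(c_0)\ge 2$, $\pi(c_0')=1$, each proved by exhibiting a suitable cutset (of $G$ or of $G_{\le v}$) that would otherwise violate Lemma~\ref{Lm5}; once $\pi(a_0')=1$ follows by parity, the absence of a colour-$1$ vertex in $C_a$ would force two sources, and its presence triggers Lemma~\ref{ColClique} (or Lemma~\ref{Lm2} if $C_b=\emptyset$).
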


\begin{proof}
  \setcounter{claim}{0} Suppose $G$ contains a flower $H$ as in the
  definition of a flower. W.l.o.g, let
  $b_j=\max(a_0,\ldots,a_k,b_0,\ldots,b_k,c_0,\ldots,c_m,d_0,\ldots,d_m)$. Let
  $a_i'$, $b_i'$, $c_i'$, $d_i'$ be the unique vertices of degree $2$
  adjacent to $a_i$, $b_i$, $c_i$, $d_i$, respectively. By
  Lemma~\ref{l:endFP}, $b'_j<b_j$. Applying Lemma \ref{Lm6}, there are
  two cases:

\smallskip
  \noindent{\bf Case 1:} $\pi(b_i')=1$ $\forall i\in\{0,\ldots,k\}$.

  We omit the proof in this case, because it is similar to the case
  $\pi(b_i')=1$ in Lemma \ref{Lm:odd-birdcage} (but here, we have to
  consider only two flat paths $P_1$ and $P_2$).

\smallskip
\noindent{\bf Case 2:}
$\pi(b_i')=2$ $\forall i\in\{0,\ldots,k\}$: Then for every
$i\in\{1,\ldots,k\}$, $\pi(a_i')=1$ by the parity of the flat paths.

\begin{claim} \label{flower:c1} If $C_b\neq \emptyset$, then for every
  vertex $v\in C_b$, $v>b_j$.
\end{claim}

We omit the proof, because it is similar to the proof Claim
\ref{odd-bridcage:c2} in the proof of Lemma \ref{Lm:odd-birdcage}.
\epc

\begin{claim} \label{flower:c2}
  $\pi(d_0)=1$.
\end{claim}

We omit the proof, because it is similar to the proof of Claim
\ref{odd-bridcage:c3} in the proof of Lemma \ref{Lm:odd-birdcage}.
\epc

Let $b_l=\min(b_0,\ldots,b_k)$. It is clear that $\pi(b_l)=1$ and
$b_i'<b_i$ for every $i\in\{0,\ldots,k\}\setminus \{l\}$.

\begin{claim} \label{flower:c3} 
  For $i\in\{1,\ldots,m\}$,
  $\pi(d_i')=1$.
\end{claim}

          W.l.o.g suppose that for some $1\leq t\leq m$,
          $\{d_1',\ldots,d_t'\}$ is the subset of vertices of
          $\{d_1',\ldots,d_m'\}$ having colour $2$. Let
          $S=\{d_1,\ldots,d_t\}$. We have $d_i'<d_i$ for
          $i\in \{1,\ldots,t\}$ (otherwise $\pi(d_i')=1$). We also
          have $d_0<d_i$ for $i\in \{1,\ldots,t\}$ since $d_0$ is the
          only vertex of colour $1$ in $K_d$. W.l.o.g, assume that
          $d_t=\max(S)$. There does not exist a vertex $u$ in $K_d$
          such that $u<d_t$ and $\pi(u)=2$ by applying Lemma \ref{Lm6} for $S$. Therefore
          $\min(d_{t+1},\ldots,d_m)>\max(S)$, otherwise one of them
          would receive colour $2$.

          If $b_0=b_l$, applying Lemma \ref{Lm5} for the cutset
          $(S\cup \{b_1,\ldots,b_k\})\cap V(G_{\leq \max(S)})$ of
          $G_{\leq \max(S)}$, there are sources in both sides of this
          cutset, a contradiction.

          If $b_0\neq b_l$, applying Lemma \ref{Lm5} for the cutset
          $(S\cup \{b_0\})\cap V(G_{\leq \max(S)})$ of
          $G_{\leq \max(S)}$, there are sources in both sides of this
          cutset, a contradiction.
   
\epc

	By Claim \ref{flower:c3} and the parity of all the flat paths $Q_i$, $\pi(c_i')=2$ for every $i\in\{1,\ldots,m\}$.

	\begin{claim} \label{flower:c4}
	$\pi(c_0)\geq 2$.
	\end{claim}

	Suppose $\pi(c_0)=1$, then we have $c_0<c_i$ for
        $i\in\{1,\ldots,m\}$ since $c_0$ is the only vertex of colour
        $1$ in $K_c$. We also have $c_i'<c_i$ for $i\in\{1,\ldots,m\}$
        since otherwise some vertex in $\{c_1',\ldots,c_m'\}$ would
        have colour $1$.
	
	If $b_0=b_l$, applying Lemma \ref{Lm5} for the cutset
        $\{c_1,\ldots,c_m,b_1,\ldots,b_k\}$ of $G$, there are sources
        in both sides of this cutset, a contradiction.  
	
	If $b_0\neq b_l$, applying Lemma \ref{Lm5} for the cutset
        $\{c_1,\ldots,c_m,b_0\}$ of $G$, there are sources in both
        sides of this cutset, a contradiction.  \epc

	\begin{claim} \label{flower:c5}
	$\pi(c_0')=1$.
	\end{claim}

	Suppose $\pi(c_0')=2$, then $c_0'<c_0$. There must exist a
        vertex $v$ of colour $1$ in $K_c$ such that $v<c_0$.
	
	If $b_0=b_l$, applying Lemma \ref{Lm5} for the cutset
        $\{c_0,b_1,\ldots,b_k\}$ of $G$, there are sources in both
        sides of this cutset, a contradiction.
	
	If $b_0\neq b_l$, applying Lemma \ref{Lm5} for the cutset
        $\{c_0,b_0\}$ of $G$, there are sources in both sides of this
        cutset, a contradiction.  \epc

        By Claim \ref{flower:c5} and the parity of $P_{ac}$,
        $\pi(a_0')=1$. We must have a vertex of colour $1$ in $C_a$,
        otherwise $a_i'<a_i$ for every $i\in\{1,\ldots,k\}$ and we
        have at least two sources in $G$. Therefore, if
        $C_b\neq \emptyset$ then $b_0,\ldots,b_k$ receive colours $1$
        and $3,\ldots,k+2$ and $\min(C_b)$ receives colour $2$, a
        contradiction to Lemma \ref{ColClique}. If $C_b= \emptyset$
        then $b_j=v_n$, but $\pi(b_j)=k+2\leq \chi(G)$ (we have a
        clique of size $(k+2)$ in $K_a$), contradicting Lemma
        \ref{Lm2}.
\end{proof}

A graph $H$ is a \textit{sun} in $G$ if:
\begin{itemize}
\item $V(H)=V(I)\cup_{i=0}^k K_i$ for some $k\geq 0$.
\item $I$ is a hole.
\item For $i\in\{0,\ldots,k\}$, $a_i$'s, $b_i$'s are distinct vertices of $I$ such that: 
  \begin{itemize}
  \item They appear in the following clock-wise order: $a_0,b_0,\ldots,a_k,b_k$.
  \item For $\forall i\in\{0,\ldots,k\}$, $a_i$ is adjacent to $b_i$.
  \item For $\forall i\in\{0,\ldots,k\}$, the path in $I$ from $b_i$ to $a_{i+1}$ is a flat path of length $\geq 2$ (the subscript is taken modulo $(k+1)$).
  \end{itemize}
\item For $i\in\{0,\ldots,k\}$, $K_i$ is a non-empty clique complete to $\{a_i,b_i\}$.
\item For $i\in\{0,\ldots,k\}$, $\{a_i,b_i\}$ is a cutset of $G$.
\item These are the only edges in $H$.
\item No vertex in $G\setminus H$ has a neighbor in $I$.
\end{itemize}

\begin{lemma} \label{Lm:sun}
$G$ does not contain a sun.
\end{lemma}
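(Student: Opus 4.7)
Suppose $G$ contains a sun $H$, with the notation of the definition. By Lemma~\ref{l:endFP} applied to each flat path of $I$, $\max(V(I))$ is an end of one of the flat paths, and up to symmetry we may assume this maximum is $b_j$ for some $j\in\{0,\ldots,k\}$. Let $a_j'$ be the unique neighbor of $a_j$ in the flat path from $b_{j-1}$ to $a_j$ (indices taken modulo $k+1$), and let $b_j'$ be the unique neighbor of $b_j$ in the flat path from $b_j$ to $a_{j+1}$. Since these flat paths have length $\geq 2$, both $a_j'$ and $b_j'$ are internal to a flat path and therefore have degree $2$ in $G$.

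The main step is to apply Lemma~\ref{Lm7} with $s_1=a_j$, $s_2=b_j$, $a_1=a_j'$, $a_2=b_j'$, and $K=K_j$. The hypotheses follow directly from the definition of a sun: $a_j<b_j$ since $b_j=\max(V(I))$; $a_jb_j\in E(G)$; $a_j'\neq b_j'$ are distinct degree-$2$ vertices; $N(a_j)\setminus\{a_j',b_j\}=N(b_j)\setminus\{b_j',a_j\}=K_j$, a non-empty clique; and $\{a_j,b_j\}$ is a cutset with $K_j$ lying in one component $C_2$ and the path $I\setminus\{a_j,b_j\}$ together with every $K_i$ for $i\neq j$ lying in a different component $C_1$ (each such $K_i$ reaches this path via $a_i,b_i$). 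Lemma~\ref{Lm7} then provides vertices $v\in K_j$ and $p,q\in C_2\setminus K_j$ with $v<a_j$, $v<b_j$, and $vpq$ a triangle, together with $\pi(a_j')=\pi(b_j')=2$ and $a_j<a_j'$. Since no vertex of $G\setminus H$ is adjacent to $I$ and every $K_i$ with $i\neq j$ lies in $C_1$, the vertices $p$ and $q$ are forced to lie in $G\setminus H$; in particular, $K_j$ has a triangle attachment outside the sun.

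I would then orient each flat path of $I$ via Lemma~\ref{l:Pwo}, use Lemma~\ref{C3} to alternate the colours $1$ and $2$ along these paths, and split the argument according to $k$ and the parities of the flat paths. In each case, combining the hole $I$, one or more teeth $\{a_i,b_i\}\cup K_i$, and the outside triangle $vpq$, one either produces an induced obstruction from the list (typically $F_{11}$ when $k=0$ and $I$ is odd, and $F_{12}$ when $k\geq 1$ so that two odd flat paths of the hole are available to serve as the two flat paths of $F_{12}$), contradicting the assumption that $G$ is not an obstruction, or one exhibits inside $G$ a cap, an even birdcage, an odd birdcage, or a flower, each excluded by Lemmas~\ref{Lm:cap}, \ref{Lm:even-birdcage}, \ref{Lm:odd-birdcage}, and~\ref{Lm:flower} respectively. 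The main obstacle is the bookkeeping in this final step: one has to verify that no parity configuration of the flat paths $R_\ell$, and no value of $k$, escapes every obstruction and every forbidden substructure, the outside triangle $vpq$ playing the role of the extra triangle that appears in both $F_{11}$ and $F_{12}$.
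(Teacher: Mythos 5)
Your opening matches the paper exactly: take $b_j = \max$, note $b_j' < b_j$ by Lemma~\ref{l:endFP}, and apply Lemma~\ref{Lm7} to the cutset $\{a_j,b_j\}$ to get $\pi(a_j')=\pi(b_j')=2$, $a_j<a_j'$, and the triangle $vpq$ with $v\in K_j$, $p,q\in C_2\setminus K_j$ and $v<a_j,b_j$. You also correctly identify that $I$ odd gives $F_{11}$, and that $F_{12}$ is used to pin down the parities of the flat paths of $I$ when another tooth is present. Up to this point the proposal is sound.

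The gap is in your final paragraph, which you yourself flag as unfinished. You propose to close the argument by exhibiting a cap, birdcage, or flower in the remaining parity configurations, but none of those structures is present in a sun: they all consist of two cliques joined by several flat paths, while the sun is a single hole with edge-attached teeth, so you would not be able to find them. The paper closes the argument by a direct colour-propagation along the hole, not by invoking earlier structural lemmas. Concretely, after $F_{11}$ and $F_{12}$ force $I$ to be even, the two flat paths of $I$ incident to $a_j$ and $b_j$ to be even, and all other flat paths of $I$ to be odd, the fact $a_j<a_j'$ together with Lemma~\ref{Lm5} shows each flat path of $I$ is well ordered with its orientation going away from $a_j$ around the hole. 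Then $\min(K_{j-1}\cup\{a_{j-1}\})>b_{j-1}$ (else $G_{<b_{j-1}}$ is disconnected), so the even directed path from $a_j$ to $b_{j-1}$ forces $\pi(b_{j-1})=1$, and then $\pi(a_{j-1}')=1$; iterating this around the hole gives $\pi(b_i)=\pi(a_i')=1$ for all $i\ne j$, and by the parity of the last flat path $\pi(b_j')=1$, contradicting $\pi(b_j')=2$. This colour-propagation step is the actual heart of the proof and is what is missing from your proposal.
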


\begin{proof}
\setcounter{claim}{0}
Suppose $G$ contains a sun $H$ as in the definition. For $i\in\{0,\ldots,k\}$, let $a_i'$, $b_i'$ be the vertices of degree $2$ adjacent to $a_i$, $b_i$ in $I$. W.l.o.g, suppose $b_k=\max(a_0,b_0,\ldots,a_k,b_k)$. Let $C_1$, $C_2$ be two connected components of $G\setminus \{a_k,b_k\}$ such that $C_2$ contains $K_k$. 

By Lemma \ref{l:endFP}, $b'_k<b_k$. Applying Lemma \ref{Lm7} for the
cutset $\{a_k,b_k\}$, we have $\pi(a_k')=\pi(b_k')=2$ and there exist
some vertex $u\in K_k$, $p,q\in C_2\setminus K_k$ such that $upq$ is a
triangle, $a_k<a_k'$, $u<a_k$ and $u<b_k$. It is clear that
$v_1\in C_2$. If $I$ is an odd hole, then $G$ contains $F_{11}$, a
contradiction. So, the length of $I$ is even. If there exist some
$a_i$, $b_i$, for $i\neq k$ such that the path $b_kIa_i$ and $b_iIa_k$
(taken in clock-wise order) are of odd length, then $G$ contains
$F_{12}$, a contradiction. Hence, the flat path $b_{k-1}Ia_k$ and
$b_kIa_0$ are of even length, and each flat path $b_iIa_{i+1}$ is of
odd length, for $\forall i\in\{0,\ldots,k-2\}$.

\begin{claim} \label{sun:c2}
$\pi(b_{k-1})=\pi(a_{k-1}')=1$.
\end{claim}

Since we have $a_k<a_k'$, the flat path $a_kIb_{k-1}$ (in
counter-clockwise order) is a directed path in $D_G$ from $a_k$ to
$b_{k-1}$. We have that $\min(K_{k-1}\cup a_{k-1})>b_{k-1}$, otherwise
if there exists a vertex $v\in K_{k-1}\cup a_{k-1}$ such that
$v<b_{k-1}$ then $G_{<b_{k-1}}$ is not connected ($b_{k-1}'$ and $v$
are in different connected components). Therefore, $\pi(b_{k-1})=1$
since the flat path $a_kIb_{k-1}$ is of even length. We have
$a_{k-1}<a_{k-1}'$ since otherwise $G_{<a_{k-1}}$ is not connected
($b_{k-1}$ and $a_{k-1}'$ are in different connected
components). Since $\pi(a_{k-1})\geq 2$, $\pi(a_{k-1}')=1$.  \epc

By the same argument as in Claim \ref{sun:c2}, we can prove that:
$\pi(b_{i})=\pi(a_{i}')=1$ for every $i\in\{0,\ldots,k-1\}$. And by
Lemma \ref{C3} and the parity of the flat path $a_0Ib_k$, we have
$\pi(b_k')=1$, contradicting to the fact that $\pi(b_k')=2$ which we
mentioned previously by Lemma \ref{Lm7}.
\end{proof}

\section{Proof of Theorem \ref{T1}} \label{S:4}

For our proof, we need results from~\cite{HW89}.  A graph $G$ is a
\emph{parity graph} if for every pair $u, v \in V(G)$, all induced
paths from $u$ to $v$ have the same parity. A graph is
\emph{distance-hereditary} if for every pair $u, v \in V(G)$, all induced paths from $u$ to $v$ have the same length.  Clearly, every
distance-hereditary graph is a parity graph.  A graph is
\emph{chordal} if it contains no hole. 

\begin{theorem}[see \cite{BM86}]
  \label{th:chgf}
  Every gem-free chordal graph is a distance-hereditary graph and
  therefore a parity graph. 
\end{theorem}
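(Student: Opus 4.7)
My plan is to reduce the theorem to the known forbidden induced subgraph characterization of distance-hereditary graphs from~\cite{BM86}: a graph is distance-hereditary if and only if it contains no induced hole of length at least $5$, no induced house, no induced domino, and no induced gem. I would first observe that each of the first three structures contains an induced cycle of length at least $4$: the hole by definition, and both the house and the domino contain an induced $C_4$ (this is visible directly from the drawings of these two five- and six-vertex graphs). Since $G$ is chordal, meaning it has no induced cycle of length at least $4$, none of these three forbidden configurations can appear in $G$ as an induced subgraph. The remaining forbidden graph, the gem, is excluded by the hypothesis that $G$ is gem-free. Thus $G$ satisfies the characterization and is distance-hereditary.

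For the second conclusion, the implication from distance-hereditary to parity graph is immediate from the two definitions stated just before the theorem: if every pair of vertices $u,v$ has the property that all induced $u$-to-$v$ paths have the same length, then in particular these paths all have the same parity, so $G$ is a parity graph.

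There is no real obstacle in this argument; it is essentially a one-paragraph application of the Bandelt--Mulder characterization. The only delicate point, which I would take a line or two to check explicitly, is that a house and a domino each contain an induced $C_4$ (so that chordality is enough to rule them out together with genuine holes of length $\geq 5$), and that the convention for ``hole'' used in the characterization of~\cite{BM86} is consistent with the one fixed in the present paper. Once these matches of terminology are verified, the theorem follows.
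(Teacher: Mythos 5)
The paper states Theorem~\ref{th:chgf} as a cited result from~\cite{BM86} and gives no proof, so there is no in-paper argument to compare against. Your argument is correct and is exactly the standard derivation: by the Bandelt--Mulder characterization, the distance-hereditary graphs are precisely the graphs with no induced house, hole of length at least five, domino, or gem; a house and a domino each contain an induced $C_4$ (for the house, delete the apex of the ``roof''; for the domino, take either of its two $4$-faces), so chordality kills the first three forbidden configurations and gem-freeness kills the last, giving distance-hereditariness, and the passage to parity graphs is immediate from the definitions since equal length forces equal parity. This matches what the cited reference establishes, and the one point you flag as delicate (that house and domino contain an induced $C_4$) is indeed the only detail that needs checking; it holds.
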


\begin{theorem}[see \cite{HW89}]
  \label{th:HdW}
  Every fish-free parity graph is good. 
\end{theorem}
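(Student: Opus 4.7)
The plan is to prove this by induction on $|V(G)|$, in the contrapositive form: if $G$ is a minimally bad parity graph in the sense of Section~2, then $G$ contains a fish. Fix such a $G$ with bad connected order ${\cal O}=[v_1,\ldots,v_n]$. Since parity graphs are perfect, $\chi(G)=\omega(G)$, and Lemma~\ref{C2} gives $\pi(v_n)=\omega(G)+1\geq 4$, while all the structural lemmas of Section~2 apply to $(G,{\cal O})$.

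The first step is to exploit the rainbow in-neighborhood of $v_n$. Let $u_1,\ldots,u_{\omega(G)}$ be in-neighbors of $v_n$ with $\pi(u_t)=t$ for each $t$. If $\{v_n,u_1,\ldots,u_{\omega(G)}\}$ were a clique we would have $\omega(G)\geq\omega(G)+1$, hence some non-edge $u_iu_j$ with $i<j$ exists. The induced path $u_iv_nu_j$ has length~$2$, so by the parity hypothesis every induced $u_i$-$u_j$ path in $G$ is even.

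The main step is to produce a second induced $u_i$-$u_j$ path whose parity or internal structure conflicts with this constraint, unless a fish sits inside the forced structure. Because $\pi(u_j)=j>i$, vertex $u_j$ has an in-neighbor $w$ of color $i$; if $wu_i\notin E(G)$, then concatenating a shortest induced $u_i$-$w$ path with the edge $wu_j$ gives a second induced $u_i$-$u_j$ path whose parity can be compared with the length-$2$ path through $v_n$. If $wu_i\in E(G)$ then $\{u_i,w,u_j\}$ is a triangle and one iterates the back-step from other in-neighbors of $v_n$ and of $u_j$. Using Lemmas~\ref{Lm1},~\ref{CUnique},~\ref{l:endFP},~\ref{Lm6} and~\ref{Lm7} to constrain where the back-neighbors lie in ${\cal O}$ and to force supplementary edges (typically a second triangle sharing a vertex with $\{u_i,w,u_j\}$), this cascade either yields an induced path of wrong parity or produces an explicit configuration matching the fish of Figure~\ref{F:2}.

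The hard part is this last extraction: a fish is a specific small graph, so the combinatorics must line up exactly to exhibit it, and several sub-cases arise according to which of $u_i,u_j,w,v_n$ and their own back-neighbors coincide or are adjacent. A smoother alternative, likely the route taken in~\cite{HW89}, is to use the split-decomposition of parity graphs, which writes $G$ as a tree-like gluing of bipartite graphs and cliques along clique cutsets; fish-freeness restricts the shape of this decomposition, and goodness is then propagated from Theorem~\ref{th:bip} on the bipartite pieces together with a cograph-style argument in the spirit of Theorem~\ref{th:Ccograph} on the clique pieces, by induction on the decomposition tree and careful bookkeeping of how a connected order on $G$ restricts to each piece.
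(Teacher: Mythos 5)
This statement is cited as Theorem~\ref{th:HdW} from~\cite{HW89} and the paper offers no proof of it; it is used as a black box (together with Theorem~\ref{th:chgf}) to close the chordal case at the end of Section~\ref{S:4}. So there is no ``paper's own proof'' to compare your argument against, and the right question is simply whether your sketch would establish the statement.

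As written it would not, for two reasons. First, a concrete error: in the main step you invoke Lemma~\ref{Lm7}, but that lemma is stated only under the hypothesis that $G$ is claw-free. A parity graph need not be claw-free (a claw is itself a parity graph), so Lemma~\ref{Lm7} is not available here; the same caution applies to any part of Section~\ref{S:3} that relies on claw-freeness. Second, and more fundamentally, the core of the argument --- ``this cascade either yields an induced path of wrong parity or produces an explicit configuration matching the fish'' --- is exactly the content that needs to be proved, and nothing in the sketch shows that the case analysis terminates or that a fish actually emerges. Producing a rainbow in-neighborhood of $v_n$ and a non-edge among it is a sound opening move (and does correctly use perfection of parity graphs to turn $\pi(v_n)=\chi(G)+1$ into $\omega(G)+1$), but the step from there to an explicit fish is where all the work lies and it is not done. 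Your alternative route via the split decomposition of parity graphs into bipartite and complete pieces is in the right spirit and is plausibly closer to what~\cite{HW89} actually does, but it is equally a plan rather than a proof: the nontrivial issue there is how a connected order on $G$ restricts to, and can be reassembled from, the factors of the split decomposition, and how fish-freeness constrains the decomposition tree; none of that bookkeeping is carried out. Either route would need to be completed before this could count as a proof.
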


Throughout the rest of this section, let $G$ be a minimally bad claw-free graph that is not an obstruction.  Our goal is to prove that this implies a contradiction, thus proving Theorem~\ref{T1}.

\begin{lemma} \label{L1}
Let $H=v_0\ldots v_k$ be a hole in $G$ and $u\in V(G)\setminus V(H)$ has some neighbor in $H$. Then $u$ has two or three neighbors in $H$ and they induce a path $($of length $1$ or $2)$, or $u\cup V(H)$ induces a $4$-wheel.
\end{lemma}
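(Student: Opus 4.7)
The plan is to use claw-freeness to constrain $N := N(u) \cap V(H)$ combinatorially, and then invoke the extra assumptions on $G$ (minimally bad and not an obstruction), together with the results of Section~$3$, to eliminate the remaining configurations.

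First I would establish two claw-avoidance facts, each relying on the fact that the hole $H$ has no chord. (i) If $u$ had a unique neighbour $v_i$ in $H$, then $\{u, v_{i-1}, v_{i+1}\}$ would be three pairwise non-adjacent neighbours of $v_i$, producing an induced claw; hence $|N| \geq 2$. (ii) For every $v_i \in N$, at least one of $v_{i-1}, v_{i+1}$ is also in $N$: otherwise the same three vertices give a claw at $v_i$. Consequently $N$ is a pairwise disjoint union of \emph{arcs} of consecutive vertices of $H$, each of length at least~$2$.

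Next I would case-split on the arc decomposition. A single arc of length $2$ or $3$ gives the first conclusion directly. A single arc of length $\geq 4$ yields a gem: choose four consecutive vertices $v_i, v_{i+1}, v_{i+2}, v_{i+3}$ of the arc, so that $v_iv_{i+1}v_{i+2}v_{i+3}$ is an induced $P_4$ (the hole is chordless), and $u$ is adjacent to all four; hence $\{u, v_i, v_{i+1}, v_{i+2}, v_{i+3}\}$ induces a gem (the graph $F_2$ with path of length~$1$). The gem is an obstruction and in particular bad, witnessed e.g.\ by the connected order $v_i, u, v_{i+3}, v_{i+2}, v_{i+1}$, which assigns colour~$4$ to $v_{i+1}$. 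Since $G$ is minimally bad but not an obstruction, $G$ is strictly larger than the gem; but then the gem would be a proper connected induced subgraph of $G$ which, by minimality, should be good --- a contradiction.

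It remains to treat the case of several arcs. If $H$ has length~$4$, then any partition of $V(H)$ into arcs of length $\geq 2$ forces $u$ to be adjacent to every vertex of $H$, yielding the $4$-wheel of the second conclusion. Otherwise $H$ has length at least~$5$ and $N$ has at least two arcs, each of length $2$ or $3$. For each pair of consecutive arcs $A, A'$ separated by a gap of non-neighbours of $u$, I would build the shortcut induced hole $H^{\ast}$ through $u$, the last vertex of $A$, the intermediate gap, and the first vertex of $A'$, and combine $H^{\ast}$ with the remaining hole vertices to exhibit either another induced gem (if some arc has length $3$, producing a longer $P_4$ attached to $u$) or one of the structures forbidden by Section~$3$ --- a cap, an even or odd birdcage, a flower, or a sun --- contradicting one of Lemmas~\ref{Lm:cap}, \ref{Lm:even-birdcage}, \ref{Lm:odd-birdcage}, \ref{Lm:flower}, \ref{Lm:sun}. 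The main obstacle is precisely this last subcase of multiple short arcs on a long hole, where no forbidden induced subgraph is visible inside $u\cup V(H)$ alone; the argument requires choosing the right shortcut hole $H^\ast$ and carefully verifying the flat-path, clique-cutset and parity conditions defining those forbidden structures. The remaining cases --- at most one neighbour, a single short arc, a single long arc yielding a gem, and the $C_4$-into-$4$-wheel case --- follow cleanly from the arc argument.
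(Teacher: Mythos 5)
Your proof starts the same way the paper does: $|N(u) \cap V(H)| \geq 2$, each neighbour has a neighbour in $N(u) \cap V(H)$ (so $N(u) \cap V(H)$ splits into arcs of $\geq 2$ consecutive hole vertices), a single arc of $2$ or $3$ vertices gives the first conclusion, a single arc of $\geq 4$ vertices gives a gem, and on a $C_4$ the exceptional $4$-wheel can appear. All of this is right and matches the paper.

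The gap is in the multi-arc case, where you write that ``no forbidden induced subgraph is visible inside $u \cup V(H)$ alone'' and propose to patch things up with shortcut holes and the Section~3 lemmas (cap, birdcage, flower, sun). That claim is false: if $\{v_i,\ldots,v_j\}$ and $\{v_k,\ldots,v_m\}$ are two consecutive arcs of $N(u)\cap V(H)$, then the set $\{u, v_{j-1}, v_j, v_{j+1}, \ldots, v_{k-1}, v_k, v_{k+1}\}$ --- $u$ together with a subpath of $H$ where $u$ sees exactly the first two and last two vertices --- induces exactly $F_2$ (the ``gem with the middle edge stretched into a path''), which is on the obstruction list and yields an immediate contradiction. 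This is the paper's one-line argument for the multi-arc case, and you missed it. Your proposed alternative is also structurally problematic: the lemmas on caps, birdcages, flowers and suns require flat paths and clique cutsets, i.e.\ global information about $G$ that is only established \emph{after} Lemmas~\ref{L1} and~\ref{L2} have pinned down how vertices attach to holes. Invoking them here is at best circular and would require re-proving exactly the kind of local control that Lemma~\ref{L1} is supposed to supply, so this branch of your plan does not go through as stated.
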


\begin{proof}
  If $u$ is adjacent to some vertex $v_i$ in $H$, then $u$ must be
  adjacent also to $v_{i-1}$ or $v_{i+1}$, otherwise
  $\{v_i,v_{i-1},u,v_{i+1}\}$ induces a claw. Suppose $G[N(u)\cap H]$
  induces at least two components, where $\{v_i,\ldots,v_j\}$ and
  $\{v_k,\ldots,v_m\}$ are its two consecutive components in $H$ (in
  clock-wise order, $i,j,k,m$ are distinct) then
  $\{u,v_{j-1},v_j,\ldots,v_k,v_{k+1}\}$ induces $F_2$, a
  contradiction. Then $G[N(u)\cap H]$ induces only one component.
\begin{itemize}
	\item If $|H|\geq 5$: If $u$ has at least four neighbors on
          $H$ then $u$ and its four consecutive neighbors in $H$
          induce a gem (a special case of $F_2$), a contradiction. Then $u$ has two or three neighbors in $H$ and they induce a path.
	\item If $|H|=4$: $u$ can have two or three neighbors in $H$ and they induce a path or $u$ is complete to $H$ and $u\cup V(H)$ induces a $4$-wheel.
\end{itemize}
\end{proof}

\begin{lemma} \label{L2}
Let $H=v_0\ldots v_k$ be a hole in $G$. For $i\in\{0,\ldots,k\}$, let $S_i=\{u| N(u)\cap H=\{v_{i-1},v_i,v_{i+1}\}\}$ and $R_i=\{u|N(u)\cap H=\{v_i,v_{i+1}\}\}$. Then for any $i\in\{0,\ldots,k\}$:
\begin{enumerate}
	\item $S_i$ is a clique.
	\item $R_i$ is a clique.
	\item $S_i$ is complete to $S_{i+1}$ and anticomplete to $S_j$ for any $j\notin\{i-1,i,i+1\}$.
	\item $R_i$ is anticomplete to $R_j$ for any $j\neq i$.
	\item $S_i$ is complete to $R_{i-1}$ and $R_i$.
	\item If $S_i\neq \emptyset$ then $R_j=\emptyset$ for any $j\notin\{i-1,i\}$.
	\item If $R_i\neq \emptyset$ then $R_j=\emptyset$ for any $j\in\{i-2,i-1,i+1,i+2\}$.
\end{enumerate}
\end{lemma}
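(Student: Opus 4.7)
The plan is to prove each of the seven items independently by contradiction: in each case I assume the claim fails for some pair $u, u'$ in the relevant sets, and exhibit an induced subgraph on $u, u'$ together with a few vertices of $H$ that is either a claw (contradicting that $G$ is claw-free) or one of the obstructions from Figure~\ref{F:1} (contradicting that $G$, being minimally bad and itself not an obstruction, contains no obstruction as an induced subgraph).

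For items~(1) and~(2) the negation yields an immediate claw. Thus if $u, u' \in S_i$ are non-adjacent, then $v_{i+1}$ is adjacent to both $u$ and $u'$, while $v_{i+2}$ is adjacent to neither (immediately from the definitions and the hole structure, which holds for every $|H|\geq 4$), so $\{v_{i+1}, v_{i+2}, u, u'\}$ induces a claw. A symmetric choice of vertices (using $v_{i-1}$ instead of $v_{i+2}$) handles $R_i$.

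For items~(3),~(4) and~(5) I would instead produce a gem, which by the description of $F_2$ in Figure~\ref{F:1} is precisely $F_2$ when its only path has length~$1$. For example, for the complete part of~(3): if $u \in S_i$, $u' \in S_{i+1}$ satisfy $uu' \notin E(G)$, then a routine check of three non-edges shows that $v_{i-1}\textrm{-}u\textrm{-}v_{i+1}\textrm{-}u'$ is an induced $P_4$, and $v_i$ is adjacent to each of its four vertices, so $\{v_i, v_{i-1}, u, v_{i+1}, u'\}$ induces a gem. The anticomplete parts of~(3) and~(4) split into two sub-cases according to whether $u$ and $u'$ share a hole-neighbor: if they share none, then the argument from~(1) yields a claw $\{u, v_{i-1}, v_{i+1}, u'\}$; if they share exactly one, then the same ``$P_4$ plus apex'' pattern through the shared neighbor delivers a gem. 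Item~(5) is essentially the same gem argument, with the $P_4$ taken to be $v_{i+1}\textrm{-}u\textrm{-}v_{i-1}\textrm{-}w$ for $u \in S_i$, $w \in R_{i-1}$ (and an analogous choice for $R_i$).

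Items~(6) and~(7) will be the main obstacle, because the claim is about the coexistence of two non-empty sets rather than a single bad adjacency, so we cannot start from a simple adjacency assumption. The plan is to first use items~(4) and~(5) together with the claw-freeness of $G$ to pin down the precise adjacency pattern between $u$, $u'$ and the nearby hole-vertices, and then to enumerate the small number of admissible offsets $j-i$. In each sub-case the resulting configuration is typically a bow-tie on $\{v_i, v_{i+1}, v_{i+2}, u, u'\}$ (or a shifted variant) extended by neighbouring hole-vertices, which, depending on the parity of $|H|$, completes to one of $F_2$, $F_{11}$ or $F_{12}$. Most of the bookkeeping for this part of the proof will consist in matching each sub-case with the right obstruction from Figure~\ref{F:1}.
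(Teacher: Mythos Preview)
Your plan for items~(1), (2), (5) and the ``complete'' half of~(3) is fine and matches the paper. The first real gap is in the anticomplete halves of~(3) and~(4). For~(3) the claw $\{u,v_{i-1},v_{i+1},u'\}$ works because $u\in S_i$ sees both $v_{i-1}$ and $v_{i+1}$, but for~(4) the vertex $u\in R_i$ is \emph{not} adjacent to $v_{i-1}$, so that set is not a claw (and no other claw is available: the only hole-neighbours of $u$ are $v_i,v_{i+1}$, which are adjacent). When $u\in R_i$, $u'\in R_j$ with $|j-i|\geq 2$ and $uu'\in E(G)$, you actually need the obstruction $F_3$ (a short prism): the paper takes $\{u,u',v_i,v_{i+1},\dots,v_j,v_{j+1}\}$ and observes it induces $F_3$ or a gem. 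The same $F_3$/gem argument is what handles the anticomplete part of~(3) uniformly, including the $|H|=4$ case where $S_i$ and $S_{i+2}$ share \emph{two} hole-neighbours, which your ``share none or exactly one'' dichotomy misses.

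For items~(6) and~(7) your outline names the wrong obstructions, and this matters because the argument is not just bookkeeping. In~(6), after disposing of the case $uu'\in E(G)$ via $F_3$/gem as above, the heart of the proof is the non-adjacent case: with $u\in S_i$ and $u'\in R_j$ non-adjacent, the set $\{u,u'\}\cup V(H)$ contains one of $F_1$, $F_5$, $F_6$, or a fish (the fish being the special case of $F_{11}$), depending on the parity of $H$ and the position of $j$ relative to $i$. None of these are $F_2$ or $F_{12}$, and there is no bow-tie in sight. In~(7) the relevant obstructions are $F_2$ (for $R_i$ vs.\ $R_{i\pm 1}$ non-adjacent), a gem (adjacent), and $F_3$ or $F_4$ (for $R_i$ vs.\ $R_{i\pm 2}$); again $F_{12}$ plays no role. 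So the missing idea is that beyond claws and gems you must also invoke $F_1,F_3,F_4,F_5,F_6$ and the fish, and the case analysis in~(6)--(7) is really driven by which of these is forced, not by a single ``bow-tie plus parity'' template.
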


\begin{proof}
We prove each statement following the index.
\begin{enumerate}
\item Let $a,b\in S_i$. If $a$ is not adjacent to $b$ then
  $\{v_{i+1},a,b,v_{i+2}\}$ induces a claw, a contradiction.
\item Let $a,b\in R_i$. If $a$ is not adjacent to $b$ then
  $\{v_{i+1},a,b,v_{i+2}\}$ induces a claw, a contradiction.
	\item Let $a\in S_i$, $b\in S_{i+1}$ and $c\in S_j$. If $a$ is
          not adjacent to $b$ then $\{v_i,v_{i-1},a,v_{i+1},b\}$
          induces a gem, a contradiction. If $a$ is adjacent to $c$ then
          $\{a,c,v_i,v_{i+1},\ldots,v_{j-1},v_j\}$ induces $F_3$ or a
          gem, a contradiction.
	\item Let $a\in R_i$ and $b\in R_j$. If $a$ is adjacent to $b$
          then $\{a,b,v_i,v_{i+1},\ldots,v_j,v_{j+1}\}$ contains $F_3$
          or a gem, a contradiction.
	\item Let $a\in S_i$ and $b\in R_i$. If $a$ is not adjacent to
          $b$ then $\{v_i,v_{i-1},a,v_{i+1},b\}$ induces a gem, a contradiction. $S_i$ is also complete to $R_{i-1}$ by symmetry.
	\item Let $a\in S_i$ and $b\in R_j$. If $a$ is adjacent to $b$
          then $\{a,b,v_i,v_{i+1},\ldots,v_j,v_{j+1}\}$ contains $F_3$
          or a gem, a contradiction. If $a$ is not adjacent to $b$ then
          $\{a,b\}\cup V(H)$ contains $F_1$, $F_5$, $F_6$ or a fish (a
          special case of $F_{11}$).  So,
          $R_j=\emptyset$.
	\item Let $a\in R_i$, $b\in R_{i+1}$ and $c\in R_{i+2}$. If
          $a$ is not adjacent to $b$, then $\{a,b\}\cup V(H)$ induces
          $F_2$, a contradiction. If $a$ is adjacent to $b$, then
          $\{v_{i+1},v_i,a,b,v_{i+2}\}$ induces a gem,
          a contradiction. So,
          $R_{i+1}=\emptyset$. If $a$ is not adjacent to $c$, then
          $\{a,c\}\cup V(H)$ induces $F_3$. If $a$ is adjacent to $c$,
          then $\{a,c,v_i,v_{i+1},v_{i+2},v_{i+3}\}$ induces either
          $F_3$ or $F_4$, a contradiction. So,
          $R_{i+2}=\emptyset$. The proof for $R_{i-2}$ and $R_{i-1}$
          is similar.
\end{enumerate}
\end{proof}

\begin{lemma} \label{L3}
$G$ does not contain a $4$-wheel.
\end{lemma}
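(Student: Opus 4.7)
}

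The plan is to suppose for contradiction that $G$ contains a 4-wheel with centre $u$ and rim $H = v_0v_1v_2v_3$, and to use the structural lemmas of this section together with those of Section~2 to produce a contradiction. The first step is to describe the attachment of the rest of $G$ to the 4-wheel. By Lemma~\ref{L1} applied to the hole $H$, every vertex $w \in V(G) \setminus V(H)$ with a neighbour on $H$ either has $2$ or $3$ consecutive neighbours on $H$ inducing a path, or is universal to $V(H)$ (in which case $\{w\} \cup V(H)$ induces another 4-wheel). In addition, claw-freeness forces every $w \in N(u) \setminus V(H)$ to be adjacent to at least one vertex of $\{v_0,v_2\}$ and at least one of $\{v_1,v_3\}$, because otherwise $\{u,v_0,v_2,w\}$ or $\{u,v_1,v_3,w\}$ would be a claw centred at~$u$.

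Next, let $U$ denote the set of vertices universal to $V(H)$ (so $u \in U$). Using Lemma~\ref{L2} and claw-freeness on sets of the form $\{v_0, v_2, w, w'\}$, I would show that $U$ is a clique and that the interaction between $U$ and the "path-of-two or path-of-three'' attachments is very rigid: $U$ is complete to every $P_3$-type neighbour and is anticomplete to any attachment that sees only an edge of $H$ but is not adjacent to any element of $U$. This gives a clean decomposition of $N[V(H)]$ into $V(H)\cup U$ together with a handful of attachment cliques, each controlled by Lemma~\ref{L2}.

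The main step is then to use the ordering $\cal O$. The 4-wheel itself is easily checked to be good, so to realise the inequality $\pi(v_n) \geq 4$ guaranteed by Lemmas~\ref{Lm2} and~\ref{C2}, external vertices must interfere. I would locate a small clique cutset inside $V(H)\cup U$ (for instance $U$ itself together with two consecutive rim vertices, which is a clique and separates the remainder of $H$ from the rest of $G$) and apply Lemma~\ref{ColClique} to force a vertex of colour $\geq |U|+2$ on the $v_n$-side. Following this colour back along flat paths in $G$ via Lemmas~\ref{l:endFP} and~\ref{l:Pwo}, one builds either a cap, an even or odd birdcage, a flower, or a sun, each excluded by Lemmas~\ref{Lm:cap}, \ref{Lm:even-birdcage}, \ref{Lm:odd-birdcage}, \ref{Lm:flower} or \ref{Lm:sun}; in the remaining degenerate case one directly exhibits one of the obstructions $F_i$ (using the 4-cycle in $H$ together with a short odd or even excursion through $U$ to supply the odd or even hole of the obstruction), contradicting the assumption that $G$ is not an obstruction.

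The main obstacle is the sheer number of symmetry classes: the dihedral group of order~$8$ acts on the rim, the set~$U$ can have any size, and the "path-of-two'' and "path-of-three'' attachments can occur independently at each edge of $H$. The cleanest organisation is by $|U|$ and by the maximum of~$V(H)\cup U$ in~$\cal O$; once these are fixed, Lemma~\ref{l:Pwo} forces the orientation of every attached flat path, and the contradiction in each subcase is a short application of one of the forbidden substructure lemmas.
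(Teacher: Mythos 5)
Your plan heads in a genuinely different direction from the paper's, and the difference matters. The paper's proof of Lemma~\ref{L3} is purely structural and very short: it defines $S_i$, $R_i$, and the set $T$ of vertices universal to $H$, shows via gem arguments that all $R_i$ are empty and that $T$ is complete to every $S_i$, then observes that $V(G)=T\cup\bigcup S_i\cup V(H)$, that $G[T]$ and $G[\bigcup S_i\cup V(H)]$ are $P_4$-free, and that $T$ is complete to the rest; hence $G$ itself is $P_4$-free, i.e.\ a cograph, and so good by Theorem~\ref{th:Ccograph}, contradicting minimal badness. No ordering argument, no clique-cutset lemma, no caps/birdcages/flowers/suns are ever needed here. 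You miss this ``collapse to a cograph'' observation entirely, and it is the whole point of the lemma.

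Your plan does have a correct beginning (claw-freeness of $N(u)$ around the rim, $U$ is a clique, rigid attachment types) and roughly parallels the paper's Claims~1--2, though your phrasing ``anticomplete to any attachment that sees only an edge of $H$'' is weaker than what actually holds: such attachments simply cannot exist, since an $R_i$-vertex together with $x$ and three rim vertices always yields a gem, whether or not it is adjacent to $x$. More seriously, your ``main step'' is only a gesture: you propose to locate a clique cutset inside $V(H)\cup U$, invoke Lemma~\ref{ColClique}, and then ``follow the colour back'' to build one of the large forbidden configurations. None of those configurations (cap, birdcage, flower, sun) naturally arises from a 4-wheel in the way you suggest, and you give no indication of how the case analysis would close. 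As it stands, the proposal would not compile into a proof without a substantial new idea -- and that idea, in the paper, is precisely the reduction to Theorem~\ref{th:Ccograph}, which your plan never mentions.
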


\begin{proof}
  \setcounter{claim}{0} Suppose $G$ contains a $4$-wheel consisting of
  a hole $H=v_0v_1v_2v_3$ and a vertex $x$ complete to that hole. For
  $i\in\{0,\ldots,3\}$, let
  $S_i=\{u| N(u)\cap H=\{v_{i-1},v_i,v_{i+1}\}\}$,
  $R_i=\{u|N(u)\cap H=\{v_i,v_{i+1}\}\}$ and
  $T=\{u|N(u)\cap H=\{v_0,v_1,v_2,v_3\}\}$. Note that $x\in T$.
\begin{claim}
For $i\in\{0,\ldots,3\}$, $R_i=\emptyset$.
\end{claim} 
Let $a\in R_i$. If $a$ is not adjacent to $x$, then
$\{v_i,v_{i-1},x,v_{i+1},a\}$ induces a gem, a contradiction. If $a$
is adjacent to $x$, then $\{x,v_{i-2},v_{i-1},v_i,a\}$ induces a gem,
a contradiction. So, $R_i=\emptyset$.  \epc

\begin{claim}
For $i\in\{0,\ldots,3\}$, $S_i$ is complete to $T$.  
\end{claim}

Let $a\in S_i$ and $b\in T$. If $a$ is not adjacent to $b$ then
$\{v_{i-1},v_{i-2},b,v_i,a\}$ induces a gem, a contradiction.
\epc

By Lemma \ref{L2}, $S_i$ is a clique complete to $S_{i+1}$ and
anticomplete to $S_{i+2}$. So, $G[\cup_{i=0}^3 S_i]$ is $P_4$-free.

\begin{claim}
$V(G)=T\cup_{i=0}^3 S_i\cup V(H)$.
\end{claim}

Suppose there exists some vertex $v\in V(G)\setminus (T\cup_{i=0}^3
S_i\cup V(H))$ and $v$ has some neighbors in $T\cup_{i=0}^3 S_i\cup
V(H)$. If $v$ has a neighbor in $V(H)$, then $v\in T\cup_{i=0}^3 S_i$,
a contradiction. If $v$ has a neighbor $u$ in $S_i$ for some $i$ but
no neighbor in $V(H)$, then $\{u,v,v_{i-1},v_{i+1}\}$ forms a claw, a
contradiction. If $v$ has a neighbor $u$ in $V(T)$ but no neighbor in
$V(H)$, then $\{u,v_0,v_2,v\}$ forms a claw, a contradiction.
\epc

We have $G[T]$ is $P_4$-free (otherwise $v_0$ and some $P_4$ in $G[T]$
form a gem), $G[\cup_{i=0}^3 S_i\cup V(H)]$ is $P_4$-free and $T$ is
complete to $\cup_{i=0}^3 S_i\cup V(H)$, then $G$ is $P_4$-free and
therefore is good by Theorem~\ref{th:Ccograph}, a contradiction.
\end{proof}

A \emph{boat} is a graph consisting of a hole $H$ and a vertex $x$ has three consecutive neighbors on $H$.

\begin{lemma} \label{L4}
$G$ does not contain a boat.
\end{lemma}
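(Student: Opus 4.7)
I would assume for contradiction that $G$ contains a boat, i.e.\ a hole $H = v_0 v_1 \cdots v_{k-1}$ (indices mod $k$) together with a vertex $x \notin V(H)$ satisfying $N(x) \cap V(H) = \{v_{-1}, v_0, v_1\}$; choose such a configuration with $k$ minimum. Then $x \in S_0$ in the notation of Lemma~\ref{L2}, and the goal is to show that either the hole length, the parities of the arcs of $H$, or the structure of the attached sets $S_i$ and $R_j$ force one of the forbidden substructures of Section~\ref{S:3} (cap, even birdcage, odd birdcage, flower, sun) or an induced obstruction from Figure~\ref{F:1}.

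The first step is to tightly restrict the attached structure. By Lemma~\ref{L1} and Lemma~\ref{L3}, every vertex with a neighbor in $V(H)$ lies in some $S_i$ or $R_j$, and by Lemma~\ref{L2}(6), $R_j = \emptyset$ for all $j \notin \{-1, 0\}$. Then, using the minimality of $k$, claw-freeness, and Lemma~\ref{L2}, I would argue that the only $S_i$ which can be non-empty are $S_{-1}, S_0, S_1$: any vertex in $S_i$ for other $i$ either creates a shorter boat using $x$ and a sub-arc of $H$, or together with $H$ and $x$ induces an obstruction (typically one from $F_3, F_5$, or the $F_{11}, F_{12}$ families). This reduces the problem to a concrete configuration: the hole $H$, the cliques $S_{-1}, S_0, S_1$ sitting at $v_{-1}, v_0, v_1$, and the two cliques $R_{-1}, R_0$ on the triangle edges.

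Next, I would split on the parity of $k$. If $k$ is odd, the subgraph $G[V(H) \cup \{x\}]$ (possibly together with a vertex from $S_{-1}\cup S_1$) contains an odd hole with the exact neighbourhood structure of $F_1$ or $F_{11}$, contradicting the assumption that $G$ is not an obstruction. If $k$ is even, the long arc $P = v_1 v_2 \cdots v_{k-1} v_{-1}$ of $H$ is a flat path of odd length $\geq 3$ in $G$ (by the restriction on the $S_i$'s established above), and the clique obtained from $\{v_0, x\} \cup S_{-1} \cup S_0 \cup S_1 \cup R_{-1} \cup R_0$ attached at its endpoints $v_{-1}, v_1$ yields a cap, an even or odd birdcage, a flower, or a sun, contradicting one of Lemmas~\ref{Lm:cap}, \ref{Lm:even-birdcage}, \ref{Lm:odd-birdcage}, \ref{Lm:flower}, or \ref{Lm:sun} accordingly.

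The main obstacle I expect is the first step, namely proving that $S_i = \emptyset$ for $i \notin \{-1, 0, 1\}$. This requires the most delicate combinatorial case analysis: for each possible relative position of an additional $S_i$ vertex $y$, one must either extract a strictly shorter hole through $\{x, y\}$ (violating the minimal choice of $k$) or locate a specific obstruction $F_i$ in $G[V(H) \cup \{x, y\}]$. Once that cleanup is done, the parity split and the appeal to either the obstruction list or the forbidden substructures of Section~\ref{S:3} is comparatively routine.
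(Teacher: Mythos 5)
There is a genuine gap in your first (and admittedly "most delicate") step. The claim that $S_i = \emptyset$ for $i \notin \{-1,0,1\}$ is false, and the minimality of $k$ does not help. If $y \in S_i$ with $i$ far from $0$, then by Lemma~\ref{L2}(3) $x$ and $y$ are non-adjacent, and $G[V(H)\cup\{x,y\}]$ is just an even hole with two non-adjacent boat vertices: this contains neither a shorter boat (replacing $v_i$ by $y$ yields a hole of the same length, not a shorter one) nor any induced obstruction. Indeed, an even hole $H$ together with arbitrarily many cliques $S_i$ (each complete to its two clique-neighbours $S_{i\pm1}$) is a perfectly legitimate fish-free parity graph, hence good; it contains no cap, birdcage, flower, sun, or $F_j$. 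So the structure your argument is meant to rule out can genuinely occur, and your approach has no way to dispose of it.

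The paper's proof splits on an entirely different dichotomy. After noting that $H$ must be even (else $F_1$), it asks whether some $R_j$ is non-empty. If yes, Lemma~\ref{L2}(6)(7) forces $S_i = \emptyset$ and $R_j = \emptyset$ for all but two consecutive indices, and the long arc of $H$ together with the surviving $S$ and $R$ cliques forms a \emph{cap}, contradicting Lemma~\ref{Lm:cap}. If \emph{all} $R_j$ are empty, then $V(G) = V(H)\cup\bigcup_i S_i$, and this graph is a fish-free parity graph — good by Theorem~\ref{th:HdW}, contradicting badness. This second branch is exactly the case your proposal fails to handle: when nothing forbidden is present, the contradiction must come from the positive result of \cite{HW89}, not from extracting a forbidden structure. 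You would need to add this parity-graph argument (and drop the attempt to kill distant $S_i$'s) to make the proof work.
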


\begin{proof}
  \setcounter{claim}{0} Suppose $G$ contains a boat consisting of a
  hole $H=v_0\ldots v_k$ and a vertex $x$ that has three neighbors on
  $H$: $v_0$, $v_1$, $v_2$. We can assume that $H$ is an even hole
  since otherwise $G$ contains $F_1$. For $i\in\{0,\ldots,k\}$, let
  $S_i=\{u| N(u)\cap H=\{v_{i-1},v_i,v_{i+1}\}\}$ and
  $R_i=\{u|N(u)\cap H=\{v_i,v_{i+1}\}\}$. We consider two cases:

\smallskip
\noindent{\bf Case 1:} There exists some vertex $y\in R_j$ for some
$j$. Then $j\in\{0,1\}$ by Lemma \ref{L2}. W.l.o.g, suppose $j=1$ then
$R_0=\emptyset$ by Lemma \ref{L2}. Also by Lemma \ref{L2},
$S_i=\emptyset$ for $i\notin\{1,2\}$. Now,
$S_1\cup S_2\cup R_1\cup V(H)$ forms a cap in $G$ with
$L=S_1\cup\{v_1\}$, $R=S_2\cup\{v_2\}$, $C=R_1$ and $P$ is the flat
path from $v_0$ to $v_3$ in $H$, contradicting Lemma \ref{Lm:cap}.
	
\smallskip {\noindent \bf Case 2:} For every $i\in\{0,\ldots,k\}$,
$R_i=\emptyset$. By Lemma \ref{L2}, each $S_i$ is a clique complete to
$S_{i+1}$. It is clear that $V(G)=\cup_{i=0}^k S_i\cup V(H)$. Hence, $G$ is a
parity graph. Also, $G$ is fish-free because the fish is an
obstruction. Therefore, $G$ is good by Theorem~\ref{th:HdW}, a
contradiction.
\end{proof}

From now on, by Lemmas \ref{L3} and \ref{L4}, the neighborhood of any
vertex on a hole in $G$ induces an edge.

\begin{lemma} \label{L5}
$G$ does not contain a short prism.
\end{lemma}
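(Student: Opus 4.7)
The plan is to suppose, for contradiction, that $G$ contains a short prism $H$ as defined in the section, and without loss of generality assume that $P_3 = a_3 b_3$ is the length-$1$ path, so $P_1$ and $P_2$ each have length at least $1$. I would then consider the cycle $H'$ obtained by concatenating $P_1$ with the three edges $b_1 b_3$, $b_3 a_3$, and $a_3 a_1$; thus $H'$ has length $|P_1|+3 \geq 4$. Since $H$ is induced in $G$ and the only edges of $H$ within $V(H') = V(P_1) \cup \{a_3, b_3\}$ are precisely the edges of $H'$ (using the vertex-disjointness of the three paths of the prism and the definition of the two triangles of $H$), $H'$ is an induced hole of $G$.

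The key observation is that the two ``missing'' prism vertices $a_2$ and $b_2$ attach to $V(H')$ via two edges of $H'$ that are only distance $2$ apart around the cycle. Indeed, $V(P_2)$ is disjoint from $V(H')$, so $N(a_2) \cap V(H') = \{a_1, a_3\}$, which is the edge $v_{|P_1|+2}v_0$ of $H'$ under the indexing $v_0 = a_1$, $v_j = $ $j$-th internal vertex of $P_1$ for $1 \leq j \leq |P_1|-1$, $v_{|P_1|} = b_1$, $v_{|P_1|+1} = b_3$, $v_{|P_1|+2} = a_3$. Likewise $N(b_2) \cap V(H') = \{b_1, b_3\} = \{v_{|P_1|}, v_{|P_1|+1}\}$. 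In the notation of Lemma \ref{L2}, we therefore have $a_2 \in R_i$ with $i = |P_1|+2$ and $b_2 \in R_{i-2}$. But Lemma \ref{L2}.(7) forbids exactly this combination: $R_i \neq \emptyset$ forces $R_{i-2} = \emptyset$. This is the desired contradiction.

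I do not anticipate any substantive obstacle. The whole argument is a direct application of the ``hole attachment'' classification already developed in Lemmas \ref{L1} and \ref{L2}; identifying the right induced hole ($H'$ using $P_1$ and $P_3$, rather than $P_1$ and $P_2$) and recognizing the two residual vertices $a_2, b_2$ as witnesses for forbidden pairs of $R$-sets is the only non-routine step. In particular, no new case analysis on the parities of $|P_1|, |P_2|$ is required, and no appeal to the birdcage/cap/sun/flower lemmas is needed for this forbidden configuration.
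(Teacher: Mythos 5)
Your proof is correct. You extract the hole $H'$ from $P_1 \cup P_3$, observe that $a_2$ and $b_2$ land in $R_i$ and $R_{i-2}$ (in the notation of Lemma~\ref{L2}) for $i = |P_1|+2$, and invoke Lemma~\ref{L2}.(7). The bookkeeping is right: $N(a_2)\cap V(H') = \{a_1,a_3\}$ and $N(b_2)\cap V(H') = \{b_1,b_3\}$ are indeed consecutive edges of $H'$ exactly two apart, since the vertex-disjointness of the three paths of the prism guarantees no extraneous adjacencies into $V(H')$. The argument works uniformly for $|P_1| \ge 1$, including the degenerate case $|H'|=4$.

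The paper's own proof is a one-liner: it simply observes that a short prism already contains $F_3$ or $F_4$, which are forbidden by hypothesis. Your route is genuinely different at the level of exposition: instead of pattern-matching the short prism directly against the obstruction list, you first carve out a hole and then read the two leftover triangle vertices as $R$-attachments, feeding them into the attachment classification of Lemma~\ref{L2}. Of course, Lemma~\ref{L2}.(7) is itself proved by exhibiting $F_2$, $F_3$, $F_4$, or a gem, so the underlying forbidden structures are the same; what you gain is that the short-prism case becomes a mechanical application of machinery already built, and what you lose is brevity. Both are sound; the paper's direct identification is the more economical of the two here.
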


\begin{proof}
Otherwise, $G$ contains $F_3$ or $F_4$, a contradiction.
\end{proof}

From now on, we know that that every prism in $G$ is not short, or in
another word, its three paths are of length at least two.

\begin{lemma} \label{L6} $G$ does not contain an imparity prism.
\end{lemma}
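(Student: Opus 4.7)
The plan is to argue by contradiction: assume $G$ contains an imparity prism $\Pi$ with paths $P_1, P_2, P_3$ and triangles $T_a = \{a_1, a_2, a_3\}$, $T_b = \{b_1, b_2, b_3\}$. By Lemma~\ref{L5} each $P_i$ has length $\geq 2$, and, after relabeling, $P_1, P_2$ share a parity while $P_3$ has the opposite parity, so $\Pi$ is an induced copy of the obstruction $F_7$. If $G = \Pi$, then $G$ is itself an obstruction, contradicting our standing assumption; hence $V(G) \supsetneq V(\Pi)$.

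The first step is to establish that each $P_i$ is a flat path of $G$. Suppose some internal vertex $p$ of $P_i$ has a neighbour $u \notin V(\Pi)$. Applying Lemmas~\ref{L3} and~\ref{L4} (and the remark following Lemma~\ref{L4}) to each of the three holes of $\Pi$, the neighbourhood of $u$ on each such hole is either empty or an edge. Combined with claw-freeness and Lemma~\ref{L2}, the possible attachments of $u$ are very restricted: a case-by-case enumeration shows that each such attachment either produces an induced subgraph isomorphic to one of the obstructions $F_8$, $F_9$, or $F_{10}$ (obtained by inserting an extra triangle along a path of the prism), or creates an already forbidden structure --- a cap (Lemma~\ref{Lm:cap}), a birdcage (Lemmas~\ref{Lm:even-birdcage}, \ref{Lm:odd-birdcage}), a flower (Lemma~\ref{Lm:flower}) or a sun (Lemma~\ref{Lm:sun}) --- a contradiction in every case.

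With flatness in hand, Lemma~\ref{l:endFP} gives $\max V(P_i) \in \{a_i, b_i\}$ for every $i$. Up to swapping the labels of $T_a$ and $T_b$, I may assume $\max V(\Pi) \in T_b$; write $b_k := \max T_b$, so $b_k = \max V(\Pi)$. I then apply Lemma~\ref{Lm6} with clique $S = T_b$, top vertex $b_k$, and degree-$2$ witnesses $b_1', b_2', b_3'$ (the internal neighbour of $b_i$ inside $P_i$). The hypotheses hold: by flatness each $b_i'$ has degree $2$, and since $T_b$ is a triangle, $N_G(b_i) \setminus \{b_i', b_k\} = \{b_j\} \subseteq N(b_k)$; also $b_k' < b_k$ by Lemma~\ref{l:endFP}. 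The conclusion gives $\pi(b_1') = \pi(b_2') = \pi(b_3') \in \{1,2\}$. By Lemma~\ref{C3}, colours alternate between $1$ and $2$ along each flat $P_i$, so $\pi(a_i') = \pi(b_i')$ when $\ell_i$ is even and $\pi(a_i') = 3 - \pi(b_i')$ when $\ell_i$ is odd. Since $P_1, P_2$ share a parity and $P_3$ has the opposite one, this forces $\pi(a_1') = \pi(a_2') \neq \pi(a_3')$.

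Finally, I would apply Lemma~\ref{Lm6} symmetrically at the clique $T_a$ (with a short case analysis on which $a_r = \max T_a$, using Lemma~\ref{l:Pwo} to control the source if it lies in some $P_i$) to obtain $\pi(a_1') = \pi(a_2') = \pi(a_3')$, the desired contradiction. The main obstacle is the flatness step, whose case analysis of external attachments is delicate and draws on every forbidden structure established earlier in the section; I expect this step to dominate the argument. A secondary subtlety is verifying the hypotheses of Lemma~\ref{Lm6} at $T_a$ when $\max T_a$ is not the overall maximum of $\Pi$ (which could yield $a_r' > a_r$), handled by Lemma~\ref{l:Pwo} and a dedicated subcase on the location of the source~$v_1$.
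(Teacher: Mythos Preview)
Your proposal massively overshoots. The paper's proof is one line: an imparity prism whose three paths all have length $\geq 2$ (guaranteed by Lemma~\ref{L5}) \emph{is} an $F_7$, so $G$ contains $F_7$, contradiction.

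You actually reach this point yourself (``$\Pi$ is an induced copy of the obstruction $F_7$''), but then you write ``hence $V(G)\supsetneq V(\Pi)$'' and launch into a long argument about flatness and Lemma~\ref{Lm6}. That inference is the gap. Recall the standing hypothesis of Sections~\ref{S:3} and~\ref{S:4}: $G$ is a \emph{minimally} bad claw-free graph that is not an obstruction. Minimality means every proper connected induced subgraph of $G$ is good. Since $F_7$ is an obstruction, it is bad; therefore $G$ cannot contain $F_7$ as a proper induced subgraph either. So ``$G$ contains $F_7$'' is already the full contradiction, whether or not the containment is proper. This is exactly the pattern used throughout the paper whenever a proof ends with ``then $G$ contains $F_i$, a contradiction''.

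Everything after your first paragraph is therefore unnecessary. (It is also shaky on its own terms: the flatness ``case-by-case enumeration'' is only a sketch, and the application of Lemma~\ref{Lm6} at $T_a$ genuinely fails when $a_r' > a_r$, a case you flag but do not resolve.) The fix is simply to stop after observing $\Pi\cong F_7$.
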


\begin{proof}
Otherwise, it contains $F_7$, a contradiction.
\end{proof}

A graph $H$ is a \textit{prism system} if:
\begin{itemize}
	\item $V(H)=\cup_{i=1}^k V(P_i)$ for some $k\geq 3$.
	\item $\forall i\in\{1,\ldots,k\}$, $P_i$ is a path of length $\geq 2$ with two ends $a_i$, $b_i$. All $P_i$'s are disjoint.
	\item $S_a=\{a_1,\ldots,a_k\}$ and $S_b=\{b_1,\ldots,b_k\}$ are two cliques.
	\item These are the only edges in $H$.
\end{itemize}
Note that if $k=3$ then $H$ is simply a prism. A prism system is \textit{even} (\textit{odd}) if the lengths of all path $P_i$'s are even (odd). 

\begin{lemma} \label{L7}
$G$ does not contain an even prism.
\end{lemma}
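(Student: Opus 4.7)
Suppose for contradiction that $G$ contains an even prism $H$ with paths $P_1, P_2, P_3$, each of even length at least $2$ by Lemma~\ref{L5}, joining triangles $T_a = \{a_1, a_2, a_3\}$ and $T_b = \{b_1, b_2, b_3\}$. For distinct $i, j \in \{1,2,3\}$, let $H_{ij}$ denote the hole $P_i \cup P_j$ together with the chords $a_ia_j$ and $b_ib_j$. The plan is to enlarge $H$ to a full even birdcage in $G$, contradicting Lemma~\ref{Lm:even-birdcage}.

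First I would show that every $P_i$ is flat in $G$. If some $v \in V(G) \setminus V(H)$ were adjacent to an internal vertex $w$ of $P_i$, then Lemma~\ref{L1} applied to $H_{ij}$ (using the exclusions from Lemmas~\ref{L3} and~\ref{L4}) would force the neighbors of $v$ on $H_{ij}$ to be exactly two consecutive vertices $w, w'$ of $P_i$, and the same argument on $H_{ik}$ would rule out any other neighbor of $v$ on $V(H)$. Subdividing $P_i$ by inserting $v$ between $w$ and $w'$ would then produce a new prism in $G$ with paths of lengths $|P_i|+1, |P_j|, |P_k|$, an imparity prism contradicting Lemma~\ref{L6}. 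Now set $C_a = \{v \in V(G) \setminus V(H) : v \text{ is adjacent to every vertex of } T_a\}$, $K_a = C_a \cup T_a$, and define $C_b, K_b$ symmetrically. Further hole analysis would establish: (a) every external vertex with a neighbor on $V(H)$ lies in $C_a \cup C_b$, because flatness plus the edge-form of its neighborhood on each $H_{ij}$ restricts it to $\{a_i,a_j\}$, $\{b_i,b_j\}$, or the empty set, and combining the three holes forces it to be adjacent to all three $a$'s or all three $b$'s; (b) $C_a$ and $C_b$ are cliques, so $K_a$ and $K_b$ are cliques, for otherwise $\{a_1; v_1, v_2, a_1^-\}$ would be a claw at $a_1$, where $a_1^-$ is the neighbor of $a_1$ in $P_1$; (c) there is no edge between $C_a$ and $C_b$, since an edge $vw$ with $v \in C_a$, $w \in C_b$ would give a short prism on triangles $\{v,a_1,a_2\}$ and $\{w,b_1,b_2\}$ with paths $vw$, $a_1 P_1 b_1$, $a_2 P_2 b_2$ of lengths $1, |P_1|, |P_2|$, contradicting Lemma~\ref{L5}.

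Finally, I would show that whenever $C_a \neq \emptyset$ the clique $T_a$ is a cutset of $G$ (and symmetrically for $T_b$). By (a)--(c), any external vertex reachable from $C_a$ in $G \setminus T_a$ has no neighbor on $V(H) \setminus T_a$: such a neighbor would have to lie in $T_b$ by flatness, forcing the vertex into $C_b$ and contradicting (c). Hence the component of $G \setminus T_a$ containing $C_a$ is disjoint from $V(H) \setminus T_a$, so $T_a$ separates them. Putting everything together, the subgraph induced on $V(H) \cup C_a \cup C_b$ is an even birdcage in $G$, contradicting Lemma~\ref{Lm:even-birdcage}. The main obstacle is the hole-case analysis behind (a): one must carefully rule out every ``off-diagonal'' attachment of an external vertex (for instance, one adjacent to $a_i$ but to neither $a_j$ nor any $b_\ell$, or adjacent to both $a_i$ and $b_j$ with $i \neq j$), and this is precisely where repeated use of Lemma~\ref{L1} combined with the flatness step and the claw-free hypothesis is essential.
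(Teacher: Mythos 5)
Your high-level plan --- establish flatness, identify $C_a$, $C_b$, and conclude that $V(H)\cup C_a\cup C_b$ is an even birdcage --- matches the paper's overall strategy, but two of your key steps have real gaps.

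The flatness argument is incorrect. You say that if an external vertex $v$ is adjacent to two consecutive internal vertices $w,w'$ of $P_i$, then ``subdividing $P_i$ by inserting $v$ between $w$ and $w'$'' yields a prism with paths of lengths $|P_i|+1$, $|P_j|$, $|P_k|$, an imparity prism contradicting Lemma~\ref{L6}. But $v$ is adjacent to both $w$ and $w'$, and the edge $ww'$ is still in $G$, so the walk $a_iP_iw,v,w'P_ib_i$ has the chord $ww'$: the vertex set $V(H)\cup\{v\}$ does \emph{not} induce a prism, and Lemma~\ref{L6} (which forbids only \emph{induced} imparity prisms) does not apply. The configuration is exactly the obstruction $F_8$ (or, when one of the resulting flat subpaths of $P_i$ is too short, a smaller obstruction such as $F_2$ or $F_3$), and that is how the paper rules it out.

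The cutset argument is also incomplete. Your (c) shows only that there is no \emph{edge} between $C_a$ and $C_b$, and from this you try to deduce that the component of $G\setminus T_a$ containing $C_a$ avoids $V(H)\setminus T_a$. The failure mode you don't address is a longer path from $C_a$ to $C_b$ through external vertices having no neighbor on $V(H)$: such a vertex of $C_b$ would be reachable from $C_a$ in $G\setminus T_a$ without any $C_a$--$C_b$ edge, so (c) gives no contradiction. This is precisely why the paper does not argue about a single even prism. It introduces an even \emph{prism system} (allowing $k\geq 3$ parallel even paths between two cliques) and picks one with $k$ maximum. Then an external path from $C_a$ to $C_b$ of odd length produces $F_7$, and one of even length can be adjoined to the system, contradicting the choice of $k$. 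Without this maximality device, the clique-cutset claim --- and hence the reduction to an even birdcage --- does not go through.
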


\begin{proof}
\setcounter{claim}{0}
Suppose $G$ contains an even prism. Then there exists an even prism system in $G$ as in the description, choose such a prism system $H$ with maximum value of $k$. Let $C_a=\{v\in V(G)\setminus V(H)|N(v)\cap S_a\neq \emptyset\}$ and $C_b=\{v\in V(G)\setminus V(H)|N(v)\cap S_b\neq \emptyset\}$.    

\begin{claim} \label{L7:c1}
All paths $P_i$'s are flat.
\end{claim}

If there exists a vertex $v\in V(G)\setminus V(H)$ which has some
neighbors $\{a,b\}$ in the internal of some path $P_i$, then $G$
contains $F_2$, $F_3$ or $F_8$, a contradiction. 
\epc

\begin{claim} \label{L7:c2}
$C_a$ is a clique complete to $S_a$ and $C_b$ is a clique complete to $S_b$.
\end{claim}

Follows directly from Lemma \ref{L2}.
\epc

\begin{claim} \label{L7:c3}
We have the followings:
\begin{enumerate}
	\item If $C_a\neq \emptyset$, $S_a$ is a clique cutset of $G$.
	\item If $C_b\neq \emptyset$, $S_b$ is a clique cutset of $G$.
\end{enumerate}
\end{claim}

We prove only the first statement, the second is similar. Suppose that
there exists a path $P$ from some vertex in $C_a$ to some
vertex in $C_b$. The length of $P$ is even otherwise $G$ contains
$F_7$, then $H\cup P$ is a bigger even prism system, a contradiction to the choice of $H$. 
\epc

By Claims \ref{L7:c1}, \ref{L7:c2} and \ref{L7:c3}, $H\cup C_a\cup C_b$ forms an even birdcage, contradicting Lemma~\ref{Lm:even-birdcage}. 

\end{proof}

A \emph{bracelet}  (see Figure \ref{F:2}) has $6$ paths of
length $\geq 2$: two paths in the sides are of even length; the other
four paths are of odd length. 

A graph $H$ is a \emph{bracelet system} if:
\begin{itemize}
	\item $V(H)=\cup_{i=1}^k V(P_i)\cup_{i=1}^m V(Q_i)\cup V(P_{ac})\cup V(P_{bd})$ for some $k,m\geq 2$.
	\item For $i\in \{1,\ldots,k\}$, $P_i$ is a path of odd length $\geq 3$ with two ends $a_i$ and $b_i$.
	\item For $i\in \{1,\ldots,m\}$, $Q_i$ is a path of odd length $\geq 3$ with two ends $c_i$ and $d_i$.
	\item $P_{ac}$ is a path of even length $\geq 2$ with two ends $a_0$ and $c_0$.
	\item $P_{bd}$ is a path of even length $\geq 2$ with two ends $b_0$ and $d_0$.
	\item All path $P_i$'s, $Q_i$'s, $P_{ac}$, $P_{bd}$ are disjoint.
	\item $S_a=\{a_0,\ldots,a_k\}$ and $S_b=\{b_0,\ldots,b_k\}$ are cliques.
	\item $S_c=\{c_0,\ldots,c_m\}$ and $S_d=\{d_0,\ldots,d_m\}$ are cliques.
	\item These are the only edges in $H$.
\end{itemize}
Note that if $k=m=2$, then $H$ is simply a bracelet.

\begin{lemma} \label{L8}
$G$ does not contain an odd prism.
\end{lemma}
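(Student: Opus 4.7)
The plan is to follow the proof of Lemma \ref{L7} closely, with odd paths replacing even ones and Lemma \ref{Lm:odd-birdcage} replacing Lemma \ref{Lm:even-birdcage}. Suppose that $G$ contains an odd prism; then $G$ contains an odd prism system, so I would choose such a system $H = \bigcup_{i=1}^{k} V(P_i)$ with $k$ as large as possible. Let $C_a = \{v \in V(G)\setminus V(H) : N(v)\cap S_a \neq \emptyset\}$ and define $C_b$ analogously.

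The first step is to show that every $P_i$ is flat. If an internal vertex of some $P_i$ had an external neighbour $v$, then, by claw-freeness together with Lemmas \ref{L3} and \ref{L4} (the hole-neighbourhood rule), $v$ must be adjacent to exactly two consecutive vertices $c,d$ of $P_i$ and to no other vertex of $H$. The two subpaths $a_iP_ic$ and $dP_ib_i$ have lengths summing to $|P_i|-1$, which is even, so they share the same parity; in each of the two resulting subcases the configuration obtained by combining these subpaths with $v,c,d$ and with two of the other paths of $H$ contains one of the obstructions $F_3$, $F_9$, or $F_{10}$, a contradiction. Once flatness is established, Lemma \ref{L2} yields that $K_a := C_a \cup S_a$ and $K_b := C_b \cup S_b$ are cliques.

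Next I would verify the clique-cutset conditions required by the odd-birdcage definition. If $C_a \neq \emptyset$ and $S_a$ is not a clique cutset, there is a shortest path $Q$ in $G\setminus S_a$ from some $v\in C_a$ to $V(H)\setminus S_a$. Since the interiors of the flat $P_i$'s admit no external neighbour, $Q$ must end at a vertex of $S_b$, and its penultimate vertex $w$ then lies in $C_b$, so one may view $Q$ as a path from $v$ to $w$ with interior disjoint from $V(H)$. Combining $Q$ with two of the $P_i$'s would produce an imparity prism if $|Q|$ were even, contradicting Lemma \ref{L6}, and would produce a short prism if $|Q|=1$, contradicting Lemma \ref{L5}; hence $|Q|$ is odd and $\geq 3$. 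But then adjoining $v$ to $S_a$, $w$ to $S_b$ and the interior of $Q$ to $H$ produces an odd prism system with $k+1$ paths, contradicting the maximality of $k$. The symmetric argument applies to $S_b$, so $H \cup C_a \cup C_b$ fulfils the definition of an odd birdcage with $m=0$, contradicting Lemma \ref{Lm:odd-birdcage}.

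The main obstacle I anticipate is the flatness step, which requires pinning the external neighbour $v$ down to a single edge $cd$ of $P_i$ and then exhibiting the appropriate obstruction in each parity subcase. A secondary subtlety is verifying that the enlarged odd prism system arising from $H \cup Q$ is genuinely of the required form: one must confirm that $Q$ is induced (by taking it shortest), that its interior carries no extra edges to $V(H)$ (likewise by shortest-path choice), and that $v \notin C_b$ and $w \notin C_a$, since otherwise a length-$2$ path through $v$ or $w$ would combine with two $P_i$'s to form an imparity prism and violate Lemma \ref{L6}.
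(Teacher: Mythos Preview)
Your flatness step is the crux, and it does not go through. Suppose $v\in V(G)\setminus V(H)$ is adjacent to two consecutive internal vertices $c,d$ of some $P_i$. As you note, the subpaths $a_iP_ic$ and $dP_ib_i$ have the same parity. When both are odd you do get $F_9$ (that obstruction is exactly an odd prism with one path split into two odd flat subpaths by an attached vertex). But when both are even and of length at least~$2$, no obstruction appears: $F_9$ requires the two subpaths to be odd, and $F_{10}$ requires attachments on \emph{two different} paths of the prism, not one. This ``both even'' case is genuinely possible, and the paper's proof does not rule it out; instead it is absorbed into the definition of an odd birdcage via the parameters $m\geq 0$ and the cliques $K_1,\dots,K_m$ attached along $P_1$. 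Your attempt to force $m=0$ therefore fails.

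Once you allow such attachments, two further complications arise that your outline does not address. First, one must argue (as the paper does) that at most one of the $P_i$ can carry external neighbours, since two attachments on distinct paths yield $F_{10}$; this pins everything to a single path $P_1$. Second, the clique-cutset step becomes more delicate: besides paths from $C_a$ to $C_b$, one must also exclude paths between distinct $K_i$'s and between some $K_i$ and $C_a\cup C_b$. A path between two $K_i$'s produces a bracelet, and this cannot be excluded by any obstruction; the paper handles it by splitting into a separate Case~1 where a maximal bracelet system is shown to yield a flower, contradicting Lemma~\ref{Lm:flower}. Your proposal has no counterpart to this case.
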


\begin{proof}
\setcounter{claim}{0}
  Suppose $G$ contains an odd prism. We consider the following cases:

\smallskip
  {\noindent \bf Case 1:} $G$ contains a bracelet.
  Then there exists a bracelet system in $G$ as in the description,
  choose such a system $H$ with maximum value of $k+m$. Let
  $C_a=\{v\in V(G)\setminus V(H)|N(v)\cap S_a\neq \emptyset\}$,
  $C_b=\{v\in V(G)\setminus V(H)|N(v)\cap S_b\neq \emptyset\}$,
  $C_c=\{v\in V(G)\setminus V(H)|N(v)\cap S_c\neq \emptyset\}$ and
  $C_d=\{v\in V(G)\setminus V(H)|N(v)\cap S_d\neq \emptyset\}$.

  \begin{claim} \label{L8:c1} All the paths $P_i$'s, $Q_i$'s,
    $P_{ac}$, $P_{bd}$ are flat.
  \end{claim}

  Suppose there is some vertex $v\in V(G)\setminus V(H)$ has some
  neighbor in the internal of one of these paths. If $v$ has some
  neighbor on $P_{ac}$ or $P_{bd}$, then $G$ contains $F_9$,
  a contradiction. If $v$ has some neighbor on some $P_i$ or $Q_i$, then
  $G$ contains $F_2$, $F_3$, $F_9$ or $F_{10}$, a contradiction.  \epc

  \begin{claim} \label{L8:c2} $C_a$ is a clique complete to $S_a$.
  \end{claim}

  Follows directly from Lemma \ref{L2}.  \epc

  We have a similar statement for $C_b$, $C_c$ and $C_d$.
  
\begin{claim} \label{L8:c3} If $C_a\neq \emptyset$, $S_a$ is a
    clique cutset in $G$.
  \end{claim}

  Otherwise, if there exists some  path $P$ from a vertex in
  $C_a$ to some vertex in $C_c$ or $C_d$, then $G$ contains an even
  prism, contradicting Lemma \ref{L7}. If there exists some induced
  path $P$ from a vertex in $C_a$ to some vertex in $C_b$, then $P$ is
  of odd length and therefore $H\cup P$ is a bigger bracelet system,
  a contradiction to the choice of $H$.  \epc

  We also have similar statement for $S_b$, $S_c$ and $S_d$. By Claims
  \ref{L8:c1}, \ref{L8:c2} and \ref{L8:c3},
  $H\cup C_a\cup C_b\cup C_c\cup C_d$ is a flower in $G$,
  contradicting Lemma \ref{Lm:flower}.

\smallskip
  {\noindent\bf Case 2:} $G$ does not contain bracelet. There exists
  an odd prism system in $G$ as in the description, choose such a
  prism system $H$ with maximum value of $k$. Let
  $C_a=\{v\in V(G)\setminus V(H)|N(v)\cap S_a\neq \emptyset\}$ and
  $C_b=\{v\in V(G)\setminus V(H)|N(v)\cap S_b\neq \emptyset\}$.
  \begin{claim} \label{L8:c4} Let $v\in V(G)\setminus V(H)$ be a
    vertex has some neighbor $\{a,b\}$ in the internal of some path
    $P_i$ ($a$ is closer to $a_i$ than $b$ in $P_i$). Then two paths
    $a_iP_ia$ and $b_iP_ib$ are of even length $\geq 2$.
  \end{claim}

  Otherwise $G$ contains $F_2$, $F_3$ or $F_9$, a contradiction.  \epc

  \begin{claim} \label{L8:c5} If $v\in V(G)\setminus V(H)$ has some
    neighbors in $P_i$, then all path $P_j$'s are flat for any
    $j\neq i$.
  \end{claim}

  Otherwise $G$ contains $F_{10}$, a contradiction.  \epc W.l.o.g,
  suppose that $P_1$ is the only path among $P_i$'s which might not be
  flat. For some $m\geq0$, let $\{c_1,d_1\},\ldots,\{c_m,d_m\}$ be all
  the possible positions in $P_1$ to which a vertex
  $v\in V(G)\setminus V(H)$ can be adjacent ($c_id_i$ is an edge; all
  vertices are listed in order from $a_1$ to $b_1$). For
  $i\in\{1,\ldots,m\}$, let
  $K_i=\{v\in V(G)\setminus V(H)|N(v)\cap \{c_i,d_i\}\neq
  \emptyset\}$.

  \begin{claim} \label{L8:c6} $C_a$ is a clique complete to $S_a$;
    $C_b$ is a clique complete to $S_b$ and $K_i$ is a clique complete
    to $\{c_i,d_i\}$ for $i\in\{1,\ldots,m\}$.
  \end{claim}

  Follows directly from Lemma \ref{L2}.  \epc

  \begin{claim} \label{L8:c7} $a_1P_1c_1$ and $d_mP_1b_1$ are flat
    paths of even length $\geq 2$; $d_iP_1c_{i+1}$ is a flat path of
    odd length $\geq 3$ for $i\in\{1,\ldots,m-1\}$.
  \end{claim}

  Follows from Claim \ref{L8:c4}.  \epc

  \begin{claim} \label{L8:c8} If $C_a$ and $C_b\neq \emptyset$, $S_a$,
    $S_b$, $\{c_i,d_i\}$ are clique cutsets in $G$ for
    $i\in\{1,\ldots,m\}$.
  \end{claim}

  Otherwise, if there is a path from some vertex in $K_i$ to
  some vertex in $K_j$ for some $i\neq j$, then $G$ contains a
  bracelet, a contradiction. If there is a path from some
  vertex in $C_a$ to some vertex in $K_i$ for some $i$, then $G$
  contains $F_7$ or an even prim, a contradiction. If there is a path $P$ from a vertex in $C_a$ to some vertex in $C_b$,
  then $H\cup P$ is a bigger odd prism system, a contradiction to the
  choice of $H$.  \epc

  By Claims \ref{L8:c5}, \ref{L8:c6}, \ref{L8:c7} and \ref{L8:c8},
  $H\cup C_a\cup C_b\cup_{i=1}^m K_i$ forms an odd birdcage in $G$,
  contradicting Lemma \ref{Lm:odd-birdcage}.
\end{proof}

By Lemmas \ref{L5}, \ref{L6}, \ref{L7} and \ref{L8}, $G$ is
prism-free.

\begin{lemma} \label{L9} 
  $G$ does not contain a hole.
\end{lemma}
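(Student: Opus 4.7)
The plan is to derive a contradiction by showing that any hole in $G$ would yield either a sun (forbidden by Lemma~\ref{Lm:sun}) or a prism with three paths of length at least $2$ (forbidden by Lemmas~\ref{L5},~\ref{L6},~\ref{L7},~\ref{L8}). Suppose $G$ contains a hole $H=v_0\ldots v_{n-1}$. By Lemma~\ref{L1} together with the absence of $4$-wheels (Lemma~\ref{L3}) and boats (Lemma~\ref{L4}), every vertex $u\in V(G)\setminus V(H)$ with a neighbor in $H$ has exactly two neighbors on $H$, and they are consecutive. For each cyclic index $i$, set $R_i=\{u\in V(G)\setminus V(H):N(u)\cap V(H)=\{v_i,v_{i+1}\}\}$. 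By Lemma~\ref{L2}, each $R_i$ is a clique, distinct $R_i$'s are anticomplete, and the nonempty $R_i$'s are pairwise at cyclic distance at least $3$ on $H$.

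I would first dispose of the case where every $R_i$ is empty: then $V(G)=V(H)$ by the connectivity of $G$ (Lemma~\ref{Lm1}), so $G$ is a cycle $C_n$. If $n$ is even, $G$ is bipartite and therefore good by Theorem~\ref{th:bip}; if $n$ is odd, every connected order of $C_n$ gives a greedy colouring with at most $3=\chi(C_n)$ colours since every vertex has degree $2$, and every proper induced subgraph of $C_n$ is a disjoint union of paths (hence bipartite and good), so $G$ itself is good. Either way this contradicts $G$ being minimally bad.

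Otherwise, let $R_{i_1},\ldots,R_{i_t}$ ($t\geq 1$) be the nonempty $R_i$'s listed cyclically and set $H^{\ast}=G[V(H)\cup K_1\cup\cdots\cup K_t]$ with $K_j=R_{i_j}$ and pairs $(a_j,b_j)=(v_{i_j},v_{i_j+1})$. All conditions in the definition of a sun hold straightforwardly: the cliqueness and completeness of $K_j$ to $\{a_j,b_j\}$ follow from Lemma~\ref{L2}; the arcs of $H$ between consecutive pairs are flat paths of length at least $2$ because the cyclic gap between successive $i_j$'s is at least $3$ and no $R_i$ with index strictly between them is nonempty; there are no extra edges inside $H^{\ast}$; and every external vertex with a neighbor in $V(H)$ lies in some $K_j$. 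The only remaining condition is that each $\{a_j,b_j\}$ is a cutset of $G$. If all these cutset conditions hold, then $H^{\ast}$ is a sun in $G$, contradicting Lemma~\ref{Lm:sun}.

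So assume after relabeling that $\{v_{i_1},v_{i_1+1}\}$ is not a cutset. A direct argument rules this out when $t=1$, since the only external attachments to $H$ lie in $R_{i_1}$ and they have no $H$-neighbor outside $\{v_{i_1},v_{i_1+1}\}$. Hence $t\geq 2$, and there is a path in $G\setminus\{v_{i_1},v_{i_1+1}\}$ from some $u_0\in R_{i_1}$ to some vertex of $V(H)\setminus\{v_{i_1},v_{i_1+1}\}$; pick one, $P=u_0u_1\ldots u_p$, of minimum length. Minimality of $P$ combined with Lemma~\ref{L1} forces $u_1,\ldots,u_{p-2}$ to have no neighbor on $H$, forces $u_{p-1}$ to lie in some $R_{i_{j''}}$ with $j''\neq 1$, and makes $P$ an induced path. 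The central construction is then to build a prism in $G$ with triangles $\{v_{i_1},v_{i_1+1},u_0\}$ and $\{v_{i_{j''}},v_{i_{j''}+1},u_{p-1}\}$ and with three connecting paths consisting of the two arcs of $H$ between these triangles and the subpath $u_0u_1\ldots u_{p-1}$; each arc has length at least $2$ by the cyclic-distance assumption, and the third path has length at least $2$ since distinct $R_i$'s are anticomplete. The hard part will be checking that this prism is induced, which requires combining the chordlessness of $H$, Lemma~\ref{L2}, the fact that $u_1,\ldots,u_{p-2}$ have no $H$-neighbor, and the inducedness of $P$. Once that is done, the resulting prism has three paths of length at least $2$, which contradicts Lemmas~\ref{L5},~\ref{L6},~\ref{L7},~\ref{L8} regardless of its parity.
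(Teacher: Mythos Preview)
Your proof is correct and follows essentially the same approach as the paper: show that the attachment cliques $R_i$ together with the hole form a sun (contradicting Lemma~\ref{Lm:sun}), and that the only way this can fail is if some $\{v_{i_j},v_{i_j+1}\}$ is not a cutset, in which case a shortest external path between two distinct $R_i$'s yields a prism with all three paths of length $\geq 2$ (contradicting Lemmas~\ref{L5}--\ref{L8}). You are in fact more careful than the paper on two points: you handle the degenerate case $V(G)=V(H)$ explicitly, and you spell out why the resulting prism is induced and has long paths, whereas the paper simply asserts that a path from $K_i$ to $K_j$ ``so $G$ contains a prism''.
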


\begin{proof}
  \setcounter{claim}{0}
  Suppose $G$ contains a hole $I$. For some $k\geq 0$, let
  $\{a_0,b_0\},\ldots,\{a_k,b_k\}$ be all the possible positions in
  $I$ to which a vertex $v\in V(G)\setminus I$ can be adjacent
  ($a_ib_i$ is an edge; all the vertices are listed in clock-wise
  order). For $i\in\{0,\ldots,k\}$, let
  $K_i=\{v\in V(G)\setminus I|N(v)\cap \{a_i,b_i\}\neq \emptyset\}$.

  \begin{claim} \label{L9:c1}
    For $i\in\{0,\ldots,k\}$, $K_i$ is a clique complete to $\{a_i,b_i\}$.
  \end{claim} 

  Follows directly from Lemma \ref{L2}.
\epc

\begin{claim} \label{L9:c2}
  For $i\in\{0,\ldots,k\}$, $\{a_i,b_i\}$ is a cutset of $G$.
\end{claim}

Otherwise, there is a path from a vertex in $K_i$ to some
vertex in $K_j$, for some $j\neq i$, so $G$ contains a prism,
a contradiction.  \epc

By Claims \ref{L9:c1} and \ref{L9:c2}, $I\cup_{i=0}^k K_i$ forms a sun
in $G$, contradicting Lemma~\ref{Lm:sun}.
\end{proof}

By Lemmas \ref{L9}, $G$ is chordal. And since $G$ is gem-free, $G$ is
a parity graph by Theorem~\ref{th:chgf}. Hence $G$ is a fish-free
parity graph and is therefore good by Theorem~\ref{th:HdW}, a
contradiction.  This proves that every minimally bad claw-free graph is an obstruction. 

To prove Theorem \ref{T1}, we are left to prove that every obstruction is a minimally bad claw-free graph. Suppose that it is not true for some graph $F$ in the list of obstructions. Since $F$ is bad (as we already specify a bad order for every obstruction), $F$ must contain a minimally bad claw-free graph $F'$ as an induced subgraph. Since every minimally bad claw-free graph is an obstruction, $F'$ is also an obstruction. However, it is easy to check that there do not exist two obstructions in our list such that one contains the other as an induced subgraph, contradiction. 

\section{Conclusion}

In this paper, we give the characterization of good claw-free graphs in terms of minimal forbidden induced subgraphs. Note that the arguments in Sections \ref{S:3} and \ref{S:4} can be turned into a polynomial algorithm for recognizing this class, where each structure in Section \ref{S:3} corresponds to a kind of decomposition. A full characterization of good graphs seems hard to achieve, as we observe that the actual structure of minimally bad graphs could be much more complicated. The following question is open: 
 
\noindent \textbf{Open question.} Is the chromatic number of every minimally bad graph $3$?  

We see that this is true for claw-free graphs. The next step would be finding the characterization for good perfect graphs, or some interesting subclasses of perfect graphs.

\bibliographystyle{abbrv}
\bibliography{grundy}

\end{document}